\documentclass[draftcls,onecolumn]{IEEEtran}
\usepackage[T1]{fontenc}
\usepackage[utf8]{inputenc}
\usepackage{amsmath,amssymb,amsthm,mathtools,bm}
\usepackage[shortlabels]{enumitem}
\usepackage{xcolor}
\usepackage{hyperref}
\usepackage{dsfont}
\usepackage[nameinlink,noabbrev]{cleveref}
\usepackage{setspace}

\hypersetup{
    colorlinks=true,
    linkcolor=blue,
    filecolor=magenta,
    urlcolor=blue,
    citecolor=blue,
}

\newtheorem{theorem}{Theorem}

\newtheorem{lemma}{Lemma}
\newtheorem{corollary}{Corollary}
\theoremstyle{definition}
\newtheorem{definition}{Definition}
\newtheorem{remark}{Remark}

\DeclareMathOperator{\Prb}{\mathbb{P}}
\DeclareMathOperator{\E}{\mathbb{E}}

\DeclareMathOperator{\supp}{supp}
\newcommand{\numberthis}{\addtocounter{equation}{1}\tag{\theequation}}

\setlist{nosep}
\title{Classical Commitment over Quantum Channels with Limited  Entanglement Assistance}
\author{R\'{e}mi A. Chou%
\thanks{R. A. Chou is with the Department of Computer Science and Engineering,
The University of Texas at Arlington, Arlington, TX 76019 USA
(e-mail: remi.chou@uta.edu).}
}
\allowdisplaybreaks
\begin{document}
\maketitle

\begin{abstract}
We study classical string commitment over quantum channels with limited
preshared entanglement. For noninteractive protocols, we determine the
commitment capacity of a class of channels with  input dimension
$d$ that, at each use, sample a pair of classical random variables $(F,Z)$,
apply one of the $d^2$ Heisenberg--Weyl operators indexed by $Z$ to the input, and deliver
the transformed quantum system together with $F$ to the receiver. If $E$ is
the available entanglement rate in bits per channel use, then the capacity is
$\min\{H(Z|F),\log_2d+E\}$. This class of channels encompasses quantum erasure and
depolarizing channels, as well as families of Pauli channels.   Additionally, for interactive protocols, we show that the commitment rate cannot exceed $\log_2d+E$ bits per channel use, so that when
$H(Z|F)\geq\log_2d+E$, interactive communication does not increase the capacity. As a consequence, for interactive protocols, we determine the capacity of the quantum erasure channel.
\end{abstract}

\section{Introduction}
\label{sec:introduction}

Commitment is a two-party cryptographic primitive that was implicitly
introduced in the problem of coin flipping by telephone
\cite{blum1983coin}, and can, for instance, be used to construct
zero-knowledge proofs \cite{goldreich1991proofs} or as a subroutine in
secure-computation protocols \cite{canetti2002universally}. It operates in
two phases. In the commit phase, a sender Alice commits to a value while
keeping it hidden from a receiver Bob. In the reveal phase, Alice discloses
the value and provides information allowing Bob to verify it. Security
requires that Bob accept the value revealed by an honest Alice
(Correctness), that Bob learn nothing about this value before the reveal
phase (Hiding), and that a dishonest Alice be unable to reveal a value
different from the one fixed during the commit phase (Binding).

These requirements cannot be achieved simultaneously with
information-theoretic security using noiseless communication alone.
Replacing noiseless classical communication with noiseless quantum
communication does not remove this limitation
\cite{mayers1997unconditionally,lo1998quantum,d2007reexamination}. One
possible approach is to relax information-theoretic security. For example,
computationally secure commitments have been developed for both classical
information \cite{dumais2000perfectly,unruh2016computationally} and quantum
states \cite{gunn2023commitments}, while cheat-sensitive protocols provide
weaker guarantees by making dishonest behavior detectable with nonzero
probability \cite{hardy2004cheat}. Retaining information-theoretic security
is, however, possible by introducing an additional physical resource or
restriction. Examples include communication constraints imposed by special
relativity \cite{kent1999unconditionally,lunghi2013experimental}, bounded or
noisy quantum storage
\cite{damgaard2008bounded,koenig2012unconditional,ng2012experimental,wu2024string},
or a noisy communication channel
\cite{crepeau1997efficient,winter2003commitment}.  More specifically, for the latter approach, \cite{winter2003commitment} introduced the commitment capacity of a channel,
namely the largest asymptotic number of committed bits per channel use, and
determined the capacity of finite nonredundant discrete memoryless channels.
Subsequent works considered efficient constructions
\cite{imai2006efficient}, Gaussian channels
\cite{nascimento2008commitment}, channels whose noise can be partly
controlled by a dishonest party
\cite{crepeau2020commitment,budkuley2022reverse,budkuley2023possibility,wu2024string},
retractable commitments \cite{chou2023retractable}, compound channels
\cite{budkuley2022commitment,yadav2022commitment}, wiretapped channels
\cite{yadav2024wiretapped}, and multiple-access channels
\cite{chou2025multiuser}. An extension from classical channels to
classical--quantum (cq) channels is also discussed in \cite{winter2003commitment} and
further developed in \cite{hayashi2023commitment}.

In this paper, we consider commitment to a classical string using independent
uses of a quantum channel and authenticated noiseless classical communication. We first consider noninteractive protocols, i.e., Bob is silent during the commit phase, when  Alice and Bob may initially
share a pure state whose Schmidt rank is limited. In particular, if the logarithm of this
rank is at most $nE+o(n)$ for $n$ channel uses, then $E$ is referred to as the available
entanglement rate in bits per channel use. 
Within this communication and entanglement model, security must hold against
dishonest parties using arbitrary quantum memories and operations.
Specifically, hiding is required in trace distance against every dishonest
receiver strategy. For binding, a dishonest Alice's commit strategy must
already determine the string that she can later reveal successfully. More
precisely, for every dishonest commit strategy, there must exist a measurement
of the complete commit-phase record that produces a value $\widetilde S$
such that, regardless of Alice's subsequent reveal strategy, Bob accepts a
value different from $\widetilde S$ only with vanishing probability. This
requirement is analogous to extractable binding  used in simulation-based and universally
composable commitment, where, immediately after the commit phase, a reference string can be extracted
using a measurement chosen independently of Alice's subsequent reveal
strategy
\cite{canetti2001universally,damgaard2013unconditionally}.

Our main result concerns a class of channels with prime-power input dimension
$d$ that, at each use, sample a pair of classical random variables $(F,Z)$,
apply one of the $d^2$ Heisenberg--Weyl operators indexed by $Z$ to the input, and deliver
the transformed quantum system together with $F$ to the receiver. Under a condition stated in terms of the
Fourier transform of the joint distribution of $(F,Z)$, we show that the
 capacity with entanglement rate $E$ is
$\min\{H(Z|F),\log_2d+E\}$. This class of channels includes the quantum erasure
channel, the depolarizing channel, and families of Pauli channels, including dephasing
and bit-flip channels.

Our achievability proof approach is to reduce the quantum channel to a classical channel
and then apply a variation of the classical commitment construction of
\cite{winter2003commitment}. Specifically, for a block of $\ell$ channel
uses assisted by $c$ maximally entangled qudit pairs, we associate each
classical input label with one of $d^{\ell+c}$ mutually orthogonal joint
states of the $\ell$ transmitted qudits and Bob's $c$ entangled qudits.
After receiving the channel outputs, Bob jointly measures these outputs and
his entangled qudits using the projective measurement defined by this
orthogonal basis. Every sequence of applied Heisenberg--Weyl operators
permutes the basis states, so Bob's measurement returns Alice's input label
shifted by a random value determined by $Z^\ell$. Together with $F^\ell$,
this measurement outcome defines the induced classical channel. A
random-coding argument shows that, as $\ell$ grows while $c/\ell$ remains
asymptotically fixed, the encoding can be chosen so that the uncertainty
about the shift after observing $F^\ell$ approaches
$\min\{H(Z|F),(1+c/\ell)\log_2d\}$ per channel use. Because an entanglement
rate $E$ allows $(c/\ell)\log_2d$ to approach $E$, this gives the achievable
commitment rate $\min\{H(Z|F),\log_2d+E\}$.

A separate difficulty is to prove security against arbitrary quantum
adversaries. Classical commitment results for the induced channels apply only
when the parties use  classical encoding and therefore do
not directly establish security in our quantum-input model. For hiding, we
show that Bob's entire quantum output can be generated from the induced
classical output by a fixed post-processing map. For binding, we establish that  every  quantum
commit and reveal strategy induces the same  statistics as a
dishonest strategy against the underlying classical commitment~ code.

For the converse, we apply the hiding condition to a receiver who stores all
raw channel outputs, which bounds the mutual information between the
committed string and Bob's  commit-phase view. Correctness,
 binding, and Fano's inequality show that the string can be
recovered from this view together with the channel environment. By the chain
rule, the committed length is then bounded by
the additional information about the string obtained when the environment
is added to Bob's view, which we prove to be at most $nH(Z|F)$.  Additionally, 
we formalize the intuition that each $d$-dimensional channel input
contributes at most $\log_2d$ bits to the commitment rate, while preshared
entanglement of rate $E$ contributes at most $E$ additional bits per channel
use, giving the additional rate bound
$\log_2d+E$.

 We also consider fully interactive commit phases, in which the parties may
exchange authenticated classical messages before and after every channel
use, and Alice may choose each new channel input using Bob's preceding
messages. While the noninteractive restriction is essential for the upper bound
$nH(Z|F)$ because, conditioned on $F^n$, all channel uses act on a fixed joint input
state, we show that  the  upper bound
$\log_2d+E$ remains valid under such interaction. For the channels covered by our main result, this bound matches
the noninteractive capacity whenever
$H(Z|F)\geq\log_2d+E$, showing that interaction does not help in this
regime. Additionally, for the quantum erasure channel, we show that interaction does not increase the capacity at any
entanglement rate.

\emph{Comparison with the most closely related work.}
\cite{winter2003commitment} derived the commitment capacity
 for nonredundant discrete memoryless channels and discussed an
extension to cq  channels. Subsequently, \cite{hayashi2023commitment}
  established the commitment capacity of nonredundant cq
channels for noninteractive protocols and for interactive protocols
satisfying an invertibility condition. In those models, the channels have a  classical
input alphabet:
the sender selects a classical symbol $x$, which the channel
maps to a quantum state $\rho_x$, thus, the inputs of any sender strategy are represented by a classical
sequence $x^n$. By contrast, the channels considered in this work have quantum
inputs, and a dishonest sender may transmit arbitrary states, including states entangled across channel uses and
with a private quantum memory, so that inputs need not correspond to any classical input
sequence $x^n$. The achievability and converse proofs of
\cite{winter2003commitment,hayashi2023commitment}, 
therefore do not cover arbitrary quantum-input
strategies, and cannot be applied to our model. 

The remainder of the paper is organized as follows. Section
\ref{sec:notation} introduces the notation and terminology, Section~\ref{sec:problem_statement} formalizes the  problem statement,
Section \ref{sec:main} provides our main results, whose proofs are deferred to  Sections \ref{sec:achievability} and
\ref{sec:converse}, 
Section \ref{sec:interactive_erasure} discusses extensions to interactive protocols, and
Section \ref{sec:conclusion} provides concluding remarks.

\section{Notation}
\label{sec:notation}
Let  $d=p^r$ be a  
prime power and let $\mathbb F_d$ be the finite field
with $d$ elements. Following \cite{ketkar2006nonbinary}, for the computational basis
$\{|j\rangle:j\in\mathbb F_d\}$ and $z\triangleq (a,b)\in\mathbb F_d^2$, define
$X(a)|j\rangle\triangleq|j+a\rangle$, $Z(b)|j\rangle\triangleq\chi(bj)|j\rangle$, $W_{z}\triangleq X(a)Z(b)$, 
where $\chi:\mathbb F_d\to\mathbb C, t\mapsto \exp\!\left(
\frac{2\pi\mathrm i}{p}
\sum_{j=0}^{r-1} t^{p^j}
\right).$
 Using the terminology  in \cite{wilde2013quantum}, for $a,b\in\mathbb F_d$, $X(a)$ and $Z(b)$ denote the
cyclic-shift and phase operators, respectively, and their products $X(a)Z(b)$
are called Heisenberg--Weyl operators. Moreover, a channel that randomly applies
these operators is called a Pauli qudit channel. 

For
$z=(z_1,\ldots,z_\ell)\in(\mathbb F_d^2)^\ell$, we write
$
W_z\triangleq W_{z_1}\otimes\cdots\otimes W_{z_\ell}.
$ For $u_i=(s_i,t_i)$ and $z_i=(a_i,b_i)$ in $\mathbb F_d^2$, define
$
[u_i,z_i]\triangleq t_i a_i-s_i b_i,
$
and for $u,z\in(\mathbb F_d^2)^\ell$, define
$[u,z]\triangleq\sum_{i=1}^{\ell}[u_i,z_i].$ 
For a subspace $L\subseteq(\mathbb F_d^2)^\ell$, define
$
L^\perp
\triangleq
\{u\in(\mathbb F_d^2)^\ell:[u,v]=0,\ \forall v\in L\}.$ 
Moreover, $L$ is called self-orthogonal if $L\subseteq L^\perp$.  For a subset $\mathcal S$ of an $\mathbb F_d$-vector space,
$\operatorname{span}_{\mathbb F_p}\mathcal S$ denotes the set of all finite
linear combinations of elements of $\mathcal S$ with coefficients in  $\mathbb F_p$. For an $\mathbb F_d$-linear subspace $L$, $\dim_{\mathbb F_d}L$ denotes
its vector-space dimension over $\mathbb F_d$, so that
$|L|=d^{\dim_{\mathbb F_d}L}$.

Finally, following \cite{winter2003commitment}, a finite classical channel
$W:\mathcal X\to\mathcal Y$ is nonredundant if, for every
$x\in\mathcal X$, $W(\cdot|x)
\notin
\operatorname{conv}
\left\{
W(\cdot|x'):
x'\neq x
\right\}.$

\section{Problem Statement}
\label{sec:problem_statement}

\subsection{Channel model}

Let $\mathcal F$ be a finite set, and let $q$ be a probability distribution
on $\mathcal F\times\mathbb F_d^2$. Define the channel
$\mathcal N_q:A\to FB$ by
\begin{equation}
\mathcal N_q(\rho)
=
\sum_{f\in\mathcal F}
\sum_{z\in\mathbb F_d^2}
q(f,z)
|f\rangle\!\langle f|^F
\otimes
W_z\rho W_z^\dagger.
\label{eq:flagged_weyl_channel}
\end{equation}
One isometric extension is
\begin{equation}
V_q|\psi\rangle^A
=
\sum_{f,z}\sqrt{q(f,z)}
|f\rangle^F
W_z|\psi\rangle^B
|f,z\rangle^E,
\label{eq:flagged_weyl_isometry}
\end{equation}
where the environment $E$ is inaccessible to Alice and Bob. 

\subsection{Commitment protocols}

As formalized below, we study commitment to a classical $k$-bit string over $n$ independent uses
of $\mathcal N_q$ when the reveal phase uses authenticated noiseless classical
communication.  

\begin{definition}
An $(n,k,K)$ noninteractive classical string commitment protocol over
$\mathcal N_q$ consists of:

\begin{itemize}
\item A committed message $S\in\{0,1\}^k$, held by Alice.

\item A preshared pure state $\Phi^{E_AE_B}$ of Schmidt rank at most $K$,
where Alice holds $E_A$ and Bob holds $E_B$.

\item A noninteractive commit phase. Alice prepares a joint state on
$A^nR_A$, possibly by acting jointly on $S$ and $E_A$, where $A^n$ denotes
the channel-input systems and $R_A$ denotes the private quantum register
that Alice retains. She sends $A^n$ through $\mathcal N_q^{\otimes n}$ and
receives no information from Bob during this phase. Bob may perform arbitrary operations on $F^nB^nE_B$ and
private ancillas. At the end of the commit phase, Alice holds $R_A$ and Bob holds a classical
register $C_{\mathrm{com}}$, containing all classical information that he
has generated, and a quantum register $B_{\mathrm{aux}}$, containing all
quantum information that he retains.  

\item A reveal phase. For each $s\in\{0,1\}^k$, Alice has an honest reveal
map
$\mathsf{Rev}_s:R_A\to D_{\mathrm{com}}R_A'$, where
$D_{\mathrm{com}}$ is a classical decommitment sent to Bob. Bob applies a
classical verification map
$\mathsf{Ver}:C_{\mathrm{com}}D_{\mathrm{com}}\to\widehat S J$, where
$\widehat S\in\{0,1\}^k$ and $J\in\{0,1\}$. If $J=0$, Bob accepts and
outputs $\widehat S$, whereas if $J=1$, Bob rejects.
\end{itemize}
\end{definition}

For an honest commitment to $s\in\{0,1\}^k$, let
$\Theta_s^{C_{\mathrm{com}}B_{\mathrm{aux}}R_AE^n}$ denote the global
state at the end of the commit phase. Bob's accessible post-commit state is
\begin{equation*}
\theta_s^{C_{\mathrm{com}}B_{\mathrm{aux}}}
\triangleq
\operatorname{Tr}_{R_AE^n}
\left[
\Theta_s^{C_{\mathrm{com}}B_{\mathrm{aux}}R_AE^n}
\right].
\end{equation*}
For $\hat s\in\{0,1\}^k$ and $j\in\{0,1\}$, the honest reveal and
verification outcomes satisfy
\begin{equation*}
\Prb[J=j,\widehat S=\hat s\mid S=s]
=
\operatorname{Tr}
\left[
\left(
|\hat s\rangle\!\langle\hat s|^{\widehat S}
\otimes
|j\rangle\!\langle j|^J
\right)
\Omega_s^{\widehat S J}
\right],
\end{equation*}
where $\Omega_s^{\widehat S J}$ is obtained by applying the honest reveal
map and verifier to $\Theta_s$ and tracing out all other systems.

\begin{definition}
Fix $n,k,K\in\mathbb N$ and $\alpha,\beta,\gamma\in[0,1]$. An $(n,k,K)$
protocol is an $(n,k,K,\alpha,\beta,\gamma)$ noninteractive classical string
commitment protocol if it satisfies the following conditions.

\medskip
\noindent\emph{1. Correctness.} For every $s\in\{0,1\}^k$, when Alice
honestly commits to $s$ and both parties are honest,
$$
\Prb[J=0,\widehat S=s\mid S=s]\geq1-\alpha.
$$

\medskip
\noindent\emph{2. Hiding.}
For every dishonest Bob commit strategy
$\mathsf B_{\mathrm{com}}^\star$, there exists a state
$\tau_{\mathsf B_{\mathrm{com}}^\star}^{C_{\mathrm{com}}B_{\mathrm{aux}}}$,
independent of $s$, such that for every $s\in\{0,1\}^k$,
\begin{equation*}
\left\|
\theta_{s,\mathsf B_{\mathrm{com}}^\star}^{C_{\mathrm{com}}B_{\mathrm{aux}}}
-
\tau_{\mathsf B_{\mathrm{com}}^\star}^{C_{\mathrm{com}}B_{\mathrm{aux}}}
\right\|_1
\leq\beta.
\end{equation*}

\medskip
\noindent\emph{3. Binding.}
Consider any dishonest Alice commit strategy while Bob follows his honest
commit strategy, and let
$\rho^{C_{\mathrm{com}}B_{\mathrm{aux}}E^nR_A}$ be the state at the end of
the commit phase. For this commit strategy and each possible value $c$ of the classical
register $C_{\mathrm{com}}$, there exists a POVM  $
\mathsf{Ext}_c
=
\left\{
\Lambda_{\widetilde s,c}^{B_{\mathrm{aux}}E^n}
\right\}_{\widetilde s\in\{0,1\}^k}$,
such that, for every dishonest reveal strategy
$\mathsf{Rev}^\star:R_A\to D_{\mathrm{com}}R_A'$, we have
\begin{equation}
\Prb\left[
J=0,\;\widehat S\neq\widetilde S
\right]
\leq\gamma,
\label{eq:extractable_binding}
\end{equation}
where the value $C_{\mathrm{com}}=c$ selects the POVM
$\mathsf{Ext}_c$, which is applied  to $B_{\mathrm{aux}}E^n$, and
$\widetilde S$ denotes its outcome.  

\end{definition}

\begin{remark}
\label{rem:binding_interpretation}
$\{\mathsf{Ext}_c\}_{c}$ may depend on Alice's dishonest commit strategy, but it
must be chosen independently of her subsequent reveal strategy and must
satisfy \eqref{eq:extractable_binding} for every such strategy. Thus, 
$\widetilde S$ is a reference value obtained from the
post-commit state by a measurement fixed before Alice chooses her reveal
strategy. The measurement need not be accessible to
either party, its role is solely to certify that Alice cannot choose among
different accepted values after the commit phase.
\end{remark}

\begin{remark} \label{rem2}
Suppose that the channel and all strategies are classical. For any dishonest
strategy that generates two decommitments, if Bob accepts both and the two
accepted values differ, then at least one differs from $\widetilde S$.
Consequently, \eqref{eq:extractable_binding} implies the pairwise binding
condition of \cite{winter2003commitment}, with binding error at most
$2\gamma$. 
\end{remark}

\begin{definition}
Fix $E\geq0$. A rate $R\geq0$ is noninteractively achievable with
entanglement rate $E$ if there exists a sequence of
$(n,k_n,K_n,\alpha_n,\beta_n,\gamma_n)$ protocols such that
\begin{equation*}
\liminf_{n\to\infty}\frac{k_n}{n}\geq R,
\qquad
\limsup_{n\to\infty}\frac{\log_2K_n}{n}\leq E,
\qquad \lim_{n\to\infty}
\alpha_n=\lim_{n\to\infty}
\beta_n=\lim_{n\to\infty}
\gamma_n=0.
\end{equation*}
The supremum of all such rates is denoted by
$C_{\mathrm{com}}(E;\mathcal N_q)$. The capacity with no preshared
entanglement, i.e., $K_n=1$, is denoted by $C_{\mathrm{com}}(\mathcal N_q)$, and the
capacity with unrestricted preshared entanglement is denoted by
$C_{\mathrm{com}}^{\mathrm{EA}}(\mathcal N_q)$.
\end{definition}
\section{Main Results}
\label{sec:main}

Our main result determines the noninteractive commitment capacity as a
function of the available entanglement~rate.

\begin{theorem}
\label{thm:flagged_weyl_capacity}
For each $f\in\mathcal F$, define the Fourier coefficients \cite{terras1999fourier}  of the function
$z\mapsto q(f,z)$ by
$\widehat q_f(u)
\triangleq
\sum_{z\in\mathbb F_d^2}
q(f,z)\chi([u,z])$, $u\in\mathbb F_d^2$, and define
$\mathcal S_q
\triangleq
\left\{
u\in\mathbb F_d^2:
\widehat q_f(u)\neq0
\text{ for some }f\in\mathcal F
\right\}$.

Let $\mathcal N_q$ be the channel in
\eqref{eq:flagged_weyl_channel}. If
$\mathcal S_q=\mathbb F_d^2$, then, for every $E\geq0$,
\begin{equation}
C_{\mathrm{com}}(E;\mathcal N_q)
=
\min\{H(Z|F),\log_2d+E\}.
\label{eq:flagged_weyl_entanglement_tradeoff}
\end{equation}
Consequently,
\begin{align}
C_{\mathrm{com}}^{\mathrm{EA}}(\mathcal N_q)
&=
H(Z|F),
\label{eq:flagged_weyl_ea_capacity}\\
C_{\mathrm{com}}(\mathcal N_q)
&=
\min\{H(Z|F),\log_2d\}. \notag
\end{align}
Moreover, \eqref{eq:flagged_weyl_ea_capacity} remains valid under the weaker
condition
$\operatorname{span}_{\mathbb F_p}\mathcal S_q=\mathbb F_d^2$.
\end{theorem}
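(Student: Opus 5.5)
We prove \eqref{eq:flagged_weyl_entanglement_tradeoff} by establishing achievability and a converse separately. The two consequences then follow directly. Letting $E\to\infty$ gives \eqref{eq:flagged_weyl_ea_capacity}, and setting $E=0$ gives the entanglement-free formula. The weaker condition is treated by revisiting only the achievability step where $\mathcal S_q$ enters.

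\emph{Achievability.} Fix a block length $\ell$ and $c$ maximally entangled qudit pairs with $(c/\ell)\log_2 d\to E$. The basis will be stabilizer-like: the states $(W_v\otimes I)|\Phi\rangle^{\otimes(\ell+c)}$ supplemented by Weyl displacements, giving $d^{\ell+c}$ orthogonal states of the $\ell$ transmitted qudits together with Bob's $c$ qudits, indexed by the cosets $x$ of a self-orthogonal $L$. A channel pattern $W_{z^\ell}$ maps the basis state labelled $x$ to the one labelled $x+\pi(z^\ell)$, where $\pi$ is linear. Bob's projective measurement therefore induces a classical additive channel $x\mapsto (x+\pi(Z^\ell),F^\ell)$.

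We choose $\pi$ (equivalently the embedding) at random and show that $\frac1\ell H(\pi(Z^\ell)\mid F^\ell)\to\min\{H(Z|F),(1+c/\ell)\log_2d\}$. This is a hashing/leftover-type estimate: a random linear map onto $d^{\ell+c}$ labels preserves the conditional entropy up to its log-cardinality. We then apply a Winter--Nascimento--Imai style construction to the induced channel: random codewords, Bob testing typicality of $x+\pi(z)$ given $F^\ell$, and privacy amplification for hiding. This yields rate $\min\{H(Z|F),\log_2 d+E\}$.

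The condition $\mathcal S_q=\mathbb F_d^2$ is used to show that the induced channel is nonredundant. The Fourier coefficients $\widehat q_f$ never all vanish, so the shift distributions $q(\cdot|f)$ are ``invertible'' convolution kernels. Consequently no input's output distribution is a convex combination of the others, and a cheating Alice who moves her label is detected with probability bounded away from zero, which amplifies over $n/\ell$ blocks.

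Security against quantum adversaries is the main obstacle. For hiding, we show that Bob's full output $F^nB^nE_B$ is obtained from the induced classical output by a fixed channel. The Weyl twirl over the cosets of $L^\perp$ makes the state block-diagonal in the chosen basis, so any dishonest Bob is simulated by the classical one. For binding, we decompose Alice's arbitrary input in the basis: since Bob's measurement and the Weyl errors commute with dephasing in that basis, the statistics of Bob's measured labels coincide with those of a classical mixed input strategy with a quantum side register. The extractor $\mathsf{Ext}_c$ is the classical decoder applied to $B_{\mathrm{aux}}E^n$ (which contains $Z^n$, hence the true label), and the classical binding bound transfers.

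Under the weaker condition $\operatorname{span}_{\mathbb F_p}\mathcal S_q=\mathbb F_d^2$ with unlimited entanglement, we instead use blocks in which the effective kernel on the extended space still has no nontrivial $\mathbb F_p$-annihilated character. This ensures distinguishability, i.e.\ nonredundancy after grouping $r$ uses.

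\emph{Converse.} For a receiver storing all raw outputs, hiding with $\beta_n\to0$ gives $I(S;F^nB^nE_B)=o(n)$ by continuity (Alicki--Fannes). Correctness and binding, via the extractor acting on $B_{\mathrm{aux}}E^n$, together with Fano give $H(S\mid F^nB^nE_BE^n)=o(n)$. Hence
\[
k\le I(S;E^n\mid F^nB^nE_B)+o(n)\le H(E^n\mid F^n)+o(n)\le nH(Z|F)+o(n),
\]
since $E^n$ is the classical record $(F^n,Z^n)$ in the isometry \eqref{eq:flagged_weyl_isometry}. For the second bound, we write
\[
k\le I(S;F^nB^nE_BE^n)+o(n)\le I(S;F^nE^n)+2\log_2|B^nE_B|\ \text{-type bounds}.
\]
More carefully, conditioned on $F^nE^n$ (which is independent of $S$), the remaining information is carried by $B^nE_B$, and since $E_B$ is initially independent of $S$ we obtain $k\le n\log_2 d+\log_2 K_n+o(n)$. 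This gives $\log_2d+E$.
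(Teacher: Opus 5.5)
Your achievability outline follows the paper's route: cosets of a self-orthogonal $L$, the induced additive channel $\overline Y=\overline X+\pi_L(Z^\ell)$ observed together with $F^\ell$, a random-$L$ hashing bound, Fourier-based nonredundancy, the Winter--Nascimento--Imai code, and a reduction of quantum cheaters to classical ones. Its imprecisions are repairable. You must cap $c\le\ell$. You must draw $L$ uniformly among self-orthogonal subspaces and prove $\Prb_L[v\in L]=(d^{\ell-c}-1)/(d^{2\ell}-1)$ for $v\neq0$ (this comes from symplectic transitivity), since $\pi_L$ is not an arbitrary random linear map. An ``invertible kernel'' only gives distinct rows, so for nonredundancy you need the further step that $1=\sum_{\bar h}\lambda_{\bar h}\chi([u,h])$ for all $u\in L^\perp$ forces $h\in L$. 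Hiding needs no twirl, because honest inputs are basis states that each $W_z$ merely permutes.

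The genuine gap is in the converse, which rests on ``$E^n$ is the classical record $(F^n,Z^n)$.'' In \eqref{eq:flagged_weyl_isometry} the environment holds $|f,z\rangle$ coherently. Its state has off-diagonal entries $\sqrt{q(f,z)q(f,z')}\operatorname{Tr}[W_{z'}^\dagger W_z\rho]$ that depend on the input, so $F^nE^n$ is not independent of $S$, and your second bound fails as written. It is also typically entangled with $V=F^nB^nE_B$, so $H(E^n|VS)$ can be negative and the chain $I(S;E^n|V)\le H(E^n|V)\le H(E^n|F^n)$ breaks. For example, superdense coding of a uniform $S\in\mathbb F_2^2$ through one use of $\mathcal D_p$, $0<p<1$, with one shared ebit gives $I(S;E|V)=H(q_p)$ but $H(E|V)=0$. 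Adding an independent fair flag bit, which \eqref{eq:flagged_weyl_isometry} copies coherently into $E$, keeps $\mathcal S_q=\mathbb F_2^2$ and $I(S;E|V)=H(q_p)$ but makes $H(E|F)=H(q_p)-1$. Since the binding extractor may act on this coherent $E^n$, you cannot simply substitute the classical record.

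The missing idea is to use the random-unitary structure to show that the coherent environment adds no more than the noise record. The paper's Lemma~\ref{lem:flagged_weyl_environment_information} does this as follows. Purify the input by $T$ and bound $I(S;E^n|V)\le I(ST;E^n|V)=H(E^n)+H(ST)-H(V)$. Then use $H(E^n)\le H(F^n,Z^n)$ (via a rank-one measurement) and concavity over the Weyl mixture, $H(V)\ge H(F^n)+H(A^nE_B)=H(F^n)+H(ST)$.

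An alternative route uses isometric invariance: $I(S;E^n|V)=I(S;A^nE_B)-I(S;V)$. Moreover, $I(S;A^nE_B)=I(S;VZ^n)$ evaluated on the state that keeps $(F^n,Z^n)$ as a classical register, because given $(f^n,z^n)$, $B^nE_B$ is a unitary image of $A^nE_B$. The term therefore equals $I(S;Z^n|V)\le H(Z^n|F^n)=nH(Z|F)$ with a genuinely classical $Z^n$.

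For the entanglement bound, use $I(S;VE^n)\le H(VE^n)\le n\log_2d+\log_2K$, because $VE^n$ is an isometric image of $A^nE_B$ and $\rho^{E_B}$ has rank at most $K$. Finally, keep $\beta k$ and $\Prb[\widetilde S\neq S]k$ on the left-hand side, as the paper does, rather than declaring them $o(n)$. With unrestricted entanglement, $k_n=O(n)$ is not known a priori.
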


\begin{proof}
The achievability is proved in Section \ref{sec:achievability}, and the
converse is proved in Section \ref{sec:converse}.
\end{proof}

\begin{remark}
    The condition $\mathcal S_q=\mathbb F_d^2$ guarantees that every
classical channel induced in our achievability construction is
nonredundant. In  the unrestricted entanglement-assisted case, the weaker condition
$\operatorname{span}_{\mathbb F_p}\mathcal S_q=\mathbb F_d^2$ is
sufficient.
\end{remark}

We first specialize Theorem \ref{thm:flagged_weyl_capacity} to  quantum erasure channels \cite{bennett1997capacities}.

\begin{corollary}
\label{cor:qudit_erasure_capacity}
For $0<\epsilon<1$, define the quantum erasure channel
\begin{equation*}
\mathcal E_{\epsilon,d}(\rho)
=
(1-\epsilon)\rho
+
\epsilon|e\rangle\!\langle e|.
\end{equation*}
For every $E\geq0$,
\begin{equation*}
C_{\mathrm{com}}(E;\mathcal E_{\epsilon,d})
=
\min\{2\epsilon\log_2d,\log_2d+E\}.
\end{equation*}
Consequently, $C_{\mathrm{com}}^{\mathrm{EA}}(\mathcal E_{\epsilon,d})
=
2\epsilon\log_2d$, and $C_{\mathrm{com}}(\mathcal E_{\epsilon,d})
=
\min\{2\epsilon\log_2d,\log_2d\}.$
\end{corollary}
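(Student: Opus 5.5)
The plan is to show that the erasure channel $\mathcal E_{\epsilon,d}$ is, up to a fixed isometry on Bob's side, an instance of $\mathcal N_q$ in \eqref{eq:flagged_weyl_channel}, and then to read off $H(Z|F)$ and verify the hypothesis of Theorem~\ref{thm:flagged_weyl_capacity}. Take $\mathcal F=\{0,1\}$ and
\begin{equation*}
q(0,z)=(1-\epsilon)\mathds 1\{z=0\},\qquad q(1,z)=\epsilon d^{-2}.
\end{equation*}
The first step is to recall the twirl identity $d^{-2}\sum_{z\in\mathbb F_d^2}W_z\rho W_z^\dagger=\operatorname{Tr}[\rho]\,I/d$. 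This identity follows because the $W_z$ form a unitary operator basis that is orthogonal in Hilbert--Schmidt inner product. With it,
\begin{equation*}
\mathcal N_q(\rho)=(1-\epsilon)|0\rangle\!\langle0|^F\otimes\rho+\epsilon|1\rangle\!\langle1|^F\otimes I/d .
\end{equation*}

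The second step is to argue that this channel and $\mathcal E_{\epsilon,d}$ have the same commitment capacity for every $E$. Each channel output can be converted into the other by a receiver-side map that does not depend on the input. In one direction, Bob on seeing $|e\rangle$ prepares $|1\rangle\!\langle1|\otimes I/d$, and otherwise tags the state with flag $0$. In the other direction, he replaces the flag-$1$ branch by $|e\rangle$. I will check that both directions preserve the protocol conditions as follows.
\begin{itemize}
\item \emph{Correctness and hiding.} Any honest or dishonest Bob for one channel can simulate the other channel's output from his own. So a protocol for one channel becomes a protocol for the other, and hiding transfers because the dishonest-Bob strategy set only grows by post-processing that he could have done himself.
\item \emph{Binding.} This needs care, since the extractor acts on $B_{\mathrm{aux}}E^n$. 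For both channels, the environment after the simulation can be taken to contain the flag and the purification of the maximally mixed state. Alternatively, binding for Bob's simulated view can be obtained by composing the honest Bob with the fixed local map.
\end{itemize}
I expect this equivalence bookkeeping, in particular matching the environments so that extractable binding transfers, to be the main obstacle. If it becomes delicate, the fallback is to note that the proofs in Sections~\ref{sec:achievability} and~\ref{sec:converse} only use the flag and Weyl structure, and that the erasure channel's isometric extension is unitarily equivalent to \eqref{eq:flagged_weyl_isometry} after discarding the environment's copy of $z$ on the $f=1$ branch.

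The third step is to verify $\mathcal S_q=\mathbb F_d^2$. We have $\widehat q_0(u)=(1-\epsilon)\chi(0)=1-\epsilon\neq0$ for every $u$, since $\epsilon<1$. Hence the hypothesis of Theorem~\ref{thm:flagged_weyl_capacity} holds.

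The final step is to compute
\begin{equation*}
H(Z|F)=(1-\epsilon)\cdot0+\epsilon\log_2 d^2=2\epsilon\log_2 d .
\end{equation*}
Substituting this into \eqref{eq:flagged_weyl_entanglement_tradeoff} gives $\min\{2\epsilon\log_2d,\log_2d+E\}$. The two consequences follow by letting $E\to\infty$, where the minimum is eventually $2\epsilon\log_2d$, and by setting $E=0$, exactly as in Theorem~\ref{thm:flagged_weyl_capacity}.
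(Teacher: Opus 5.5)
Your proposal is correct and follows the paper's proof step for step. It uses the same flagged representation with $q(0,0)=1-\epsilon$ and $q(1,z)=\epsilon/d^2$ obtained from the twirl identity, the same check $\widehat q_0(u)=1-\epsilon\neq0$ giving $\mathcal S_q=\mathbb F_d^2$, and the same evaluation $H(Z|F)=2\epsilon\log_2d$ inside Theorem~\ref{thm:flagged_weyl_capacity}.

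The paper justifies the equivalence of $\mathcal E_{\epsilon,d}$ and $\widetilde{\mathcal E}_{\epsilon,d}$ with one line, namely that Bob can swap $|e\rangle$ and $|1\rangle\!\langle1|^F\otimes I/d$. Your extra attention to transferring extractable binding therefore goes beyond the paper and is not a missing ingredient. Concretely, it suffices for honest Bob to keep in $B_{\mathrm{aux}}$ the purification of the $I/d$ he prepares, or the discarded qudit in the reverse direction, so the extractor can act as in Lemma~\ref{lem:extractable_classical_reduction}. Your fallback wording about ``discarding the environment's copy of $z$'' is imprecise, but you do not need it.
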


\begin{proof}
We first use the following equivalent form of the  erasure channel,
as required by Theorem \ref{thm:flagged_weyl_capacity}: $\widetilde{\mathcal E}_{\epsilon,d}(\rho)
=
(1-\epsilon)|0\rangle\!\langle0|^F\otimes\rho
+
\epsilon|1\rangle\!\langle1|^F\otimes I/d$. The equivalence
follows because Bob can replace the erasure symbol $|e\rangle$ with
$|1\rangle\!\langle1|^F\otimes I/d$, or perform the reverse replacement.  Then, using the identity
$
d^{-2}\sum_{z\in\mathbb F_d^2}W_z\rho W_z^\dagger=I/d
$, e.g., \cite[Eq. (4.349)]{wilde2013quantum}, we see that $\widetilde{\mathcal E}_{\epsilon,d}$ is the channel
$\mathcal N_q$ associated with
$q(0,0)=1-\epsilon$ and
$q(1,z)=\epsilon/d^2$ for every $z\in\mathbb F_d^2$, with all other
probabilities equal to zero. Thus, conditioned on $F=0$, the applied
operator index is $Z=0$, whereas conditioned on $F=1$, it is uniform on
$\mathbb F_d^2$ so that $
H(Z|F)
=
(1-\epsilon)H(Z|F=0)+\epsilon H(Z|F=1)
=
2\epsilon\log_2d.
$ Finally, \eqref{eq:flagged_weyl_entanglement_tradeoff} holds because $\mathcal S_q=\mathbb F_d^2$ since for every $u\in\mathbb F_d^2$,
$
\widehat q_0(u)
=
\sum_{z\in\mathbb F_d^2}q(0,z)\chi([u,z])
=
1-\epsilon
\neq0.
$
\end{proof}

We next specialize Theorem \ref{thm:flagged_weyl_capacity} to Pauli qudit channels, which corresponds to the special case $F$ constant.

\begin{corollary}
\label{cor:qudit_weyl_subclasses}
Let $q$ be a distribution on $\mathbb F_d^2$ and define the Pauli qudit
channel
\begin{equation*}
\mathcal W_q(\rho)
=
\sum_{a,b\in\mathbb F_d}
q(a,b)W_{(a,b)}\rho W_{(a,b)}^\dagger.
\end{equation*}
If $\mathcal S_q=\mathbb F_d^2$, then, for every $E\geq0$,
\begin{equation*}
C_{\mathrm{com}}(E;\mathcal W_q)
=
\min\{H(q),\log_2d+E\}.
\end{equation*}

In particular:
\begin{enumerate}[(i)]

\item For $r$ a distribution on $\mathbb F_d$, define
$\widehat r(t)\triangleq\sum_{x\in\mathbb F_d}r(x)\chi(tx)$, and consider
the phase and cyclic-shift channels
\begin{align*}
\mathcal Z_r(\rho)
 =
\sum_{b\in\mathbb F_d}
r(b)Z(b)\rho Z(b)^\dagger,\qquad
\mathcal X_r(\rho)
=
\sum_{a\in\mathbb F_d}
r(a)X(a)\rho X(a)^\dagger.
\end{align*}
If $\widehat r(t)\neq0$ for every
$t\in\mathbb F_d\setminus\{0\}$, then, for every $E\geq0$,
\begin{equation*}
C_{\mathrm{com}}(E;\mathcal Z_r)
=
C_{\mathrm{com}}(E;\mathcal X_r)
=
H(r).
\end{equation*}

\item For $q_X$ and $q_Z$ distributions on $\mathbb F_d$, define
$\widehat q_X$ and $\widehat q_Z$ analogously to $\widehat r$, and consider
the Pauli qudit channel with independent shift and phase indices
\begin{equation*}
\mathcal W_{q_X,q_Z}(\rho)
=
\sum_{a,b\in\mathbb F_d}
q_X(a)q_Z(b)W_{(a,b)}\rho W_{(a,b)}^\dagger.
\end{equation*}
If $\widehat q_X(t)\neq0$ and $\widehat q_Z(t)\neq0$ for every
$t\in\mathbb F_d\setminus\{0\}$, then, for every $E\geq0$,
\begin{equation*}
C_{\mathrm{com}}(E;\mathcal W_{q_X,q_Z})
=
\min\{H(q_X)+H(q_Z),\log_2d+E\}.
\end{equation*}

\end{enumerate}
\end{corollary}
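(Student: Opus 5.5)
The plan is to obtain every claim of Corollary~\ref{cor:qudit_weyl_subclasses} directly from Theorem~\ref{thm:flagged_weyl_capacity}. The only real work is computing the Fourier coefficients and checking the condition $\mathcal S_q=\mathbb F_d^2$.

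\emph{General statement.} Take $\mathcal F=\{0\}$ to be a singleton and define the joint distribution $q'(0,z)\triangleq q(z)$. Then $\mathcal N_{q'}$ coincides with $\mathcal W_q$, since the flag register $F$ is deterministic and carries no information. Because $F$ is constant, we have $H(Z|F)=H(Z)=H(q)$. Moreover, $\widehat{q'}_0=\widehat q$, so $\mathcal S_{q'}=\{u:\widehat q(u)\neq0\}$. Under the hypothesis $\mathcal S_q=\mathbb F_d^2$, \eqref{eq:flagged_weyl_entanglement_tradeoff} therefore gives $\min\{H(q),\log_2d+E\}$.

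\emph{Key algebraic fact.} The character $\chi$ is additive: $\chi(x+y)=\chi(x)\chi(y)$. This holds because the map $t\mapsto\sum_j t^{p^j}$ (the field trace) is additive, as the Frobenius map is additive. It follows that $\chi(0)=1$.

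\emph{Case (i).} Since $Z(b)=W_{(0,b)}$, the phase channel $\mathcal Z_r$ is $\mathcal W_q$ with $q(0,b)=r(b)$ and $q(a,b)=0$ for $a\neq0$. For $u=(s,t)$ we have $[u,(0,b)]=t\cdot0-sb=-sb$, so
\[
\widehat q(s,t)=\sum_b r(b)\chi(-sb)=\widehat r(-s).
\]
This equals $1$ when $s=0$. When $s\neq0$, it is nonzero by hypothesis, since then $-s\neq0$. Hence $\mathcal S_q=\mathbb F_d^2$.

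Similarly, $X(a)=W_{(a,0)}$, so $\mathcal X_r$ corresponds to $q(a,0)=r(a)$. Here $[u,(a,0)]=ta$, which gives $\widehat q(s,t)=\widehat r(t)$. This is again nonzero for every $(s,t)$.

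In both cases $H(q)=H(r)\le\log_2d\le\log_2d+E$, so the minimum equals $H(r)$.

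\emph{Case (ii).} Set $q(a,b)=q_X(a)q_Z(b)$. By additivity of $\chi$, the Fourier transform factorizes:
\[
\widehat q(s,t)=\Big(\sum_a q_X(a)\chi(ta)\Big)\Big(\sum_b q_Z(b)\chi(-sb)\Big)=\widehat q_X(t)\,\widehat q_Z(-s).
\]
Each factor equals $1$ at argument $0$ and is nonzero otherwise by hypothesis. Hence $\mathcal S_q=\mathbb F_d^2$. Independence of the two indices gives $H(q)=H(q_X)+H(q_Z)$, which yields the stated formula.

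I expect no real obstacle here. The only points needing care are the sign conventions in $[u,z]$ and confirming that $\chi$ is a character, which is what makes the Fourier transform factorize in case (ii).
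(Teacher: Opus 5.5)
Your proposal is correct and takes the same route as the paper: specialize Theorem~\ref{thm:flagged_weyl_capacity} to a constant flag, verify $\mathcal S_q=\mathbb F_d^2$ through the Fourier coefficients (including the factorization $\widehat q(s,t)=\widehat q_X(t)\widehat q_Z(-s)$), and use $H(r)\le\log_2d$ to simplify the minimum in case (i). You also write out details the paper leaves implicit, namely the additivity of $\chi$, $\widehat r(0)=1$, and the explicit coefficients $\widehat r(-s)$ and $\widehat r(t)$ for the phase and shift channels.
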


\begin{proof}
For the phase and cyclic-shift channels, the condition on
$\widehat r$ gives $\mathcal S_q=\mathbb F_d^2$, and
$H(r)\leq\log_2d$. For $\mathcal W_{q_X,q_Z}$, one can verify that for $u=(s,t)$,
$
\widehat q(s,t)
=
\widehat q_X(t)\widehat q_Z(-s)
$, so that  the  conditions on $\widehat q_X$ and $\widehat q_Z$ give
$\mathcal S_q=\mathbb F_d^2$. 
\end{proof}

We next consider the qubit specialization of Corollary \ref{cor:qudit_weyl_subclasses}.
\begin{corollary} 
\label{cor:qubit_pauli_capacity}
Let $q=(q_I,q_X,q_Y,q_Z)$ be a distribution and define the qubit Pauli channel
\begin{equation*}
\mathcal P_q(\rho)
=
q_I\rho+q_XX\rho X+q_YY\rho Y+q_ZZ\rho Z.
\end{equation*}
For the three nonzero elements of $\mathbb F_2^2$, the  Fourier coefficients of $q$ are
\begin{align}
\widehat q(1,0)
=
q_I+q_X-q_Y-q_Z, \quad
\widehat q(1,1)
=
q_I+q_Y-q_X-q_Z, \quad
\label{eq:qubit_pauli_nondegeneracy}
\widehat q(0,1)
=
q_I+q_Z-q_X-q_Y.
\end{align}
If all three coefficients
in \eqref{eq:qubit_pauli_nondegeneracy} are nonzero, then, for every
$E\geq0$,
\begin{equation*}
C_{\mathrm{com}}(E;\mathcal P_q)
=
\min\{H(q_I,q_X,q_Y,q_Z),1+E\}.
\end{equation*}
If at least two of the coefficients in
\eqref{eq:qubit_pauli_nondegeneracy} are nonzero, then
\begin{equation*}
C_{\mathrm{com}}^{\mathrm{EA}}(\mathcal P_q)
=
H(q_I,q_X,q_Y,q_Z).
\end{equation*}
\end{corollary}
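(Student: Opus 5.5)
Corollary~\ref{cor:qubit_pauli_capacity} is a direct specialization of Corollary~\ref{cor:qudit_weyl_subclasses} and of the last statement of Theorem~\ref{thm:flagged_weyl_capacity}, taken with $d=2$ (so $p=2$, $r=1$, $\mathbb F_d=\mathbb F_2$, $\chi(t)=(-1)^t$). The argument has three parts. First, identify $\mathcal P_q$ with $\mathcal W_q$. Second, compute the Fourier coefficients. Third, check the two nondegeneracy conditions.

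\emph{Identification.} For $d=2$, $X(1)=X$ and $Z(1)=Z$, so the Heisenberg--Weyl operators are
\[
W_{(0,0)}=I,\quad W_{(1,0)}=X,\quad W_{(0,1)}=Z,\quad W_{(1,1)}=XZ=-\mathrm iY.
\]
Because the phase cancels under conjugation, $W_{(1,1)}\rho W_{(1,1)}^\dagger=Y\rho Y$. Hence $\mathcal P_q=\mathcal W_q$ with
\[
q(0,0)=q_I,\quad q(1,0)=q_X,\quad q(0,1)=q_Z,\quad q(1,1)=q_Y,
\]
and $H(q)=H(q_I,q_X,q_Y,q_Z)$. Also $\log_2 d=1$.

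\emph{Fourier coefficients.} With $u=(s,t)$ and $z=(a,b)$ we have $[u,z]=ta-sb=ta+sb$ in $\mathbb F_2$, so $\widehat q(u)=\sum_z q(z)(-1)^{ta+sb}$. This gives:
\begin{itemize}
\item For $u=(1,0)$, the sign is $(-1)^b$, so $\widehat q(1,0)=q_I+q_X-q_Z-q_Y$.
\item For $u=(0,1)$, the sign is $(-1)^a$, so $\widehat q(0,1)=q_I+q_Z-q_X-q_Y$.
\item For $u=(1,1)$, the sign is $(-1)^{a+b}$, so $\widehat q(1,1)=q_I+q_Y-q_X-q_Z$.
\end{itemize}
These are exactly the expressions in \eqref{eq:qubit_pauli_nondegeneracy}. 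In addition, $\widehat q(0,0)=1\neq0$, so $(0,0)\in\mathcal S_q$ always.

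\emph{Conditions.} If all three coefficients in \eqref{eq:qubit_pauli_nondegeneracy} are nonzero, then $\mathcal S_q=\mathbb F_2^2$. Corollary~\ref{cor:qudit_weyl_subclasses} then gives $C_{\mathrm{com}}(E;\mathcal P_q)=\min\{H(q),1+E\}$.

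If only at least two of them are nonzero, $\mathcal S_q$ contains two distinct nonzero vectors of $\mathbb F_2^2$. Any two distinct nonzero vectors of $\mathbb F_2^2$ are linearly independent over $\mathbb F_2$. Since $p=2$ and $d=2$, $\operatorname{span}_{\mathbb F_2}\mathcal S_q=\mathbb F_2^2$. The weaker-condition clause of Theorem~\ref{thm:flagged_weyl_capacity}, applied with $F$ constant, then yields $C^{\mathrm{EA}}_{\mathrm{com}}(\mathcal P_q)=H(Z|F)=H(q)$.

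The only point needing care is the sign convention of $[u,z]$ and the resulting labeling of the coefficients. Over $\mathbb F_2$ the minus sign in $[u,z]$ is irrelevant, so this is routine.
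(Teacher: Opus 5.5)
Your proposal is correct and follows the paper's route: specialize Theorem~\ref{thm:flagged_weyl_capacity} (directly, or via Corollary~\ref{cor:qudit_weyl_subclasses}) with $F$ constant and $d=2$. Nonvanishing of all three coefficients gives $\mathcal S_q=\mathbb F_2^2$, and nonvanishing of two gives $\operatorname{span}_{\mathbb F_2}\mathcal S_q=\mathbb F_2^2$, which the weaker-condition clause turns into the entanglement-assisted capacity $H(q)$. Beyond the paper's one-line argument, you explicitly check the identification $W_{(1,1)}=-\mathrm iY$ and the three Fourier coefficients, and you prove only the sufficiency direction of the paper's stated equivalences, which is all the corollary needs.
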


\begin{proof}
The claims follow from Theorem
\ref{thm:flagged_weyl_capacity} with  $F$ constant because $\mathcal S_q=\mathbb F_2^2$ is equivalent to having all three coefficients
in \eqref{eq:qubit_pauli_nondegeneracy}  nonzero, and $\operatorname{span}_{\mathbb F_2}\mathcal S_q=\mathbb F_2^2$ is
equivalent to having at least two of the coefficients in
\eqref{eq:qubit_pauli_nondegeneracy}  nonzero.
\end{proof}

A  special case of Corollary \ref{cor:qubit_pauli_capacity} is the qubit depolarizing channel.

\begin{corollary} 
\label{cor:depolarizing_capacity}
For $0\leq p\leq1$, define the qubit depolarizing channel
\begin{equation}
\mathcal D_p(\rho)
=(1-p)\rho+p\frac{I}{2}.
\label{eq:depolarizing_channel}
\end{equation}
Define 
$q_p
=
\left(
1-\frac{3p}{4},
\frac p4,
\frac p4,
\frac p4
\right).$
For $0\leq p<1$ and every $E\geq0$,
\begin{equation*}
C_{\mathrm{com}}(E;\mathcal D_p)
=
\min\{H(q_p),1+E\}.
\end{equation*}
\end{corollary}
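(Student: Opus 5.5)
The plan is to obtain the statement as a direct specialization of Corollary~\ref{cor:qubit_pauli_capacity}. First, I will write $\mathcal D_p$ as a qubit Pauli channel. The qubit case $d=2$ of the twirling identity $d^{-2}\sum_{z}W_z\rho W_z^\dagger=I/d$, already used in the proof of Corollary~\ref{cor:qudit_erasure_capacity}, reads
\begin{equation*}
\frac I2=\frac14\left(\rho+X\rho X+Y\rho Y+Z\rho Z\right).
\end{equation*}
Substituting this into \eqref{eq:depolarizing_channel} gives
\begin{equation*}
\mathcal D_p(\rho)=\left(1-\frac{3p}{4}\right)\rho+\frac p4\left(X\rho X+Y\rho Y+Z\rho Z\right).
\end{equation*}
Hence $\mathcal D_p=\mathcal P_{q_p}$ with $q_p=(1-\tfrac{3p}{4},\tfrac p4,\tfrac p4,\tfrac p4)$. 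For $0\leq p\leq1$ these weights are nonnegative and sum to one, so $q_p$ is a valid distribution.

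Next, I will check the nondegeneracy hypothesis in \eqref{eq:qubit_pauli_nondegeneracy}. The three Fourier coefficients are symmetric in $q_X,q_Y,q_Z$, so each of them equals
\begin{equation*}
q_I+\frac p4-\frac p4-\frac p4=1-\frac{3p}{4}-\frac p4=1-p.
\end{equation*}
This is nonzero exactly when $p\neq1$, which is the restriction $0\leq p<1$ in the statement. At $p=1$ all three coefficients vanish, so $\mathcal S_q=\{0\}$ and the theorem's condition fails; this explains the exclusion.

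With all three coefficients nonzero, the first part of Corollary~\ref{cor:qubit_pauli_capacity} gives
\begin{equation*}
C_{\mathrm{com}}(E;\mathcal D_p)=\min\{H(q_p),1+E\}
\end{equation*}
for every $E\geq0$, which is the claim.

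The only point that needs care is that the capacity depends only on the channel map and not on its Pauli decomposition. Since $\mathcal D_p$ and $\mathcal P_{q_p}$ are the same channel, protocols and adversaries for one are protocols and adversaries for the other. I do not expect any step to be a real obstacle, since the argument reduces to verifying a decomposition and evaluating one linear expression.
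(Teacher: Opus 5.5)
Your proposal is correct and follows the paper's proof: write $\mathcal D_p$ as the qubit Pauli channel with weights $q_p$, check that each Fourier coefficient at a nonzero index equals $1-p\neq0$ for $p<1$, and invoke Corollary~\ref{cor:qubit_pauli_capacity}. The only difference is that you derive the Pauli decomposition explicitly from the twirling identity, a step the paper leaves implicit.
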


\begin{proof}
We write
$
\mathcal D_p(\rho)
=
\sum_{z\in\mathbb F_2^2}q_p(z)W_z\rho W_z^\dagger
$
and observe that the Fourier coefficient of $q_p$ satisfies
$
\widehat q_p(u)=1-p
$
at each nonzero index $u\in\mathbb F_2^2$.
Thus, for $p<1$, $\mathcal S_{q_p}=\mathbb F_2^2$, and the capacity follows
from Corollary \ref{cor:qubit_pauli_capacity}. 
\end{proof}

We next consider the dephrasure channel \cite{leditzky2018dephrasure}.
\begin{corollary} 
\label{cor:dephrasure}
For $0\leq\epsilon<1$ and $0\leq p\leq1$, define the dephrasure channel
\begin{equation*}
\mathcal N_{\epsilon,p}(\rho)
=
(1-\epsilon)\bigl((1-p)\rho+pZ\rho Z\bigr)
+\epsilon|e\rangle\!\langle e|,
\end{equation*}
where $|e\rangle$ is an erasure flag orthogonal to the qubit output space.
If $p\neq1/2$, then, for every $E\geq0$,
\begin{equation*}
C_{\mathrm{com}}(E;\mathcal N_{\epsilon,p})
=
\min\left\{
(1-\epsilon)h_2(p)+2\epsilon,\,
1+E
\right\}.
\end{equation*}
\end{corollary}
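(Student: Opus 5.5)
We reduce the dephrasure channel to the form \eqref{eq:flagged_weyl_channel} and then apply Theorem~\ref{thm:flagged_weyl_capacity}, exactly as in the proof of Corollary~\ref{cor:qudit_erasure_capacity}.

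First, as there, Bob can reversibly replace the erasure flag $|e\rangle\!\langle e|$ by $|1\rangle\!\langle1|^F\otimes I/2$. This turns $\mathcal N_{\epsilon,p}$ into the equivalent channel
\begin{equation*}
\widetilde{\mathcal N}_{\epsilon,p}(\rho)=(1-\epsilon)|0\rangle\!\langle0|^F\otimes\bigl((1-p)\rho+pZ\rho Z\bigr)+\epsilon|1\rangle\!\langle1|^F\otimes I/2 .
\end{equation*}
Since the replacement is invertible by Bob alone, both channels have the same commitment capacity. Using $I/2=\frac14\sum_{z\in\mathbb F_2^2}W_z\rho W_z^\dagger$, this is $\mathcal N_q$ with $d=2$ and
\begin{equation*}
q(0,(0,0))=(1-\epsilon)(1-p),\qquad q(0,(0,1))=(1-\epsilon)p,\qquad q(1,z)=\epsilon/4 \text{ for all } z\in\mathbb F_2^2 .
\end{equation*}
Here the phase flip $Z$ is $W_{(0,1)}=Z(1)$, and all other entries of $q$ are zero.

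Next we compute the entropy term. Conditioned on $F=0$, $Z$ takes two values with probabilities $(1-p,p)$, so $H(Z|F=0)=h_2(p)$. Conditioned on $F=1$, $Z$ is uniform on $\mathbb F_2^2$, so $H(Z|F=1)=2$. Hence
\begin{equation*}
H(Z|F)=(1-\epsilon)h_2(p)+2\epsilon .
\end{equation*}
Together with $\log_2 d=1$, \eqref{eq:flagged_weyl_entanglement_tradeoff} gives the claimed formula, provided $\mathcal S_q=\mathbb F_2^2$.

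It remains to verify the Fourier condition, which is the only nontrivial step. It suffices to show $\widehat q_0(u)\neq0$ for every $u=(s,t)\in\mathbb F_2^2$. For $d=2$ we have $\chi(x)=(-1)^x$, and the symplectic form gives $[u,(0,1)]=t\cdot0-s\cdot1=s$. Therefore
\begin{equation*}
\widehat q_0(u)=(1-\epsilon)\bigl((1-p)+p(-1)^{s}\bigr).
\end{equation*}
This equals $1-\epsilon$ when $s=0$ and $(1-\epsilon)(1-2p)$ when $s=1$. Both are nonzero because $\epsilon<1$ and $p\neq1/2$.

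One could note that $\widehat q_1(u)=0$ for $u\neq0$, but this is irrelevant, since only one $f$ needs a nonzero coefficient. Thus $\mathcal S_q=\mathbb F_2^2$, and the corollary follows.
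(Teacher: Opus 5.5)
Your proposal is correct and follows the paper's proof essentially step for step: the same flagged rewriting with $q(0,(0,0))=(1-\epsilon)(1-p)$, $q(0,(0,1))=(1-\epsilon)p$, $q(1,z)=\epsilon/4$; the same Fourier coefficients $1-\epsilon$ and $(1-\epsilon)(1-2p)$ showing $\mathcal S_q=\mathbb F_2^2$; and the same evaluation $H(Z|F)=(1-\epsilon)h_2(p)+2\epsilon$. Your explicit check that $[u,(0,1)]=s$ for $d=2$ fills in the step the paper leaves as ``one can verify.''
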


\begin{proof}
As in the proof of Corollary
\ref{cor:qudit_erasure_capacity}, express the dephrasure channel in the equivalent form 
$
\widetilde{\mathcal N}_{\epsilon,p}(\rho)
=
(1-\epsilon)|0\rangle\!\langle0|^F
\otimes\bigl((1-p)\rho+pZ\rho Z\bigr)
+
\epsilon|1\rangle\!\langle1|^F\otimes I/2,
$ with 
$
I/2
=
4^{-1}\sum_{z\in\mathbb F_2^2}W_z\rho W_z^\dagger,
$ so that 
$q(0,(0,0))=(1-\epsilon)(1-p)$,
$q(0,(0,1))=(1-\epsilon)p$, and
$q(1,z)=\epsilon/4$ for every $z\in\mathbb F_2^2$, with all other
probabilities equal to zero. Then, one can verify that $\widehat q_0(s,t)=1-\epsilon$ when $s=0$, and
$\widehat q_0(s,t)=(1-\epsilon)(1-2p)$ when $s=1$, which are nonzero for every
$(s,t)\in\mathbb F_2^2$ because 
$\epsilon<1$ and $p\neq1/2$. Hence,
$\mathcal S_q=\mathbb F_2^2$ and \eqref{eq:flagged_weyl_entanglement_tradeoff} holds. Finally, $
H(Z|F)
= (1-\epsilon)  H(Z|F=0) + \epsilon H(Z|F=1)=
(1-\epsilon)h_2(p)+2\epsilon.
$
\end{proof}

\section{Achievability of Theorem \ref{thm:flagged_weyl_capacity}}
\label{sec:achievability}

The achievability proof idea is to convert blocks of the quantum channel
into a classical channel and then apply a classical
commitment code. Section
\ref{subsec:classical_commitment_codes} reviews  this classical
commitment code.
Section \ref{subsec:induced_classical_channel} constructs the induced classical
channel
and shows that  the uniform-input equivocation of this channel approaches the
desired rate. Section~\ref{subsec:additive_channel_nonredundancy} determines a sufficient condition to ensure that the induced channel is nonredundant.
Section~\ref{subsec:common_security_reduction} then shows that the
 protocol is correct, hiding, and binding against arbitrary
quantum adversaries. Finally,
Section~\ref{subsec:completion_achievability} applies the
preceding results to prove the claimed rates.

\subsection{Classical commitment codes}
\label{subsec:classical_commitment_codes}

We first show that the result in \cite{winter2003commitment} extends to our binding requirement, see Remark \ref{rem2}.

\begin{theorem}
\label{thm:dmc_extractable_commitment_achievability}
Let $W:\mathcal X\to\mathcal Y$ be a finite nonredundant classical channel.
Then, every rate $R<\max_{P_X}H(X|Y)$
is achievable over $W$ in the absence of commit-phase public
communication.
\end{theorem}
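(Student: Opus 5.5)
We will follow Winter et al.'s random-coding construction for commitment over a nonredundant DMC, and check that its binding analysis yields the extractable form \eqref{eq:extractable_binding}, not only the pairwise form. Since the channel and all strategies in this theorem are classical, the extraction POVM $\mathsf{Ext}_c$ reduces to a classical function of Bob's record $C_{\mathrm{com}}=y^n$ together with Alice's commit-phase input $x^n$, which is available in the environment register. Throughout, take the entanglement rank $K=1$.

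\textbf{Protocol.} Fix $P_X$ achieving $\max_{P_X}H(X|Y)$, a type class $\mathcal T$ close to $P_X$, and a rate $R<H(X|Y)$.
\begin{enumerate}[(i)]
\item Generate a random codebook $\{x^n(m)\}$ from $\mathcal T$ with rate slightly above $H(X)-\delta$, so that $\log_2|\text{codebook}|\approx n(I(X;Y)+R+\delta)$.
\item Choose a random linear or 2-universal hash $g$ with $k\approx nR$ output bits.
\item To commit to $s$, Alice sends $x^n(M)$ with $M$ uniform over $\{m:g(m)=s\}$.
\item To reveal, she sends $(s,M)$. Bob accepts iff $g(M)=s$ and $(x^n(M),y^n)$ passes a conditional-typicality test with tolerance $\eta$, i.e., the joint type lies within $\eta$ of $P_X\times W$.
\end{enumerate}

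\textbf{Correctness and hiding.} Correctness follows from the law of large numbers. Hiding follows from privacy amplification or a soft-covering argument. Conditioned on $y^n$, $M$ retains about $n(H(X|Y)+R'-R)$ bits of smooth min-entropy. Hence $g(M)$ is nearly uniform and independent of $y^n$, and an averaging argument fixes a good deterministic codebook and hash. Because Bob is silent and the channel is classical, a dishonest Bob is limited to processing $y^n$. Trace-distance hiding therefore reduces to the classical statistical distance, and $\tau$ can be taken as the average output state.

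\textbf{Binding.} Define the extractor as follows: $\widetilde S=g(\hat m)$, where $\hat m$ is the unique codeword index whose codeword is within a smaller tolerance $\eta'\ll\eta$ in Hamming/type distance of the actual input $x^n$, with an arbitrary value if there is none. Two facts are needed.
\begin{enumerate}[(a)]
\item By nonredundancy, as in \cite{winter2003commitment}, there is a constant $c>0$ such that for any sent $x^n$ and any codeword $x'^n$ at normalized Hamming distance $\ge\eta'$ from $x^n$, the probability that $(x'^n,Y^n)$ passes the $\eta$-test is at most $2^{-nc\eta'^2}$. This holds for arbitrary, not necessarily typical, inputs.
\item The codebook can be chosen with minimum distance above $2\eta'$ via expurgation, since its rate is below $H(X)$. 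Hence at most one codeword is $\eta'$-close to $x^n$.
\end{enumerate}
For Bob to accept $\widehat S\ne\widetilde S$, Alice must open some $M$ with $g(M)\ne\widetilde S$, which forces $x^n(M)\ne x^n(\hat m)$ and hence $x^n(M)$ to be $\eta'$-far from $x^n$. A union bound over the at most $2^{nR_{\text{cb}}}$ far codewords gives $\Pr[J=0,\widehat S\ne\widetilde S]\le 2^{nR_{\text{cb}}}2^{-nc\eta'^2}$. Because $\widetilde S$ depends only on $x^n$, chosen before the reveal, this bound holds uniformly over reveal strategies, and randomized commit strategies follow by averaging.

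\textbf{Main obstacle.} The union bound above fails, since $R_{\text{cb}}$ is a fixed positive rate while $c\eta'^2$ is small. Instead, we will follow Winter's refined argument: for a fixed input $x^n$ of a given type, the number of codewords $x'^n$ that could jointly pass the test with a typical $Y^n$ is controlled by the conditional type, contributing a factor $2^{n(H(X'|X)\text{-like})}$. Nonredundancy then gives an exponent gap, through a strict convexity or large-deviation step, that beats this count when the codebook is random from $\mathcal T$. The remaining work is to check that this estimate bounds the probability of \emph{any} far codeword being accepted, which is exactly what extractability requires. The estimate in \cite{winter2003commitment} bounds exactly this quantity; the pairwise formulation there was only the conclusion drawn from it. We therefore expect to import it almost verbatim and only need to verify the uniformity over dishonest $x^n$.
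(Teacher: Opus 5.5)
Your proof of binding, which is the only part where the extractable form \eqref{eq:extractable_binding} matters, has two concrete problems.

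\textbf{Reversed tolerances.} With extraction radius $\eta'\ll\eta$, fact (a) is false. Suppose Alice inputs an $x^n$ halfway between two codewords at normalized distance $3\eta'$, which your minimum distance $>2\eta'$ permits. Each codeword then differs from $x^n$ in only $1.5\eta' n<\eta n$ positions, so both pass the $\eta$-test with high probability and she can open either. What nonredundancy actually gives is the following. Let $c_0>0$ be the minimum $\ell_1$ distance from a row $W(\cdot|a)$ to the convex hull of the other rows. If $d_{\mathrm H}(x^n,x'^n)\geq\sigma n$, the expected joint type of $(x'^n,Y^n)$ lies at $\ell_1$ distance at least $c_0\sigma$ from the one the test expects. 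So you must fix the separation $\sigma$ first, small enough that expurgation keeps the codebook rate near $H(X)$. Only then choose the test tolerance, below $c_0\sigma/(|\mathcal X||\mathcal Y|)$. Hoeffding's inequality then gives an $e^{-\Omega(n)}$ acceptance bound for each single far codeword and every input $x^n$.

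\textbf{The open step targets a false statement.} The step you leave open aims to show that the probability that \emph{some} far codeword passes is small. That probability is close to $1$, even for an honest Alice. By the very hiding property you establish, a typical $y^n$ is jointly typical with exponentially many codewords (about $2^{n(R+\delta)}$ with your codebook size). All of them are far from the sent codeword, and their hash values cover essentially all of $\{0,1\}^k$. Binding holds only because Alice does not see $Y^n$, and that is all extractability needs. Her decommitment is computed from her own registers, so given her input $X^n=x^n$ it is independent of $Y^n$. Hence
\[
\Prb[J=0,\widehat S\neq\widetilde S]=\sum_{x^n,s,m}P(x^n,s,m)\,\mathds{1}\{g(m)=s\neq\widetilde S(x^n)\}\,\Prb\bigl[(x^n(m),Y^n)\text{ passes}\bigm|X^n=x^n\bigr],
\]
and each nonzero term involves a single codeword far from $x^n$. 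The single-codeword bound therefore suffices, with no union bound and no counting or large-deviation refinement.

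This is exactly the paper's argument. It takes from \cite[Prop.~7 and~8]{winter2003commitment} a code with minimum distance $2\sigma N$ and the single-codeword soundness bound \eqref{eq:classical_commitment_code_soundness}. It lets $\widetilde S(u^N)$ be the message of the unique codeword within distance $\sigma N$ of $u^N$, or a default $s_0$, and bounds each term of the above sum by $\nu_N$.

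Your hash-based hiding argument is a reasonable alternative to importing Winter's hiding estimate. Without the conditioning step above and the corrected parameter order, however, binding is not established.
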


\begin{proof}
See Appendix \ref{App_th}.
\end{proof}

\subsection{Induced classical channel}
\label{subsec:induced_classical_channel}

We show how $\ell$ uses of $\mathcal N_q$, together with $c$ maximally
entangled qudit pairs, simulate one use of a classical channel.
The construction uses a subspace $L$ to group Heisenberg--Weyl operator
indices that have the same effect on the encoded states.

We will use the following lemma, which is a consequence of \cite{ketkar2006nonbinary},  to construct the channel inputs.
\begin{lemma}[Adapted from \cite{ketkar2006nonbinary}] 
\label{lem:self_orthogonal_joint_eigenspace}
Let $c\in\{0,\ldots,\ell\}$, and let
$L\subseteq(\mathbb F_d^2)^\ell$ be an $\mathbb F_d$-linear subspace
satisfying
$
L\subseteq L^\perp
$
and
$
\dim_{\mathbb F_d}L=\ell-c.
$
There exist  $\{\lambda_v:v\in L\}$ and 
$\mathcal Q_L\subseteq(\mathbb C^d)^{\otimes\ell}$ such that
$W_v|\psi\rangle
=
\lambda_v|\psi\rangle$, 
$v\in L$, $|\psi\rangle\in\mathcal Q_L$,
and
$
\dim\mathcal Q_L=d^c.
$
Moreover, the orthogonal projector onto $\mathcal Q_L$ is
$
P_L
=
\frac{1}{|L|}
\sum_{v\in L}\lambda_v^{-1}W_v.
$
\end{lemma}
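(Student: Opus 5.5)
The plan is the standard stabilizer-formalism argument. First, I will record the commutation relation of the Heisenberg--Weyl operators in the form
\[
W_uW_z=\chi([u,z])W_zW_u,\qquad u,z\in(\mathbb F_d^2)^\ell .
\]
This is checked componentwise from $Z(b)X(a)=\chi(ab)X(a)Z(b)$. Since $L\subseteq L^\perp$, the operators $\{W_v:v\in L\}$ pairwise commute. They also compose as $W_vW_w=\omega(v,w)W_{v+w}$ for a phase $\omega$. Hence the group generated by them is abelian and consists of scalar multiples of the $W_v$.

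Next I will choose phases $\lambda_v$ so that $v\mapsto\lambda_v^{-1}W_v$ is a genuine unitary representation of the additive group $L$. Concretely, I need $\lambda_{v+w}^{-1}W_{v+w}=\lambda_v^{-1}W_v\lambda_w^{-1}W_w$, i.e.\ the cocycle $\omega$ restricted to $L$ must be a coboundary. For this step:
\begin{itemize}
\item Fix an $\mathbb F_p$-basis $g_1,\dots,g_m$ of $L$, with $m=r(\ell-c)$.
\item For each $i$, pick a phase $\mu_i$ with $(\mu_i^{-1}W_{g_i})^p=I$. This is possible since $W_{g}^p$ is a scalar (for odd $p$ it is $I$; for $p=2$ it may be $\pm I$, so take $\mu_i\in\{1,\mathrm i\}$).
\item Define $\lambda_v$ for $v=\sum n_ig_i$ by setting $\lambda_v^{-1}W_v=\prod_i(\mu_i^{-1}W_{g_i})^{n_i}$. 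This is well defined because the factors commute and have order dividing $p$.
\end{itemize}
Then $U_v\triangleq\lambda_v^{-1}W_v$ is a representation of $L$. A standard fact gives that $P_L=|L|^{-1}\sum_{v\in L}U_v$ is the orthogonal projector onto the common $+1$ eigenspace $\mathcal Q_L$. Indeed, $P_L^2=P_L$ by the group property, $P_L^\dagger=P_L$ since $U_v^\dagger=U_{-v}$, and $U_wP_L=P_L$. So every $|\psi\rangle\in\mathcal Q_L$ satisfies $W_v|\psi\rangle=\lambda_v|\psi\rangle$.

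The dimension follows from the trace:
\[
\dim\mathcal Q_L=\operatorname{Tr}P_L=|L|^{-1}\sum_{v\in L}\lambda_v^{-1}\operatorname{Tr}W_v .
\]
Here $\operatorname{Tr}W_v=0$ unless $v=0$, since each factor $\operatorname{Tr}X(a)Z(b)=\delta_{a,0}\sum_j\chi(bj)$ vanishes unless $a=b=0$ by nontriviality of the character $\chi$. I also need $\lambda_0=1$, which holds by construction. Hence $\dim\mathcal Q_L=d^\ell/|L|=d^\ell/d^{\ell-c}=d^c$.

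The main obstacle is the phase-consistency step for $p=2$. There I must ensure that the chosen $\lambda_v$ make the family a representation, which in particular requires $-I\notin\{U_v\}$. This is guaranteed precisely because the product is built from commuting order-$p$ generators, but it should be checked that $W_g^2$ is indeed $\pm I$ for Hermitian-up-to-phase Paulis. Alternatively, I will simply cite the stabilizer construction of \cite{ketkar2006nonbinary}, which supplies exactly these eigenvalues.
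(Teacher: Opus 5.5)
Your proof is correct. Its overall structure is the same as the paper's: self-orthogonality gives commutation, each $W_v$ is rephased into a genuine representation $v\mapsto\lambda_v^{-1}W_v$ of $L$, the group average gives the projector, and the trace count uses $\operatorname{Tr}W_v=0$ for $v\neq0$.

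The one step where you take a genuinely different route is the choice of phases.

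\textbf{The paper's route.} It never constructs the phases. It takes a common eigenvector $|\varphi\rangle$ of the commuting family $\{W_v\}_{v\in L}$ and lets $\lambda_v$ be its eigenvalues. Evaluating $W_vW_w=\zeta(v,w)W_{v+w}$ on $|\varphi\rangle$ then gives $\lambda_v\lambda_w=\zeta(v,w)\lambda_{v+w}$. So the cocycle is automatically trivialized on $L$, and $\lambda_0=1$ comes for free.

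\textbf{Your route.} You build the phases explicitly from an $\mathbb F_p$-basis of $L$ by taking $p$-th roots of the scalars $W_{g_i}^p$. This is more explicit. It also shows that the $p^m=|L|$ possible choices of roots index the $|L|$ joint eigenspaces, each of dimension $d^c$. The cost is the extra check that $W_g^p$ is a scalar.

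\textbf{Your flagged obstacle is not a real one.} For every $p$, the multiplication rule makes $W_g^p$ a phase times $W_{pg}=W_0=I$. Any $p$-th root $\mu_i$ of that phase therefore works, and no separate treatment of $p=2$ is needed. Your direct computation $(X(a)Z(b))^p=\chi(ab)^{p(p-1)/2}I$ does confirm $W_g^2=\pm I$ when $p=2$.

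\textbf{The concern about $-I\in\{U_v\}$ is also unnecessary.} The map $v\mapsto U_v$ is a homomorphism by construction. For $v\neq0$, $U_v$ is traceless and so cannot be a scalar.

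The paper's eigenvector trick is shorter and avoids these checks altogether.
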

\begin{proof}
Lemma \ref{lem:self_orthogonal_joint_eigenspace} is a consequence of \cite[Lem.~9]{ketkar2006nonbinary} and the proof of \cite[Th.~13]{ketkar2006nonbinary}, we include its
proof in Appendix~\ref{App_lem1} for completeness.
\end{proof}

Consider the subspace $\mathcal Q_L$ and its projector $P_L$
provided by Lemma~\ref{lem:self_orthogonal_joint_eigenspace}. Choose an
orthonormal basis
$\{|\phi_j\rangle\}_{j=1}^{d^c}$ of $\mathcal Q_L$, and let
$R\simeq\mathbb C^{d^c}$ have orthonormal basis
$\{|j\rangle^R\}_{j=1}^{d^c}$. Define
$
|\Phi_L\rangle^{A^\ell R}
\triangleq
\frac{1}{\sqrt{d^c}}
\sum_{j=1}^{d^c}
|\phi_j\rangle^{A^\ell}|j\rangle^R,
$
which is maximally entangled between the subspace $\mathcal Q_L$ of
Alice's $\ell$ qudits and Bob's register $R$. We use this state to encode a classical input into the $\ell$ channel uses.
The classical input alphabet is
$
\mathcal X_L\triangleq(\mathbb F_d^2)^\ell/L,
$
the set of cosets of $L$, and we define
$
\pi_L:(\mathbb F_d^2)^\ell\to\mathcal X_L,
\pi_L(x)\mapsto x+L.
$

To send the classical symbol $\bar x=x+L\in\mathcal X_L$, Alice applies
$W_x$ to her system $A^\ell$, thereby preparing
$
|\Phi_{\bar x}^L\rangle^{A^\ell R}
\triangleq
(W_x\otimes I^R)|\Phi_L\rangle^{A^\ell R},
$
and sends $A^\ell$ through the $\ell$ channel uses, while Bob retains $R$.
Let
$
\Psi_{\bar x}^L
\triangleq
|\Phi_{\bar x}^L\rangle\!\langle\Phi_{\bar x}^L|.
$

We now determine the induced classical noise.
Let
$
(F^\ell,Z^\ell)
=
((F_1,\ldots,F_\ell),(Z_1,\ldots,Z_\ell))
$
have distribution $q^{\otimes\ell}$. Conditioned on
$F^\ell=f^\ell$ and $Z^\ell=z$, the channel reveals
$f^\ell$ to Bob and applies $W_z$ to the transmitted system so that, if
Alice's classical input is $\bar x=x+L$, then the joint state of the channel
output and Bob's register $R$ is
$
(W_z\otimes I^R)\Psi_{\bar x}^L
(W_z^\dagger\otimes I^R)
=
\Psi_{(x+z)+L}^L 
=
\Psi_{\bar x+\pi_L(z)}^L$,
where the first equality follows from the Heisenberg--Weyl multiplication
rule \cite[Eq.~(1)]{ketkar2006nonbinary}. We define the  joint distribution  $$q_L(f^\ell,\bar z)
\triangleq
\Prb[F^\ell=f^\ell,\pi_L(Z^\ell)=\bar z] 
=
\sum_{\substack{z:\,\pi_L(z)=\bar z}}
\prod_{i=1}^{\ell}q(f_i,z_i).$$

The next lemma shows that
$\{|\Phi_{\bar x}^L\rangle:\bar x\in\mathcal X_L\}$ is an orthonormal
basis so that Bob can measure the channel output together with $R$
in this basis. Moreover, if $\overline X$ is Alice's classical input and
$\overline Y$ is Bob's measurement outcome, the resulting classical
channel is
$
\overline Y
=
\overline X+\pi_L(Z^\ell),
$
with Bob also observing $F^\ell$.

\begin{lemma}
\label{lem:stabilizer_code_flagged_weyl_reduction}
$\{|\Phi_{\bar x}^L\rangle:\bar x\in\mathcal X_L\}$ is an orthonormal
basis of
$(\mathbb C^d)^{\otimes\ell}\otimes R$.

Suppose that Alice selects $\bar x=x+L$, applies $W_x$ to her share of
$|\Phi_L\rangle$, and sends her $\ell$ qudits through
$\mathcal N_q^{\otimes\ell}$. If Bob measures the channel output and $R$
in the basis $\{|\Phi_{\bar y}^L\rangle\}_{\bar y\in\mathcal X_L}$, then
his measurement outcome $\overline Y$ satisfies
$
\overline Y
=
\overline X+\pi_L(Z^\ell).
$
Together with the channel output $F^\ell$, this gives the classical
transition probabilities
\begin{equation}
W_{L,q}(f^\ell,\bar y|\bar x)
=
q_L(f^\ell,\bar y-\bar x).
\label{eq:stabilizer_quotient_additive_channel}
\end{equation}
If $\overline X$ is uniform on $\mathcal X_L$ and independent of
$(F^\ell,Z^\ell)$, then
\begin{equation*}
H(\overline X|F^\ell,\overline Y)
=
H\left(\pi_L(Z^\ell)\middle|F^\ell\right).
\end{equation*}
\end{lemma}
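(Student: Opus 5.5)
I will prove the three claims in order: orthonormality and completeness of the basis, the induced additive channel, and the conditional-entropy identity.

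\emph{Orthonormality and completeness.} For $\bar x=x+L$ and $\bar y=y+L$, I will compute
\[
\langle\Phi_{\bar y}^L|\Phi_{\bar x}^L\rangle
=\frac{1}{d^c}\sum_{j}\langle\phi_j|W_y^\dagger W_x|\phi_j\rangle
=\frac{1}{d^c}\operatorname{Tr}\bigl[P_L W_y^\dagger W_x\bigr].
\]
This holds because the partial inner product over $R$ collapses the double sum to a trace against the projector onto $\mathcal Q_L$. By the multiplication rule, $W_y^\dagger W_x$ equals $W_{x-y}$ up to a phase. Substituting $P_L=|L|^{-1}\sum_{v\in L}\lambda_v^{-1}W_v$ from Lemma~\ref{lem:self_orthogonal_joint_eigenspace} gives a sum of traces $\operatorname{Tr}[W_vW_{x-y}]$. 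Each such trace vanishes unless $v+(x-y)=0$, since Heisenberg--Weyl operators other than the identity are traceless. Hence the inner product is zero unless $x-y\in L$, that is, unless $\bar x=\bar y$.

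When $\bar x=\bar y$, the state is well defined up to a global phase. Indeed, for $v\in L$ we have $W_{x+v}=(\text{phase})W_xW_v$, and $W_v$ acts as the scalar $\lambda_v$ on $\mathcal Q_L$. So $\Psi_{\bar x}^L$ depends only on the coset, and the norm is $1$. For completeness I will count dimensions. We have $|\mathcal X_L|=d^{2\ell}/|L|=d^{2\ell-(\ell-c)}=d^{\ell+c}$, which equals $\dim\bigl((\mathbb C^d)^{\otimes\ell}\otimes R\bigr)=d^\ell d^c$. So $d^{\ell+c}$ orthonormal vectors form a basis.

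\emph{Induced channel.} Conditioned on $F^\ell=f^\ell$ and $Z^\ell=z$, the joint output state is $\Psi^L_{\bar x+\pi_L(z)}$, as derived before the lemma. Measuring in the orthonormal basis $\{|\Phi^L_{\bar y}\rangle\}$ therefore returns $\bar y=\bar x+\pi_L(z)$ with probability one. Averaging over $(F^\ell,Z^\ell)\sim q^{\otimes\ell}$ and grouping the values of $z$ by coset gives
\[
W_{L,q}(f^\ell,\bar y|\bar x)=\Prb\bigl[F^\ell=f^\ell,\ \pi_L(Z^\ell)=\bar y-\bar x\bigr]=q_L(f^\ell,\bar y-\bar x).
\]

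\emph{Entropy identity.} Let $\overline X$ be uniform and independent of $(F^\ell,Z^\ell)$. Given $F^\ell=f^\ell$, the map $\overline X\mapsto\overline Y=\overline X+\pi_L(Z^\ell)$ is a group translation by a value independent of $\overline X$. Hence $\overline Y$ is uniform on $\mathcal X_L$ given $F^\ell$, and $H(\overline Y|F^\ell,\overline X)=H(\pi_L(Z^\ell)|F^\ell)$. I will then write
\[
H(\overline X|F^\ell,\overline Y)=H(\overline X|F^\ell)+H(\overline Y|F^\ell,\overline X)-H(\overline Y|F^\ell).
\]
Since $H(\overline X|F^\ell)=\log|\mathcal X_L|=H(\overline Y|F^\ell)$, the first and last terms cancel, leaving $H(\pi_L(Z^\ell)|F^\ell)$.

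\emph{Main obstacle.} The only delicate step is the trace computation. It requires tracking the phases $\lambda_v$ and the phase in $W_y^\dagger W_x\propto W_{x-y}$, and checking that for $d=p^r$ the operators $W_z$ with $z\neq0$ are traceless under the character $\chi$. Tracelessness holds because $\sum_j\chi(bj)=0$ for $b\neq0$, and $X(a)$ has zero diagonal for $a\neq0$. Everything else is bookkeeping.
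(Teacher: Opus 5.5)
Your proposal is correct and follows the paper's proof essentially step by step: the same trace computation against $P_L$ using tracelessness of nonidentity Heisenberg--Weyl operators, the same dimension count $d^{\ell+c}$, and the same coset-shift description of the induced channel. The only cosmetic differences are that you explicitly check that $\Psi_{\bar x}^L$ does not depend on the coset representative (the paper simply fixes one), and that you get the entropy identity from the chain rule with $\overline Y$ uniform given $F^\ell$, whereas the paper uses $\overline X=\overline Y-\pi_L(Z^\ell)$ together with the conditional independence of $\overline Y$ and $\pi_L(Z^\ell)$ given $F^\ell$.
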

\begin{proof}
We first prove that the vectors $\{|\Phi_{\bar x}^L\rangle:\bar x\in\mathcal X_L\}$ are orthonormal. For each
$\bar x\in\mathcal X_L$, fix one representative $x\in(\mathbb F_d^2)^\ell$.
Since $W_x$ is unitary, $|\Phi_{\bar x}^L\rangle$ is a unit vector. Moreover, if $\bar x=x+L$ and $\bar y=y+L$ are distinct, then 
\begin{align*}
\langle\Phi_{\bar x}^L|\Phi_{\bar y}^L\rangle
&=
\langle\Phi_L|
(W_x^\dagger W_y\otimes I^R)
|\Phi_L\rangle
\\
&=
\operatorname{Tr} \left(
(W_x^\dagger W_y\operatorname{Tr}_R \left(
|\Phi_L\rangle \langle\Phi_L| \right) \right)
\\
&\overset{(a)}{=}
\frac{1}{d^c}
\operatorname{Tr}
\left(
P_LW_x^\dagger W_y
\right)
\\
&=
\frac{1}{d^c|L|}
\sum_{v\in L}
\lambda_v^{-1}
\operatorname{Tr}
\left(
W_vW_x^\dagger W_y
\right)
\\
&\overset{(b)}{=}
\frac{1}{d^c|L|}
\sum_{v\in L}
\alpha(v,x,y)
\operatorname{Tr}
\left(
W_{v+y-x}
\right)
\\
&\overset{(c)}{=}
0,
\end{align*}
where $(a)$ holds  because 
$\operatorname{Tr}_R \left(
|\Phi_L\rangle \langle\Phi_L| \right) = 
P_L/d^c$ with $P_L
=
|L|^{-1}\sum_{v\in L}\lambda_v^{-1}W_v
$ from Lemma~\ref{lem:self_orthogonal_joint_eigenspace},  $(b)$ follows from  
\cite[Eq.~(1)]{ketkar2006nonbinary} for some $\alpha(v,x,y)$, $(c)$ holds because, since $y-x\notin L$, we
have $v+y-x\neq0$ for every $v\in L$.   Finally, the number of these orthonormal vectors satisfies $ |\mathcal X_L|
=
\frac{|(\mathbb F_d^2)^\ell|}{|L|}
=
\frac{d^{2\ell}}{d^{\ell-c}}
=
d^{\ell+c}
=
d^\ell\dim R
=
\dim\left(
(\mathbb C^d)^{\otimes\ell}\otimes R
\right).$

We next determine the classical channel induced by measuring in this
basis. Conditioned on $Z^\ell=z$, the physical channel applies $W_z$,
therefore,
\begin{align*}
(W_z\otimes I^R)|\Phi_{\bar x}^L\rangle
&=
(W_zW_x\otimes I^R)|\Phi_L\rangle
\\
&=
\mathrm e^{\mathrm i\vartheta(x,z)}
(W_{x+z}\otimes I^R)|\Phi_L\rangle
\\
&=
\mathrm e^{\mathrm i\vartheta(x,z)}
|\Phi_{(x+z)+L}^L\rangle
\\
&=
\mathrm e^{\mathrm i\vartheta(x,z)}
|\Phi_{\bar x+\pi_L(z)}^L\rangle,
\end{align*}
where the second equality holds by  
\cite[Eq.~(1)]{ketkar2006nonbinary} with
$\mathrm e^{\mathrm i\vartheta(x,z)}$  an irrelevant  phase.
Consequently, if Alice inputs $\overline X=\bar x$ and Bob measures the
channel output and $R$ in the   basis $\{|\Phi_{\bar y}^L\rangle\}_{\bar y\in\mathcal X_L}$, then
$
\overline Y=\bar x+\pi_L(z)
$
conditioned on $Z^\ell=z$. Hence,
\begin{align}
\Prb[
F^\ell=f^\ell,\overline Y=\bar y
\mid
\overline X=\bar x
]
& = \sum_{z\in(\mathbb F_d^2)^\ell}
\Prb\!\left[
F^\ell=f^\ell,Z^\ell=z
\,\middle|\,
\overline X=\bar x
\right]
\Prb\!\left[
\overline Y=\bar y
\,\middle|\,
\overline X=\bar x,Z^\ell=z,F^\ell=f^\ell
\right] \notag \\
&=
\sum_{z\in(\mathbb F_d^2)^\ell}
\left(
\prod_{i=1}^{\ell}q(f_i,z_i)
\right)
\mathds 1
\left\{
\bar y=\bar x+\pi_L(z)
\right\}
\notag\\
&=
\sum_{\substack{z\in(\mathbb F_d^2)^\ell:\\
\pi_L(z)=\bar y-\bar x}}
\prod_{i=1}^{\ell}q(f_i,z_i)
\notag\\
&=
q_L(f^\ell,\bar y-\bar x),\notag
\end{align}
where the second equality holds because $(F^\ell,Z^\ell)$ is independent  of the input $\overline X$ and the  $\ell$ channel uses are independent.

Finally, we have
\begin{align}
H(\overline X\mid F^\ell,\overline Y)
&\overset{(a)}{=}
H(\pi_L(Z^\ell) \mid F^\ell,\overline Y)
\notag\\
&\overset{(b)}{=}
H(\pi_L(Z^\ell)\mid F^\ell), \notag
\end{align}
where $(a)$ holds because 
$
\overline X=\overline Y-\pi_L(Z^\ell)
$, $(b)$ holds because for every $(f^\ell,\bar y,\bar n)$ with
$\Prb[F^\ell=f^\ell]>0$,
\begin{align*}
\Prb[
\overline Y=\bar y,\pi_L(Z^\ell)=\bar n
\mid
F^\ell=f^\ell
]
&=
\Prb[
\pi_L(Z^\ell)=\bar n
\mid
F^\ell=f^\ell
]
\Prb[
\overline X=\bar y-\bar n
]
\\
&=
\frac{1}{|\mathcal X_L|}
\Prb[
\pi_L(Z^\ell)=\bar n
\mid
F^\ell=f^\ell
]
\\
&=
\Prb[
\overline Y=\bar y
\mid
F^\ell=f^\ell
]
\Prb[
\pi_L(Z^\ell)=\bar n
\mid
F^\ell=f^\ell
],
\end{align*}
where the first equality  holds because $\overline X$  is independent of
$(F^\ell,\pi_L(Z^\ell))$, and the second equality holds because $\overline X$ is uniform over~$\mathcal X_L$.
\end{proof}

\begin{remark}
When $c=0$, $\dim\mathcal Q_L=1$ and $R\simeq\mathbb C$, hence, the encoded states
$
\{|\Phi_{\bar x}^L\rangle:\bar x\in\mathcal X_L\}
$
are states of Alice's $\ell$ qudits only, and the construction uses no
preshared entanglement. When $c=\ell$, we have
$
L=\{0\},
$
$
\mathcal Q_L=(\mathbb C^d)^{\otimes\ell},
$
and
$
R\simeq(\mathbb C^d)^{\otimes\ell},
$
thus, choosing matching computational product bases for $A^\ell$ and
$R=R_1\cdots R_\ell$, the state $|\Phi_L\rangle^{A^\ell R}$ is the tensor
product of $\ell$ maximally entangled qudit pairs. With this choice,
$
\mathcal X_L=(\mathbb F_d^2)^\ell,
$
and the encoded family
$
\{|\Phi_x^L\rangle=(W_x\otimes I^R)|\Phi_L\rangle:
x\in(\mathbb F_d^2)^\ell\}
$
is the $\ell$-fold tensor product of the qudit Bell basis, e.g.,~\cite[Eq.~(3.234)]{wilde2013quantum}.
\end{remark}

The following lemma
  shows that  the uniform-input equivocation of  the induced classical
channel approaches the
desired rate.

\begin{lemma}
\label{lem:conditional_random_stabilizer_hashing}
Let $(F^\ell,Z^\ell)\sim q^{\otimes\ell}$, where
$Z_i\in\mathbb F_d^2$. Let
$c_\ell\in\{0,\ldots,\ell\}$ satisfy
$\lim_{\ell\to\infty}\frac{c_\ell}{\ell}=e$, $
0\leq e\leq1$.
There exists a sequence of self-orthogonal subspaces
$L_\ell\subseteq\mathbb F_d^{2\ell}$ with
$\dim L_\ell=\ell-c_\ell$ such that
\begin{equation*}
\lim_{\ell\to\infty}
\frac{1}{\ell}
H\left(
\pi_{L_\ell}(Z^\ell)
\middle|
F^\ell
\right)
=
\min\{H(Z|F),(1+e)\log_2d\}.
\end{equation*}
\end{lemma}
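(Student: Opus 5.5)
The upper bound holds for every choice of $L_\ell$. Since $\pi_{L_\ell}(Z^\ell)$ is a function of $Z^\ell$,
\[
H(\pi_{L_\ell}(Z^\ell)|F^\ell)\le H(Z^\ell|F^\ell)=\ell H(Z|F).
\]
Since $\pi_{L_\ell}(Z^\ell)$ takes values in $\mathcal X_{L_\ell}$ with $|\mathcal X_{L_\ell}|=d^{\ell+c_\ell}$,
\[
H(\pi_{L_\ell}(Z^\ell)|F^\ell)\le(\ell+c_\ell)\log_2 d .
\]
Dividing by $\ell$ and using $c_\ell/\ell\to e$ shows that the $\limsup$ is at most $\min\{H(Z|F),(1+e)\log_2d\}$. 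It therefore suffices to produce self-orthogonal $L_\ell$ of dimension $\ell-c_\ell$ attaining the $\liminf$ lower bound. We do this by a random-coding (hashing) argument over a uniformly random self-orthogonal subspace.

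For the lower bound, let $L$ be uniform over all self-orthogonal $\mathbb F_d$-subspaces of $(\mathbb F_d^2)^\ell$ of dimension $\ell-c_\ell$; these exist since $\ell-c_\ell\le\ell$, i.e.\ isotropic subspaces of the symplectic space. The key pairwise property is that for any fixed nonzero $w$, $\Pr[w\in L]\le d^{\ell-c_\ell}/d^{2\ell}\cdot\kappa$ with $\kappa$ at most polynomial, in fact $\approx d^{-(\ell+c_\ell)}$. This holds because the symplectic group acts transitively on nonzero vectors, so $\Pr[w\in L]=(|L|-1)/(d^{2\ell}-1)\le d^{-(\ell+c_\ell)}$. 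We then bound the conditional entropy from below by a typicality/decoding argument. Fix $\delta>0$ and let $T_{f^\ell}$ be the conditionally typical set of $z$ given $f^\ell$, so that $|T_{f^\ell}|\le 2^{\ell(H(Z|F)+\delta)}$ and $\Pr[Z^\ell\in T_{F^\ell}]\to1$. We distinguish two regimes.

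Case (i): $H(Z|F)<(1+e)\log_2 d$. We show that $Z^\ell$ is recoverable from $(F^\ell,\pi_L(Z^\ell))$ with small error, so $H(\pi_L(Z^\ell)|F^\ell)\ge H(Z^\ell|F^\ell)-H(Z^\ell|F^\ell,\pi_L(Z^\ell))\ge \ell H(Z|F)-o(\ell)$ by Fano. The decoder outputs the unique typical $z'$ with $z'-z\in L$. Its expected error over $L$ is at most $\Pr[Z^\ell\notin T]+|T_{f^\ell}|\,d^{-(\ell+c_\ell)}$, which vanishes when $H(Z|F)+\delta<(1+e)\log_2d$.

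Case (ii): $H(Z|F)\ge(1+e)\log_2d$. Here we need the quotient to be nearly uniform. Write $H(\pi_L(Z^\ell)|F^\ell)\ge H_2$ conditional collision entropy. The expected collision probability is
\[
\mathbb E_L\sum_{\bar z}\Pr[\pi_L(Z^\ell)=\bar z|f^\ell]^2\le \sum_z P(z|f^\ell)^2+d^{-(\ell+c_\ell)},
\]
by the pairwise bound, i.e.\ a leftover-hash estimate. After restricting to typical $z$ (smoothing, where $\sum_{z\in T}P^2\le2^{-\ell(H(Z|F)-\delta)}$), this gives $H(\pi_L(Z^\ell)|F^\ell)\ge \ell(1+e)\log_2 d-o(\ell)$ in expectation. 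Passing from collision entropy of a smoothed distribution to Shannon entropy, while absorbing the atypical mass, is routine but needs care. One option is a direct argument: the probability that a typical $z$ collides with another typical $z'$ is small on average, and entropy is at least $-\log$ of the collision probability of the typical part minus a continuity loss.

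In both cases, $\mathbb E_L[\cdot]/\ell$ exceeds the target minus $O(\delta)$. Hence for each $\ell$ some fixed $L_\ell$ does at least as well as the average. Letting $\delta\to0$ slowly with $\ell$ and combining with the upper bound gives the limit. The main obstacle is the pairwise probability estimate for random self-orthogonal subspaces. I would establish it through transitivity of $\mathrm{Sp}(2\ell,\mathbb F_d)$ on nonzero vectors together with invariance of the uniform distribution on isotropic subspaces of fixed dimension. A secondary obstacle is the smoothing step in case (ii).
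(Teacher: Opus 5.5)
Your proposal is correct. In the regime $H(Z|F)\ge(1+e)\log_2 d$ it is essentially the paper's argument. Both use a uniformly random self-orthogonal $L$ and the exact pairwise probability $\Pr_L[z-z'\in L]=(d^{\ell-c_\ell}-1)/(d^{2\ell}-1)<d^{-(\ell+c_\ell)}$, which follows from transitivity of the symplectic group on nonzero vectors. Both then apply a leftover-hash collision estimate to a smoothed conditional distribution, use $H\ge H_2$, and finish with Jensen over $f^\ell$.

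The real difference is your case split. For $H(Z|F)<(1+e)\log_2d$ you use typical-set syndrome decoding plus Fano. The paper needs no split: the single bound $\kappa_\ell^{-1}2^{-\ell H(Z|F)+2a_\ell}+d^{-(\ell+c_\ell)}$, combined with $-\log_2(u+v)\ge\min\{-\log_2u,-\log_2v\}-1$, yields $\min\{\ell H(Z|F)-o(\ell),(\ell+c_\ell)\log_2 d\}$ in both regimes at once. Your case (ii) computation already delivers this, so case (i) is redundant for the lemma. It does, however, give the stronger operational fact $H(Z^\ell|F^\ell,\pi_L(Z^\ell))=o(\ell)$.

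The smoothing step you leave as ``routine but needs care'' is closed in the paper without any continuity bound. The paper conditions on a typicality indicator $T_\ell$ and uses $H(\pi_L(Z^\ell)|F^\ell)\ge\Pr[T_\ell=1]\,H(\pi_L(Z^\ell)|F^\ell,T_\ell=1)$. The Chebyshev bound $\tau(f^\ell)\ge\kappa_\ell$ controls the renormalization of the point masses.

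Finally, the paper's slack $a_\ell=\ell^{2/3}=o(\ell)$ produces one sequence $L_\ell$ directly. With your fixed-$\delta$ typicality, you should instead take, for example, a maximizing $L_\ell$ for each $\ell$, so that the liminf bound holds for every $\delta$ simultaneously.
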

\begin{proof}Let $\mathcal L_\ell$ be the set of self-orthogonal subspaces of
$(\mathbb F_d^2)^\ell$ having dimension $\ell-c_\ell$, and choose $L$
uniformly from $\mathcal L_\ell$. Set
$
a_\ell\triangleq\ell^{2/3}, 
$
then, for every $f^\ell$ with positive probability, define
$$
\mathcal T(f^\ell)
\triangleq
\{z^\ell:
-\log_2P_{Z^\ell|F^\ell}(z^\ell|f^\ell)
\geq
\textstyle\sum_{i=1}^{\ell}H(Z|F=f_i)-a_\ell\},
$$
$
\tau(f^\ell)
\triangleq
\Prb[Z^\ell\in\mathcal T(f^\ell)|F^\ell=f^\ell],
$
$
V_{\max}
\triangleq
\max_{f:P_F(f)>0}
\operatorname{Var}_{Z\sim P_{Z|F}(\cdot|f)}
[-\log_2P_{Z|F}(Z|f)],
$
and
$
\kappa_\ell
\triangleq
1-\ell V_{\max}/a_\ell^2,
$
and
$$
T_\ell
\triangleq
\mathds{1}\{\textstyle\sum_{i=1}^{\ell}H(Z|F=F_i)
\geq\ell H(Z|F)-a_\ell\} \times 
\mathds{1}\{Z^\ell\in\mathcal T(F^\ell)\}.
$$
For all sufficiently large $\ell$, we have
\begin{align}
\Prb[T_\ell=1]
&=
\sum_{\substack{f^\ell:\\
\sum_{i=1}^{\ell}H(Z|F=f_i)
\geq
\ell H(Z|F)-a_\ell}}
\!\!\!\!\!\!\!\!\!\!\!\!\!\!\!\!\!\!\!\!P_{F^\ell}(f^\ell)
\Prb[
Z^\ell\in\mathcal T(f^\ell)
\mid
F^\ell=f^\ell
]
\notag\\
&=
\sum_{\substack{f^\ell:\\
\sum_{i=1}^{\ell}H(Z|F=f_i)
\geq
\ell H(Z|F)-a_\ell}}
\!\!\!\!\!\!\!\!\!\!\!\!\!\!\!\!\!\!\!\!P_{F^\ell}(f^\ell)
\Prb\left[
-\log_2P_{Z^\ell|F^\ell}(Z^\ell|f^\ell)
\geq
\textstyle\sum_{i=1}^{\ell}H(Z|F=f_i)-a_\ell
\,\middle|\,
F^\ell=f^\ell
\right]
\notag\\
&\overset{(a)}{\geq}
\kappa_\ell
\Prb\left[
\textstyle\sum_{i=1}^{\ell}H(Z|F=F_i)
\geq
\ell H(Z|F)-a_\ell
\right]
\notag\\
&=
\kappa_\ell
\left(
1-
\Prb\left[
\textstyle\sum_{i=1}^{\ell}H(Z|F=F_i)
<
\ell H(Z|F)-a_\ell
\right]
\right)
\notag\\
&\geq
\kappa_\ell
\left(
1-
\Prb\left[
\left|
\textstyle\sum_{i=1}^{\ell}H(Z|F=F_i)
-
\ell H(Z|F)
\right|
>
a_\ell
\right]
\right)
\notag\\
&\overset{(b)}{\geq}
\kappa_\ell
\left(
1-
\frac{
\ell\operatorname{Var}_{f\sim P_F}
[H(Z|F=f)]
}{
a_\ell^2
}
\right)
\notag\\
&\xrightarrow{\ell\to\infty}
1,
\label{eq:global_conditional_smoothing_probability}
\end{align}
where $(a)$ holds by Chebyshev's inequality because memorylessness gives
$-\log_2P_{Z^\ell|F^\ell}(Z^\ell|f^\ell)
=\sum_{i=1}^{\ell}-\log_2P_{Z|F}(Z_i|f_i)$, where the terms are independent, with respective means $H(Z|F=f_i)$ and
variances at most $V_{\max}$, $(b)$ holds by Chebyshev's
inequality because the random variables $H(Z|F=F_i)$ are independent and
identically distributed with mean
$
\sum_fP_F(f)H(Z|F=f)=H(Z|F).
$

Conditioned on $F^\ell=f^\ell$ and $T_\ell=1$, let
$\widetilde Z_{f^\ell}$ have the conditional distribution of $Z^\ell$,
that is,
$\Prb[\widetilde Z_{f^\ell}=z^\ell]
=\Prb[Z^\ell=z^\ell\mid F^\ell=f^\ell,T_\ell=1]$,
and let $\widetilde Z_{f^\ell}'$ be an independent copy. Then, we have
\begin{align}
\E_L\left[
\sum_{x\in\mathcal X_L}
\Prb\left[
\pi_L(\widetilde Z_{f^\ell})=x
\right]^2
\right]
&\overset{(a)}{=}
\Prb\left[
\pi_L(\widetilde Z_{f^\ell})
=
\pi_L(\widetilde Z_{f^\ell}')
\right]
\notag\\
&=
\Prb\left[
\widetilde Z_{f^\ell}
=
\widetilde Z_{f^\ell}'
\right]
+
\Prb\left[
\widetilde Z_{f^\ell}
\neq
\widetilde Z_{f^\ell}',
\,
\pi_L(\widetilde Z_{f^\ell})
=
\pi_L(\widetilde Z_{f^\ell}')
\right]
\notag\\
&\overset{(b)}{=}
\sum_{z^\ell}
\Prb\left[
\widetilde Z_{f^\ell}=z^\ell
\right]^2
+
\frac{d^{\ell-c_\ell}-1}{d^{2\ell}-1}
\left(
1-
\sum_{z^\ell}
\Prb\left[
\widetilde Z_{f^\ell}=z^\ell
\right]^2
\right)
\notag\\
&\leq
\sum_{z^\ell}
\Prb\left[
\widetilde Z_{f^\ell}=z^\ell
\right]^2
+
\frac{d^{\ell-c_\ell}-1}{d^{2\ell}-1}
\notag\\
&\overset{(c)}{<}
\sum_{z^\ell}
\Prb\left[
\widetilde Z_{f^\ell}=z^\ell
\right]^2
+
d^{-(\ell+c_\ell)}
\notag\\
&\leq
\max_{z^\ell}
\Prb\left[
\widetilde Z_{f^\ell}=z^\ell
\right]
+
d^{-(\ell+c_\ell)}
\notag\\
&=
\frac{1}{\tau(f^\ell)}
\max_{z^\ell\in\mathcal T(f^\ell)}
P_{Z^\ell|F^\ell}(z^\ell|f^\ell)
+
d^{-(\ell+c_\ell)}
\notag\\
&\overset{(d)}{\leq}
\kappa_\ell^{-1}
2^{-\sum_{i=1}^{\ell}H(Z|F=f_i)+a_\ell}
+
d^{-(\ell+c_\ell)}
\notag\\
&\overset{(e)}{\leq}
\kappa_\ell^{-1}
2^{-\ell H(Z|F)+2a_\ell}
+
d^{-(\ell+c_\ell)},
\label{eq:conditional_average_quotient_collision}
\end{align}
where $(a)$ holds because, for each fixed $L$, the random variables
$\pi_L(\widetilde Z_{f^\ell})$ and
$\pi_L(\widetilde Z_{f^\ell}')$ are independent and identically distributed, $(b)$ holds because for distinct
$z,z'\in\mathbb F_d^{2\ell}$, $\Prb_L[\pi_L(z)=\pi_L(z')]
=
\Prb_L[z-z'\in L]
=
\frac{d^{\ell-c_\ell}-1}{d^{2\ell}-1}$, which follows from
\begin{align}
(d^{2\ell}-1)\Prb_L[z-z'\in L]
&=
\sum_{v\neq0}\Prb[v\in L]
\notag\\
&=
\E_L\left[
\sum_{v\neq0}\mathds{1}\{v\in L\}
\right]
\notag\\
&=
\E_L\left[|L\setminus\{0\}|\right]
\notag\\ \notag
&=
d^{\ell-c_\ell}-1,
\end{align}
where the first equality holds because $(\mathbb F_d^2)^\ell$ contains $d^{2\ell}-1$ nonzero
vectors and, by~\cite[Th.~8]{matsumoto2002lower}, for every pair of nonzero
vectors $v,w\in(\mathbb F_d^2)^\ell$, there exists an invertible linear
map $S$ such that
$Sv=w$
and  $L\mapsto SL$ is a bijection of
$\mathcal L_\ell$, so that  $SL$ is uniform on
$\mathcal L_\ell$ and $\Prb[v\in L]
=
\Prb[Sv\in SL]
=
\Prb[w\in SL]
 =
\Prb[w\in L]$, the fourth equality holds because   $L\in\mathcal L_\ell$ contains
$d^{\ell-c_\ell}-1$ nonzero vectors, $(c)$ holds because $\frac{d^{\ell-c_\ell}-1}{d^{2\ell}-1}<
d^{\ell-c_\ell}/d^{2\ell}=d^{-(\ell+c_\ell)}$, $(d)$ holds because Chebyshev's inequality gives
$\tau(f^\ell)\geq\kappa_\ell$, $(e)$ holds because $T_\ell=1$
implies
$
\sum_{i=1}^{\ell}H(Z|F=f_i)
\geq
\ell H(Z|F)-a_\ell.
$

Averaging
\eqref{eq:conditional_average_quotient_collision} over
$F^\ell$ conditioned on $T_\ell=1$ gives
\begin{align}
\E_L\Bigg[
\sum_{f^\ell}
P_{F^\ell|T_\ell}(f^\ell|1)
\sum_x
\Prb\left[
\pi_L(\widetilde Z_{f^\ell})=x
\right]^2
\Bigg]
&\leq
\kappa_\ell^{-1}
2^{-\ell H(Z|F)+2a_\ell}
+
d^{-(\ell+c_\ell)}.
\label{eq:averaged_selected_stabilizer}
\end{align}
Therefore, there exists
$L_\ell\in\mathcal L_\ell$ for which the same averaged bound holds. For this choice, we have
\begin{align}
\frac{1}{\ell}H\left(
\pi_{L_\ell}(Z^\ell)
\middle|
F^\ell
\right)
&\geq
\frac{\Prb[T_\ell=1]}{\ell}
H\left(
\pi_{L_\ell}(Z^\ell)
\middle|
F^\ell,T_\ell=1
\right)
\notag\\
&=
\frac{\Prb[T_\ell=1]}{\ell}
\sum_{f^\ell}
P_{F^\ell|T_\ell}(f^\ell|1)
H\left(
\pi_{L_\ell}(\widetilde Z_{f^\ell})
\right)
\notag\\
&\overset{(a)}{\geq}
-\frac{\Prb[T_\ell=1]}{\ell}
\sum_{f^\ell}
P_{F^\ell|T_\ell}(f^\ell|1)
\log_2\left(
\sum_{x\in(\mathbb F_d^2)^\ell/L_\ell}
\Prb\left[
\pi_{L_\ell}(\widetilde Z_{f^\ell})=x
\right]^2
\right)
\notag\\
&\overset{(b)}{\geq}
-\frac{\Prb[T_\ell=1]}{\ell}
\log_2\left[
\sum_{f^\ell}
P_{F^\ell|T_\ell}(f^\ell|1)
\sum_{x\in(\mathbb F_d^2)^\ell/L_\ell}
\Prb\left[
\pi_{L_\ell}(\widetilde Z_{f^\ell})=x
\right]^2
\right]
\notag\\
&\overset{(c)}{\geq}
-\frac{\Prb[T_\ell=1]}{\ell}
\log_2\left[
\kappa_\ell^{-1}
2^{-\ell H(Z|F)+2a_\ell}
+
d^{-(\ell+c_\ell)}
\right]
\notag\\
&\overset{(d)}{\geq}
\frac{\Prb[T_\ell=1]}{\ell}
\min\left\{
\ell H(Z|F)-2a_\ell+\log_2\kappa_\ell,
(\ell+c_\ell)\log_2d
\right\}
-\frac{1}{\ell}
\notag\\
&\xrightarrow{\ell\to\infty}
\min\left\{
H(Z|F),(1+e)\log_2d
\right\}, \notag
\end{align}
where $(a)$ holds by \cite[Lemma~2.10.1]{cover1991information}, $(b)$ holds by Jensen's inequality,  $(c)$
holds by  
\eqref{eq:averaged_selected_stabilizer},  $(d)$ holds
because
$
-\log_2(u+v)
\geq
\min\{-\log_2u,-\log_2v\}-1
$
for $u,v>0$, the limit  holds
because
$
\Prb[T_\ell=1]\to1
$ by \eqref{eq:global_conditional_smoothing_probability},  
$
a_\ell=o(\ell),
$
$
\kappa_\ell\to1,
$
and
$
c_\ell/\ell\to e.
$
On the other hand,
\begin{align}
\limsup_{\ell\to\infty}
\frac{1}{\ell}
H\left(
\pi_{L_\ell}(Z^\ell)
\middle|
F^\ell
\right)
&\overset{(a)}{\leq}
\limsup_{\ell\to\infty}
\frac{1}{\ell}
\min\left\{
\log_2|\mathcal X_{L_\ell}|,
H(Z^\ell|F^\ell)
\right\}
\notag\\
&\overset{(b)}{=}
\limsup_{\ell\to\infty}
\min\left\{
\left(1+\frac{c_\ell}{\ell}\right)\log_2d,
H(Z|F)
\right\}
\notag\\
&=
\min\left\{
(1+e)\log_2d,
H(Z|F)
\right\}, \notag
\end{align}
where $(a)$ holds because $\pi_{L_\ell}(Z^\ell)$ is a deterministic
function of $Z^\ell$ taking values in $\mathcal X_{L_\ell}$, $(b)$
holds because
$
|\mathcal X_{L_\ell}|=d^{\ell+c_\ell}
$
and $(F^\ell,Z^\ell)$ is memoryless. 
\end{proof}

\subsection{Nonredundancy of the induced channel}
\label{subsec:additive_channel_nonredundancy}
 We now give a
condition under which the induced channel $W_{L,q}$ is nonredundant. With $\mathcal S_q$ defined in Theorem~\ref{thm:flagged_weyl_capacity}, define $\mathcal S_{q,L}
\triangleq
\left\{
u=(u_1,\ldots,u_\ell)\in L^\perp:
u_i\in\mathcal S_q
\text{ for every }i\in[\ell]
\right\}.$
\begin{lemma}
\label{lem:stabilizer_induced_channel_nondegeneracy}
The induced channel $W_{L,q}$ of
Lemma~\ref{lem:stabilizer_code_flagged_weyl_reduction} is nonredundant if
and only if
\begin{equation}
\operatorname{span}_{\mathbb F_p}\mathcal S_{q,L}
=
L^\perp.
\label{eq:quotient_fourier_nondegeneracy}
\end{equation}
In particular, $\mathcal S_q=\mathbb F_d^2$ guarantees nonredundancy for
every $L$. When $\ell=c=1$, \eqref{eq:quotient_fourier_nondegeneracy} reduces to $\operatorname{span}_{\mathbb F_p}\mathcal S_q
=
\mathbb F_d^2.$
\end{lemma}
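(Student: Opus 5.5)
The plan is to treat $W_{L,q}$ as an additive channel on the finite abelian group $G\triangleq\mathcal X_L=(\mathbb F_d^2)^\ell/L$, with side output $f^\ell$, and to analyze it through Fourier analysis on $G$.

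\emph{Reduction to a translation condition.} By \eqref{eq:stabilizer_quotient_additive_channel}, the rows are translates: $W_{L,q}(f^\ell,\cdot\,|\bar x)=q_L(f^\ell,\cdot-\bar x)$. It therefore suffices to examine the row $\bar x=0$. That row is redundant if and only if there is a probability distribution $\mu$ on $G\setminus\{0\}$ such that
\[
q_L(f^\ell,\bar y)=\sum_{g}\mu(g)\,q_L(f^\ell,\bar y-g)\qquad\text{for all }(f^\ell,\bar y).
\]

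\emph{Identifying the characters of $G$.} The characters of $G$ are $\bar z\mapsto\chi([u,z])$ for $u\in L^\perp$. These maps are well defined on cosets precisely because $[u,v]=0$ for $v\in L$. They are distinct and exhaust the dual group of $G$ because the pairing $(u,z)\mapsto\chi([u,z])$ on $(\mathbb F_d^2)^\ell$ is nondegenerate: the symplectic form is nondegenerate and $\chi$ is a nontrivial additive character of $\mathbb F_d$. A counting check confirms this, since $|L^\perp|=d^{\ell+c}=|G|$.

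I expect this duality step to be the main technical point. It is what makes $L^\perp$ the correct frequency domain and what gives Fourier uniqueness on $G$.

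\emph{Fourier transform of the noise.} For $u\in L^\perp$, memorylessness gives
\[
\widehat{q_L}(f^\ell,u)=\sum_{z}\prod_i q(f_i,z_i)\,\chi([u,z])=\prod_{i=1}^\ell \widehat q_{f_i}(u_i).
\]

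\emph{Applying the transform to the redundancy equation.} Taking the Fourier transform of both sides of the convolution identity gives
\[
\widehat{q_L}(f^\ell,u)=\widehat\mu(u)\,\widehat{q_L}(f^\ell,u)\qquad\text{for all }f^\ell,\ u\in L^\perp.
\]
By Fourier uniqueness this is also sufficient for the convolution identity. Let $U$ be the set of $u\in L^\perp$ for which some $f^\ell$ gives a nonzero product. Each $f_i$ can be chosen independently with $q(f_i)>0$, so a nonzero product exists exactly when every $u_i\in\mathcal S_q$; hence $U=\mathcal S_{q,L}$. Outside $U$ both sides vanish, so the condition becomes $\widehat\mu(u)=1$ for all $u\in\mathcal S_{q,L}$.

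Because $\widehat\mu(u)=\sum_g\mu(g)\chi([u,g])$ is an average of unit-modulus numbers, $\widehat\mu(u)=1$ holds if and only if $\chi([u,g])=1$ for every $g\in\operatorname{supp}\mu$. For fixed $g$, the set $\{u:\chi([u,g])=1\}$ consists of those $u$ with the trace of $[u,g]$ equal to zero, which is an $\mathbb F_p$-subspace. We conclude that $W_{L,q}$ is redundant if and only if some $g\neq0$ in $G$ satisfies $\chi([u,g])=1$ for all $u\in V\triangleq\operatorname{span}_{\mathbb F_p}\mathcal S_{q,L}$. In the "if" direction one simply takes $\mu=\delta_g$.

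\emph{Concluding via annihilators.} For finite abelian groups, the annihilator in $G$ of a subgroup $V$ of the dual group is trivial if and only if $V$ is the whole dual. Here the dual is $L^\perp$, and the $\mathbb F_p$-subspaces of $L^\perp$ are exactly its subgroups. Hence redundancy is equivalent to $V\neq L^\perp$, which proves \eqref{eq:quotient_fourier_nondegeneracy}.

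\emph{Special cases.} If $\mathcal S_q=\mathbb F_d^2$, then $\mathcal S_{q,L}=L^\perp$, so the condition holds for every $L$. If $\ell=c=1$, then $L=\{0\}$, $L^\perp=\mathbb F_d^2$ and $\mathcal S_{q,L}=\mathcal S_q$, which gives the stated reduction.
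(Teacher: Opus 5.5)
Your proof is correct and follows the same route as the paper. Both arguments do Fourier analysis on $\mathcal X_L$ with frequencies indexed by $L^\perp$, use the product formula $\prod_{i}\widehat q_{f_i}(u_i)$ and the fact that a convex combination of unit-modulus numbers equals $1$ only if every positively weighted term does, and reduce the question to whether some nonzero coset $h+L$ satisfies $\chi([u,h])=1$ for all $u\in\mathcal S_{q,L}$. The differences are in packaging. You get both directions at once from annihilator duality for finite abelian groups, whereas the paper proves them separately, by rank--nullity over $\mathbb F_p$ and by $(L^\perp)^\perp=L$. You also state explicitly that $\{\chi([u,\cdot]):u\in L^\perp\}$ is the full character group of $\mathcal X_L$, which the paper uses implicitly when it appeals to Fourier uniqueness.
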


\begin{proof}
 For every $u\in L^\perp$, one can verify that the Fourier coefficient of the
joint distribution $q_L$ of $F^\ell$ and $\pi_L(Z^\ell)$~is
\begin{align}
\sum_{\bar z\in\mathcal X_L}
q_L(f^\ell,\bar z)\chi([u,z])
&=
\prod_{i=1}^{\ell}\widehat q_{f_i}(u_i),
\label{eq:quotient_noise_fourier_coefficient}
\end{align}
so that $u\in\mathcal S_{q,L}$ if and only if $\textstyle\prod_{i=1}^{\ell}\widehat q_{f_i}(u_i)\neq0
\text{ for some }f^\ell$.

Suppose first that
$\operatorname{span}_{\mathbb F_p}\mathcal S_{q,L}\neq L^\perp$.
We prove that $W_{L,q}$ is redundant by finding $h\notin L$ such that
$q_L(f^\ell,\bar z+h+L)=q_L(f^\ell,\bar z)$ for every
$(f^\ell,\bar z)$, which  would imply that the distinct inputs $\bar x$
and $\bar x+h+L$ induce the same output distribution. To this end, it is sufficient to show that
$q_L(f^\ell,\bar z+h+L)$ and $q_L(f^\ell,\bar z)$ have the same Fourier
coefficients. For any $h\in(\mathbb F_d^2)^\ell$ and $u\in L^\perp$, the
Fourier coefficient of the translated distribution is
$\chi(-[u,h])\prod_{i=1}^{\ell}\widehat q_{f_i}(u_i)$. Therefore, it is
sufficient to find $h\notin L$ satisfying
$\chi([u,h])=1$ for every $u\in\mathcal S_{q,L}$, because 
 this condition implies $\chi(-[u,h])=1$ when
$u\in\mathcal S_{q,L}$, while the Fourier coefficient is zero for every
$f^\ell$ when $u\notin\mathcal S_{q,L}$.

We now show that such an $h$ exists. Let
$u_1,\ldots,u_r$ be an $\mathbb F_p$-basis of
$\operatorname{span}_{\mathbb F_p}\mathcal S_{q,L}$, so that
$r=\dim_{\mathbb F_p}\operatorname{span}_{\mathbb F_p}\mathcal S_{q,L}$.
Also, choose an $\mathbb F_p$-basis
$e_1+L,\ldots,e_N+L$ of $(\mathbb F_d^2)^\ell/L$, where
$N=\dim_{\mathbb F_p}((\mathbb F_d^2)^\ell/L)
=\dim_{\mathbb F_p}L^\perp$. Write the desired coset as
$h+L=\sum_{j=1}^{N}x_j(e_j+L)$, where $x_j\in\mathbb F_p$. For every
$k\in[r]$ and $j\in[N]$, let $c_{k,j}\in\mathbb F_p$ be defined by
$\chi([u_k,e_j])=\mathrm e^{2\pi\mathrm i c_{k,j}/p}$. Then,
$\chi([u_k,h])=1$ is equivalent to
$\sum_{j=1}^{N}c_{k,j}x_j=0$ in $\mathbb F_p$, which  gives $r$ homogeneous linear equations in the $N$ unknowns
$x_1,\ldots,x_N$. Since, by $\operatorname{span}_{\mathbb F_p}\mathcal S_{q,L}\neq L^\perp$,  $r<N$, the rank--nullity theorem guarantees a
nonzero solution, which defines a nonzero coset $h+L\neq L$. Finally, every $u\in\mathcal S_{q,L}$ is an $\mathbb F_p$-linear
combination of $u_1,\ldots,u_r$, hence, 
$\chi([u_k,h])=1$ for all $k\in[r]$ imply
$\chi([u,h])=1$ for every $u\in\mathcal S_{q,L}$.

Suppose now that
\eqref{eq:quotient_fourier_nondegeneracy} holds and, toward a
contradiction, assume that $W_{L,q}$ is redundant. Since
$W_{L,q}(f^\ell,\bar y\mid\bar x)
=
q_L(f^\ell,\bar y-\bar x)$, there exist $\lambda_{\bar h}\geq0$, indexed by
$\bar h\in\mathcal X_L\setminus\{L\}$, such that $ \sum_{\bar h\neq L}\lambda_{\bar h}=1$ and for every $(f^\ell,\bar y)$ 
\begin{equation}
q_L(f^\ell,\bar y)
=
\sum_{\bar h\neq L}
\lambda_{\bar h}
q_L(f^\ell,\bar y-\bar h)
\label{eq:redundant_quotient_row}.
\end{equation}
For any $u\in\mathcal S_{q,L}$, choosing $f^\ell$ for which the coefficient
in \eqref{eq:quotient_noise_fourier_coefficient} is nonzero, and taking the
Fourier transform of~\eqref{eq:redundant_quotient_row}  yields $1
=
\sum_{\bar h\neq L}
\lambda_{\bar h}\chi([u,h])$, which expresses $1$ as a
convex combination of complex numbers of modulus one. This is possible
only if every term with positive weight equals one, so that  if
$\lambda_{\bar h}>0$, then $\chi([u,h])=1$ for every
$u\in\mathcal S_{q,L}$. Under
\eqref{eq:quotient_fourier_nondegeneracy}, every $u\in L^\perp$ is an
$\mathbb F_p$-linear combination of vectors in $\mathcal S_{q,L}$, so
$\chi([u,h])=1$ for every $u\in L^\perp$ and, therefore,
$1=\chi([\alpha u,h])=\chi(\alpha[u,h])$ for every
$\alpha\in\mathbb F_d$. If
$[u,h]\neq0$, then $\alpha[u,h]$ ranges over all of $\mathbb F_d$ as $\alpha$
ranges over $\mathbb F_d$, which would imply that $\chi$ is identically
one, which contradicts the definition of $\chi$. Hence, $[u,h]=0$ for every $u\in L^\perp$, so
$h\in(L^\perp)^\perp=L$ so that $\bar h=L$, which contradicts
$\bar h\neq L$. Therefore, $W_{L,q}$ is~nonredundant.

Finally, if $\mathcal S_q=\mathbb F_d^2$, then
$\mathcal S_{q,L}=L^\perp$ and when $\ell=c=1$, we have $L=\{0\}$.
\end{proof}
\subsection{Quantum-to-classical security reduction}
\label{subsec:common_security_reduction}

We now show that a classical commitment code for $W_{L,q}$ remains secure
when each use of this channel is implemented by the quantum protocol
of Lemma~\ref{lem:stabilizer_code_flagged_weyl_reduction}.

\begin{lemma}
\label{lem:extractable_classical_reduction}
Suppose that $W_{L,q}$ is nonredundant. For any distribution
$P_{\overline X}$ on $\mathcal X_L$, every rate $R<H(\overline X\mid F^\ell,\overline Y)$
per use of $W_{L,q}$ is achievable by the corresponding quantum
protocol.
\end{lemma}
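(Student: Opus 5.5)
The plan is to take a classical commitment code for $W_{L,q}$ from the construction behind Theorem~\ref{thm:dmc_extractable_commitment_achievability}, applied with the fixed input distribution $P_{\overline X}$. That construction proves achievability of any $R<H(\overline X|F^\ell,\overline Y)$ for a given $P_{\overline X}$, and the maximum over $P_{\overline X}$ is then taken at the end. Each channel symbol is implemented by $\ell$ uses of $\mathcal N_q$ together with $c$ maximally entangled qudit pairs, as in Lemma~\ref{lem:stabilizer_code_flagged_weyl_reduction}. Honest Bob measures each block, jointly with his register $R$, in the basis $\{|\Phi^L_{\bar y}\rangle\}$. He stores the outcomes $(F^\ell,\overline Y)$ in $C_{\mathrm{com}}$ and then runs the classical verifier. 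Correctness follows immediately: by Lemma~\ref{lem:stabilizer_code_flagged_weyl_reduction}, honest parties generate exactly the statistics of the classical code over $W_{L,q}$. For $n/\ell$ blocks, the entanglement used is $\log_2K=(n/\ell)c\log_2 d$.

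\emph{Hiding.} The key step is to show that Bob's whole post-commit quantum view is a fixed function of the classical output $(F^{\ell},\overline Y)$ of every block. Conditioned on $F^\ell$ and $Z^\ell=z$, the joint state of the channel output and $R$ is $\Psi^L_{\bar x+\pi_L(z)}$, up to a phase. Hence the state Bob holds per block equals $\sum_{f^\ell,\bar y}W_{L,q}(f^\ell,\bar y|\bar x)\,|f^\ell\rangle\!\langle f^\ell|\otimes\Psi^L_{\bar y}$. This state is obtained from the classical output by the preparation map $(f^\ell,\bar y)\mapsto |f^\ell\rangle\!\langle f^\ell|\otimes\Psi^L_{\bar y}$, which is independent of $\bar x$. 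Any dishonest Bob strategy therefore composes with this map and acts on the classical channel output, and contractivity of trace distance transfers the classical hiding guarantee. I will need to check that the classical hiding condition of the construction in Theorem~\ref{thm:dmc_extractable_commitment_achievability} is stated against arbitrary processing of the output, for instance as trace distance or statistical distance to a fixed distribution. This check is routine.

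\emph{Binding, the main obstacle.} A dishonest Alice may send arbitrary states entangled across blocks and with her memory $R_A$. The plan is to show that her commit and reveal strategy induces the same joint distribution of $(C_{\mathrm{com}},D_{\mathrm{com}})$ as some classical strategy with a randomized, possibly correlated input $\overline X^{n/\ell}$ against the classical code. First, use the Stinespring isometry \eqref{eq:flagged_weyl_isometry}: the environment holds $(F,Z)$ classically and independently of the input. Next, use the fact that $W_z$ permutes the basis $\{|\Phi^L_{\bar y}\rangle\}$. Bob's projective measurement in this basis, applied after $W_z$, is then equivalent to measuring the input side in the same basis and adding $\pi_L(z)$. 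Since Bob's measurement commutes with Alice's later reveal operations on $R_A$, we may imagine Alice herself measuring the block input jointly with $R$ in this basis before transmission. This is legitimate because the equivalent projector acts on the same systems, with Bob's $R$ included. Doing so yields a classical $\overline X$ that is correlated with $R_A$ and has the same distribution for the measurement outcome and for everything Alice does afterwards. After this step Alice is a classical cheater with side information: she chooses $\overline X$ at random, keeps a post-measurement quantum state, and her reveal measurement on $R_A$ is a randomized classical reveal strategy. The classical extractable binding guarantee gives a decoder $\widetilde S=g(c,\ldots)$. The classical extractor uses the noise $Z$, which here sits in $E^n$, and the outcomes stored in $B_{\mathrm{aux}}$ and $C_{\mathrm{com}}$. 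It is thus a POVM on $B_{\mathrm{aux}}E^n$ selected by $c$ and fixed before the reveal. Making the commutation argument rigorous, including the entanglement Alice holds with $E_B$ beyond $R$, is the step I expect to be hardest. I would first try writing out the joint state via the isometry and then comparing the outcome distributions explicitly.
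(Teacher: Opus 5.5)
Your proposal is correct and follows the paper's proof essentially step for step. Correctness comes from the identical honest statistics. Hiding comes from the fixed preparation map $(f^\ell,\bar y)\mapsto|f^\ell\rangle\!\langle f^\ell|\otimes\Psi^L_{\bar y}$ composed with Bob's post-processing and monotonicity of the trace distance. Binding comes from $W_z^\dagger\Psi^L_{\bar y}W_z=\Psi^L_{\bar y-\pi_L(z)}$: it shows that $(\widetilde X^N,D)$, with $\widetilde X_j=\overline Y_j-\pi_L(Z_j^\ell)$, is distributed as $\operatorname{Tr}[(\Psi^L_{\bar x^N}\otimes\Omega_d^{R_A})\Gamma]$ independently of the channel noise, which defines both the simulating classical cheater and the extractor $\mathsf{Ext}^{\mathrm{cl}}_N(\widetilde X^N)$ on $C_{\mathrm{com}}$ and $E^n$.

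The step you flag as hardest is exactly the paper's explicit trace computation. Your worry about entanglement beyond $R$ is moot: Bob's preshared share is precisely $R^N$, so the joint state $\Gamma^{(A^\ell R)^N R_A}$ just before the channel blocks already accounts for everything.
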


\begin{proof}
Fix a classical commitment code using $N$ copies of $W_{L,q}$. Associate
each classical input $\bar x\in\mathcal X_L$ with the quantum state
$\Psi_{\bar x}^L$, and let Bob perform the measurement from
Lemma~\ref{lem:stabilizer_code_flagged_weyl_reduction}. The resulting
conditional distribution is
$W_{L,q}(f^\ell,\bar y\mid\bar x)
=
q_L(f^\ell,\bar y-\bar x).$ 
Thus, when both parties are honest, the classical and quantum protocols
produce the same joint distribution of
$(F^{\ell N},\overline Y^N)$. Hence, the quantum protocol has the
same correctness error as the classical code.

We next prove hiding.
Conditioned on $F^\ell=f^\ell$ and $Z^\ell=z^\ell$, the channel reveals
$f^\ell$ to Bob and transforms the encoded state according to
$
(W_{z^\ell}\otimes I^R)
\Psi_{\bar x}^L
(W_{z^\ell}^{\dagger}\otimes I^R)
=
\Psi_{\bar x+\pi_L(z^\ell)}^L$, which means that Bob receives the state with label
$\bar y=\bar x+\pi_L(z^\ell)$. Since, by
\eqref{eq:stabilizer_quotient_additive_channel}, the probability of
obtaining the labels $(f^\ell,\bar y)$ is
$W_{L,q}(f^\ell,\bar y\mid\bar x)$, Bob's quantum output for
input $\bar x$ is
\begin{align}
\sum_{f^\ell,\bar y}
W_{L,q}(f^\ell,\bar y\mid\bar x)
|f^\ell\rangle\!\langle f^\ell|
\otimes
\Psi_{\bar y}^L
=
\mathcal P\left(
\sum_{f^\ell,\bar y}
W_{L,q}(f^\ell,\bar y\mid\bar x)
|f^\ell,\bar y\rangle\!\langle f^\ell,\bar y|
\right), \label{eqoneuse}
\end{align}
with $\mathcal P
\left(
|f^\ell,\bar y\rangle\!\langle f^\ell,\bar y|
\right)
\triangleq
|f^\ell\rangle\!\langle f^\ell|
\otimes
\Psi_{\bar y}^L.$ When Alice commits to $s$, the classical commitment code selects
$\overline X^N$ according to
$P_{\overline X^N|S}(\bar x^N|s)$ so that its classical output state is 
\begin{align*}
\rho_s^{(F^\ell\overline Y)^N}
\triangleq
\sum_{\bar x^N}
P_{\overline X^N|S}(\bar x^N|s)
\bigotimes_{j=1}^N
\left(
\sum_{f_j^\ell,\bar y_j}
W_{L,q}(f_j^\ell,\bar y_j\mid\bar x_j)
|f_j^\ell,\bar y_j\rangle
\!\langle f_j^\ell,\bar y_j|
\right),\end{align*}
and, by \eqref{eqoneuse},  Bob's quantum channel
output  is
$\mathcal P^{\otimes N}
\left(
\rho_s^{(F^\ell\overline Y)^N}
\right).$ The hiding property of the classical commitment code gives  $\omega^{(F^\ell\overline Y)^N}$, independent of $s$, such that
$\left\|
\rho_s^{(F^\ell\overline Y)^N}
-
\omega^{(F^\ell\overline Y)^N}
\right\|_1
\leq
\beta_N.$ Fix an arbitrary dishonest receiver strategy
$\mathsf B_{\mathrm{com}}^\star$. Since Bob sends no messages to Alice
during the commit phase, his operations can be represented by a quantum
operation $\mathcal D_{\mathsf B_{\mathrm{com}}^\star}$ applied after the
$N$ channel outputs are produced so that his final state
when Alice commits to $s$ is $\theta_{s,\mathsf B_{\mathrm{com}}^\star}
\triangleq
\mathcal D_{\mathsf B_{\mathrm{com}}^\star}
\left(
\mathcal P^{\otimes N}
\left(
\rho_s^{(F^\ell\overline Y)^N}
\right)
\right).
$
Then, with
$
\tau_{\mathsf B_{\mathrm{com}}^\star}
\triangleq
\mathcal D_{\mathsf B_{\mathrm{com}}^\star}
\left(
\mathcal P^{\otimes N}
\left(
\omega^{(F^\ell\overline Y)^N}
\right)
\right)$, for every $s$, we have
\begin{align*}
\left\|
\theta_{s,\mathsf B_{\mathrm{com}}^\star}
-
\tau_{\mathsf B_{\mathrm{com}}^\star}
\right\|_1
&\overset{(a)}{\leq}
\left\|
\rho_s^{(F^\ell\overline Y)^N}
-
\omega^{(F^\ell\overline Y)^N}
\right\|_1
\\
&\overset{(b)}{\leq}
\beta_N,
\end{align*}
where $(a)$ holds by monotonicity of the trace distance \cite[Eq.~(9.69)]{wilde2013quantum},  $(b)$ holds by the hiding
property of the classical commitment code.

We now prove binding. Fix an arbitrary dishonest quantum commit
strategy. Since Alice receives
no messages from Bob during the commit phase, immediately before the
$N$ channel blocks there is a joint state
$\Gamma^{(A^\ell R)^N R_A}$
of Alice's $N$ channel inputs, Bob's retained registers $R^N$, and
Alice's remaining register $R_A$. For each block, Bob measures the channel output and the corresponding
register $R$ in the basis
$\{|\Phi_{\bar y}^L\rangle\}_{\bar y\in\mathcal X_L}$, obtaining
$\overline Y_j$, and records the flag $F_j^\ell$. Let
$
\mathsf{Ext}^{\mathrm{cl}}_N:
\mathcal X_L^N\to[2^{k_N}]
$
denote the extraction function associated with the classical commitment
code so that
$\mathsf{Ext}^{\mathrm{cl}}_N(\bar x^N)$ is the value extracted by the
classical binding proof when the induced-channel input sequence is
$\bar x^N$. Then, define
\begin{align*}
\Lambda_{\widetilde s,c}^{B_{\mathrm{aux}}E^{\ell N}}
\triangleq
I^{B_{\mathrm{aux}}}
\otimes
\sum_{\substack{f'^{\ell N},z^{\ell N}:\\
\mathsf{Ext}^{\mathrm{cl}}_N\left(
\bar y^N-\pi_L^{\otimes N}(z^{\ell N})
\right)=\widetilde s}}
|f'^{\ell N},z^{\ell N}\rangle
\!\langle f'^{\ell N},z^{\ell N}|,
\end{align*}
which form a POVM that is fixed before Alice chooses her reveal strategy, and whose outcome is
$
\widetilde S
=
\mathsf{Ext}^{\mathrm{cl}}_N(\widetilde X^N)$, with $\widetilde X_j
=
\overline Y_j-\pi_L(Z_j^\ell)$. Fix an arbitrary dishonest reveal strategy represented by a measurement
$\{\Omega_d^{R_A}\}_d$ on $R_A$. Let $D$ denote Alice's
classical reveal data. Then, for
$\bar x^N=(\bar x_1,\ldots,\bar x_N)$, define
$\Psi_{\bar x^N}^L
\triangleq
\bigotimes_{j=1}^N\Psi_{\bar x_j}^L$, and the joint distribution generated by the dishonest quantum strategy
satisfies
\begin{align}
&\Prb\left[
\widetilde X^N=\bar x^N,\,
F^{\ell N}=f^{\ell N},\,
\overline Y^N=\bar y^N,\,
D=d
\right]
\notag\\
&=
\sum_{\substack{z_1^\ell,\ldots,z_N^\ell:\\
\bar x_j=\bar y_j-\pi_L(z_j^\ell),\,j\in[N]}}
\left(
\prod_{j=1}^N
q^{\otimes\ell}(f_j^\ell,z_j^\ell)
\right)
\operatorname{Tr}\left[
\left(
\left[
\bigotimes_{j=1}^N
(W_{z_j^\ell}^{\dagger}\otimes I^R)
\Psi_{\bar y_j}^L
(W_{z_j^\ell}\otimes I^R)
\right]
\otimes
\Omega_d^{R_A}
\right)
\Gamma
\right]
\notag\\
&\overset{(a)}{=}
\sum_{\substack{z_1^\ell,\ldots,z_N^\ell:\\
\bar x_j=\bar y_j-\pi_L(z_j^\ell),\,j\in[N]}}
\left(
\prod_{j=1}^N
q^{\otimes\ell}(f_j^\ell,z_j^\ell)
\right)
\operatorname{Tr}\left[
\left(
\Psi_{\bar x^N}^L
\otimes
\Omega_d^{R_A}
\right)
\Gamma
\right]
\notag\\
&\overset{(b)}{=}
W_{L,q}^N
\left(
f^{\ell N},\bar y^N
\mid
\bar x^N
\right)
\operatorname{Tr}\left[
\left(
\Psi_{\bar x^N}^L
\otimes
\Omega_d^{R_A}
\right)
\Gamma
\right],
\label{eq:quantum_classical_binding_distribution}
\end{align}
where $(a)$ holds because for every term in the sum,
$
(W_{z_j^\ell}^{\dagger}\otimes I^R)
\Psi_{\bar y_j}^L
(W_{z_j^\ell}\otimes I^R)
=
\Psi_{\bar y_j-\pi_L(z_j^\ell)}^L
=
\Psi_{\bar x_j}^L$,  $(b)$ holds by
\eqref{eq:stabilizer_quotient_additive_channel}.

We now construct a dishonest strategy for the classical commitment code, and let $\Prb_{\mathrm{cl}}$ denote probabilities under this strategy.
It selects $(\overline X^N,D)$ according to
\begin{equation}
\Prb_{\mathrm{cl}}[
\overline X^N=\bar x^N,\,
D=d
]
\triangleq
\operatorname{Tr}\left[
\left(
\Psi_{\bar x^N}^L
\otimes
\Omega_d^{R_A}
\right)
\Gamma
\right]
\label{eq:classical_strategy_from_quantum_strategy}
\end{equation}
and sends $\overline X^N$ through $W_{L,q}^N$. Hence, by \eqref{eq:quantum_classical_binding_distribution} and
\eqref{eq:classical_strategy_from_quantum_strategy}
\begin{align*}
\Prb\left[
\widetilde X^N=\bar x^N,\,
F^{\ell N}=f^{\ell N},\,
\overline Y^N=\bar y^N,\,
D=d
\right]
=
\Prb_{\mathrm{cl}}\left[
\overline X^N=\bar x^N,\,
F^{\ell N}=f^{\ell N},\,
\overline Y^N=\bar y^N,\,
D=d
\right],
\end{align*}
hence, with $\widehat S$  Alice's revealed sequence, the binding property of the classical commitment code gives
$\Prb\left[
J=0,\,
\widehat S\neq\widetilde S
\right]
=
\Prb_{\mathrm{cl}}\left[
J=0,\,
\widehat S\neq
\mathsf{Ext}^{\mathrm{cl}}_N(\overline X^N)
\right]
\leq
\nu_N
.$
\end{proof}

\subsection{Achievable rates}
\label{subsec:completion_achievability}
Assume first that $\mathcal S_q=\mathbb F_d^2$. Fix
$R<\min\{H(Z|F),\log_2d+E\}$, and set
$e\triangleq\min\{E/\log_2d,1\}$ and
$c_\ell\triangleq\lfloor e\ell\rfloor$.   Lemma~\ref{lem:conditional_random_stabilizer_hashing} gives
self-orthogonal subspaces $L_\ell$ of dimension $\ell-c_\ell$ such that
$\ell^{-1}H(\pi_{L_\ell}(Z^\ell)|F^\ell)$ converges to
$\min\{H(Z|F),(1+e)\log_2d\}
=\min\{H(Z|F),\log_2d+E\}$. Since $R$ is strictly smaller than this
limit, for all sufficiently large $\ell$ we may choose $L_\ell$ such that
$H(\pi_{L_\ell}(Z^\ell)|F^\ell)>\ell R$. For $\overline X$  uniform on $\mathcal X_{L_\ell}$, the induced channel $W_{L_\ell,q}$ satisfies 
$H(\overline X|F^\ell,\overline Y)
=H(\pi_{L_\ell}(Z^\ell)|F^\ell)>\ell R$ by
Lemma~\ref{lem:stabilizer_code_flagged_weyl_reduction}, and is nonredundant by
Lemma~\ref{lem:stabilizer_induced_channel_nondegeneracy},  therefore,
Theorem~\ref{thm:dmc_extractable_commitment_achievability} and
Lemma~\ref{lem:extractable_classical_reduction} achieve rate $\ell R$
per use of $W_{L_\ell,q}$. Each use of $W_{L_\ell,q}$ requires $\ell$ physical channel uses and
$c_\ell$ maximally entangled qudit pairs. Thus, rate $\ell R$ per induced
channel use equals rate $R$ per physical channel use. For $N$ induced
uses, the preshared state has Schmidt rank $d^{Nc_\ell}$, and its
entanglement rate is
$\log_2(d^{Nc_\ell})/(N\ell)
=(c_\ell/\ell)\log_2d
\leq e\log_2d\leq E$. Hence, every
$R<\min\{H(Z|F),\log_2d+E\}$ is achievable.

Assume now only that
$\operatorname{span}_{\mathbb F_p}\mathcal S_q=\mathbb F_d^2$. Choose $\ell=c=1$, thus, each induced channel use consists of one physical channel
use and one maximally entangled qudit pair, corresponding to entanglement rate $\log_2d$. Moreover,
$\mathcal X_L=\mathbb F_d^2$ and $\pi_L(Z)=Z$. For $\overline X$  uniform on $\mathbb F_d^2$, the induced
channel satisfies
$H(\overline X|F,\overline Y)=H(\pi_L(Z)|F)=H(Z|F)$ by
Lemma~\ref{lem:stabilizer_code_flagged_weyl_reduction}, and is 
 nonredundant by Lemma~\ref{lem:stabilizer_induced_channel_nondegeneracy}, therefore,  Theorem~\ref{thm:dmc_extractable_commitment_achievability}  and
Lemma~\ref{lem:extractable_classical_reduction} achieve every rate 
$R<H(Z|F)$. 
\section{Converse of Theorem \ref{thm:flagged_weyl_capacity}}
\label{sec:converse}

We first prove a  lemma that bounds the information supplied by the
environment of $\mathcal N_q$. 

\begin{lemma}
\label{lem:flagged_weyl_environment_information}
Let $\rho^{RA^nB_0}$ be an arbitrary state, where $B_0$ is an arbitrary
 system held by Bob. Apply
$V_q^{\otimes n}$ from \eqref{eq:flagged_weyl_isometry} to $A^n$, and
denote the resulting state by
$\omega^{RF^nB^nB_0E^n}$. Then,
\begin{equation*}
I(R;E^n|F^nB^nB_0)_\omega
\leq
nH(Z|F).
\end{equation*}
\end{lemma}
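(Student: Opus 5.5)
Write the environment register of each use as $E_i=|F_i,Z_i\rangle$. The isometry in \eqref{eq:flagged_weyl_isometry} places a copy of $F_i$ in $E_i$ as well as in Bob's system. The plan is to bound the conditional mutual information by the entropy of the environment given Bob's systems. Then use the fact that, given the flags, the environment carries at most the entropy of the $Z$-labels.

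First, by the chain rule and nonnegativity of entropy-type quantities, we have
\begin{equation*}
I(R;E^n|F^nB^nB_0)_\omega
= H(E^n|F^nB^nB_0)_\omega - H(E^n|RF^nB^nB_0)_\omega .
\end{equation*}
The second term may be negative in general, so it cannot simply be dropped. Instead, I would use the general bound
\begin{equation*}
I(R;E^n|C)\le 2H(E^n|C)\ \text{ or }\ I(R;E^n|C)\le H(E^n)\text{-type bounds},
\end{equation*}
which are too weak. The sharper route is to exploit that $E^n$ is effectively classical.

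Concretely, apply the completely dephasing map $\Delta$ on $E^n$ in the basis $\{|f^n,z^n\rangle\}$. This leaves $I(R;E^n|F^nB^nB_0)$ unchanged if the relevant state is already block-diagonal in $E^n$, but it need not be. Here is how I would handle that. The state $\omega$ is a purification-type object: the coherence between different $z$ in $E$ is tied to $W_z$ acting on $B$. I would therefore first show that dephasing $E^n$ can only increase $I(R;E^n|\cdot)$? That monotonicity is false in general, so instead I would use an alternative representation. Append to the environment a classical copy of $(F^n,Z^n)$ that Bob does not hold; that is, use the channel whose isometry writes $|f,z\rangle^{E}|f,z\rangle^{E'}$. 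Data processing gives $I(R;E^n|\cdot)\le I(R;E^nE'^n|\cdot)$, since discarding $E'^n$ is local. In this extended state, conditioned on the classical $(F^n,Z^n)$ in $E'$, the register $E^n$ is a deterministic pure state. Then $I(R;E^nE'^n|F^nB^nB_0)=I(R;Z^n|F^nB^nB_0)$, where $Z^n$ is now classical.

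Now bound the classical-quantum quantity as $I(R;Z^n|F^nB^nB_0)\le H(Z^n|F^nB^nB_0)\le H(Z^n|F^n)=nH(Z|F)$. The first inequality holds because $H(Z^n|RF^nB^nB_0)\ge0$ for classical $Z^n$. The second holds because conditioning reduces entropy, and $F^n$ is a function that Bob holds. The last step is the i.i.d.\ property of $q^{\otimes n}$: because $(F^n,Z^n)$ is sampled independently of the input state, its marginal is exactly $q^{\otimes n}$ regardless of $\rho$.

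The main obstacle is justifying the extension step. Adding $E'$ makes Bob's reduced state on $F^nB^nB_0R$ unchanged but changes the joint state with $E^n$. I must verify that discarding $E'^n$ maps the extended state to $\omega$ exactly. This holds because tracing $E'$ decoheres $E$, giving $\sum q\,|fz\rangle\langle fz|\otimes W_z\cdot W_z^\dagger$, whereas the original $\omega$ retains coherences in $E$. Therefore I would instead keep $E$ coherent and write the extended isometry as the original $V_q$ followed by a CNOT-copy of $E$ into $E'$. Tracing $E'$ then yields the dephased version, not $\omega$. So the correct direction is the reverse: I must show that $\omega$'s $I(R;E|\cdot)$ is at most that of its copy. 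I would handle this through the Stinespring picture. By the duality identity $I(R;E^n|C)=I(R;D|C')$ for complementary parts of a pure state, I would purify $\rho$ by $R'$ and rewrite the quantity in terms of the complementary systems. Then I would use $I(R;E^n|C)\le I(RR';E^n|C)=H(E^n|C)-H(E^n|CRR')$, where $CRR'E^n$ is pure given the flags. This gives $H(E^n|CRR')=-H(E^n|C)$ restricted appropriately, and the classical-flag structure reduces the remaining term to $H(Z^n|F^n)$. This is the step I expect to need the most care.
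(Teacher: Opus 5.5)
There is a genuine gap. Your last route is the right one, and it is the route the paper takes: purify $\rho$ by $R'$, enlarge $R$ to $RR'$, and use that the global state on $RR'F^nB^nB_0E^n$ is pure. As written, however, it does not prove the lemma. The one identity you state is wrong, and the actual content is deferred to ``the classical-flag structure reduces the remaining term.''

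With $C=F^nB^nB_0$, purity gives $H(E^n|CRR')_\omega=-H(E^n)_\omega$, not $-H(E^n|C)_\omega$. Your version would yield $I(RR';E^n|C)=2H(E^n|C)$, which is false. Take a maximally entangled $\rho^{RA}$ sent through the completely depolarizing channel ($n=1$, $F$ trivial, $q$ uniform on $\mathbb F_d^2$). Then $H(E|B)_\omega=H(R)-H(B)=0$, while $I(R;E|B)_\omega=2\log_2d$.

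The correct identity is $I(RR';E^n|C)_\omega=H(E^n)_\omega+H(RR')-H(C)_\omega$. Three estimates are then needed, and none appear in your proposal:

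(i) $H(E^n)_\omega\le H(F^n,Z^n)$. The rank-one projective measurement $\{|f^n,z^n\rangle\!\langle f^n,z^n|\}$ on $E^n$ has outcome distribution $P_{F^nZ^n}$, and such a measurement cannot decrease entropy.

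(ii) $H(RR')=H(A^nB_0)_\rho$, by purity of the initial state.

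(iii) $H(C)_\omega=H(F^n)+\sum_{f^n}P_{F^n}(f^n)H(B^nB_0|F^n=f^n)_\omega\ge H(F^n)+H(A^nB_0)_\rho$. Given $F^n=f^n$, the state of $B^nB_0$ is the mixture $\sum_{z^n}P_{Z^n|F^n}(z^n|f^n)W_{z^n}\rho^{A^nB_0}W_{z^n}^\dagger$ of unitary conjugates of $\rho^{A^nB_0}$. Concavity and unitary invariance of entropy therefore apply.

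These combine to $H(F^n,Z^n)-H(F^n)=nH(Z|F)$. Step (iii) is where the structure of $\mathcal N_q$ as a flag-conditioned random unitary is actually used, and it is the missing idea.

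Your earlier classical-copy route also fails, but not where you located the problem. The map $|f,z\rangle^E\mapsto|f,z\rangle^E|f,z\rangle^{E'}$ is a local isometry on $E$. So $I(R;E^nE'^n|C)$ in the extended state equals $I(R;E^n|C)_\omega$ exactly, and no data-processing step is needed. The invalid step is the next one, $I(R;E^nE'^n|C)=I(R;Z^n|C)$ with $Z^n$ classical. The pair $E^nE'^n$ is still in coherent superposition over $(f^n,z^n)$. ``Conditioning on the classical value in $E'^n$'' means measuring it, which changes the state. So the classical-register bound $I\le H(Z^n|C)\le H(Z^n|F^n)$ is not available this way.
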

\begin{proof}
Let $T$ purify $\rho^{RA^nB_0}$ and set
$G\triangleq F^nB^nB_0$. Then,
\begin{align}
I(R;E^n|G)_\omega
&\overset{(a)}{\leq}
I(RT;E^n|G)_\omega
\notag\\
&\overset{(b)}{=}
H(E^n)_\omega
+
H(RT)_\rho
-
H(G)_\omega
\notag\\
&\overset{(c)}{\leq}
H(F^n,Z^n)
+
H(RT)_\rho
-
H(G)_\omega
\notag\\
&\overset{(d)}{=}
H(F^n,Z^n)
+
H(RT)_\rho
-
H(F^n)
-
\textstyle\sum_{f^n}
P_{F^n}(f^n)
H(B^nB_0|F^n=f^n)_\omega
\notag\\
&\overset{(e)}{\leq}
H(F^n,Z^n)
+
H(RT)_\rho
-
H(F^n)
-
H(A^nB_0)_\rho
\notag\\
&\overset{(f)}{=}
H(F^n,Z^n)-H(F^n)
\notag\\
&=
H(Z^n|F^n)
\notag\\
&\overset{(g)}{=}
nH(Z|F),\notag
\end{align}
where $(a)$ follows from the chain rule and the nonnegativity of
conditional mutual information, $(b)$ holds  because, since $V_q^{\otimes n}$ is an isometry,
applying it to the pure state on $RTA^nB_0$ produces a pure state on
$RTGE^n$, and $I(RT;E^n|G)_\omega
=H(RTG)_\omega+H(E^nG)_\omega-H(G)_\omega-H(RTGE^n)_\omega
=H(E^n)_\omega+H(RT)_\omega-H(G)_\omega
=H(E^n)_\omega+H(RT)_\rho-H(G)_\omega$ by \cite[Th.~11.2.1]{wilde2013quantum},  $(c)$ holds because the rank-one projective measurement
$\{|f^n,z^n\rangle\!\langle f^n,z^n|\}_{f^n,z^n}$ on $E^n$
has outcome distribution $P_{F^nZ^n}$ and therefore outcome entropy
$H(F^n,Z^n)$, so that, by \cite[Th.~11.1.1]{wilde2013quantum}, 
$H(E^n)_\omega\leq H(F^n,Z^n)$, $(d)$
holds because  $F^n$ is classical,  $(e)$ holds because
\begin{align}
H(B^nB_0|F^n=f^n)_\omega
&=
H\left(
\sum_{z^n}
P_{Z^n|F^n}(z^n|f^n)
(W_{z^n}\otimes I^{B_0})
\rho^{A^nB_0}
(W_{z^n}^{\dagger}\otimes I^{B_0})
\right)
\notag\\
&\geq
\sum_{z^n}
P_{Z^n|F^n}(z^n|f^n)
H\left(
(W_{z^n}\otimes I^{B_0})
\rho^{A^nB_0}
(W_{z^n}^{\dagger}\otimes I^{B_0})
\right)
\notag\\
&=
\sum_{z^n}
P_{Z^n|F^n}(z^n|f^n)
H(A^nB_0)_\rho
\notag\\
&=
H(A^nB_0)_\rho,
\end{align}
where the first inequality holds by \cite[Prop.~11.1.4]{wilde2013quantum}, and the second equality holds 
by unitary
invariance of entropy
\cite[Prop.~11.1.5]{wilde2013quantum}, $(f)$ holds by \cite[Th.~11.2.1]{wilde2013quantum} since the state on
$RTA^nB_0$ is pure,  $(g)$ holds because $(F^n,Z^n)$ is
memoryless.
\end{proof}

We now consider an arbitrary noninteractive
$(n,k,K,\alpha,\beta,\gamma)$ commitment protocol. Let $S$ be uniform on
$\{0,1\}^k$, and let Alice execute the honest commit strategy for $S$. Consider a dishonest receiver  who stores the
raw channel outputs and whose view is
$V\triangleq F^nB^nB_0$, where $B_0$ is Bob's share of the initial
entangled state. If $C_{\mathrm{com}}=c$,  the binding
condition gives
a measurement  
$\{\Lambda_{\widetilde s,c}^{B_{\mathrm{aux}}E^n}\}_{\widetilde s}$ so that
$\Lambda_{\widetilde s}
=\sum_c|c\rangle\!\langle c|^{C_{\mathrm{com}}}
\otimes\Lambda_{\widetilde s,c}^{B_{\mathrm{aux}}E^n}$. Bob's honest commit-phase record is obtained from $V$ by a quantum
operation, which composed with the measurement supplied by the
binding condition gives a measurement of $VE^n$ whose outcome is denoted
by $\widetilde S$. Then, we have
\begin{align}
\left(
1-\beta-\Prb[\widetilde S\neq S]
\right)\frac{k}{n}
&=
\frac{
H(S)-\beta k-\Prb[\widetilde S\neq S]k
}{n}
\notag\\
&=
\frac{
I(S;V)
+
I(S;E^n|V)
+
H(S|VE^n)
-
\beta k
-
\Prb[\widetilde S\neq S]k
}{n}
\notag\\
&\overset{(a)}{\leq}
\frac{
\left(1+\frac{\beta}{2}\right)
h_2\left(\frac{\beta}{2+\beta}\right)
+
I(S;E^n|V)
+
H(S|VE^n)
-
\Prb[\widetilde S\neq S]k
}{n}
\notag\\
&\overset{(b)}{\leq}
H(Z|F)
+
\frac{
\left(1+\frac{\beta}{2}\right)
h_2\left(\frac{\beta}{2+\beta}\right)
+
H(S|VE^n)
-
\Prb[\widetilde S\neq S]k
}{n}
\notag\\
&\overset{(c)}{\leq}
H(Z|F)
+
\frac{1}{n}
\left(1+\frac{\beta}{2}\right)
h_2\left(\frac{\beta}{2+\beta}\right)
+
\frac{1}{n}
h_2\left(\Prb[\widetilde S\neq S]\right), 
\label{eq:flagged_weyl_finite_block_converse}
\end{align}
where $(a)$ holds because the
hiding condition gives a state $\tau^V$, independent of $s$, such that
$\|\theta_s^V-\tau^V\|_1\leq\beta$ with $\theta_s^V$  the receiver's view when $S=s$, so that  $\|\rho^{SV}-\sigma^{SV}\|_1\leq\beta$ with 
$\rho^{SV}\triangleq
2^{-k}\sum_s|s\rangle\!\langle s|^S\otimes\theta_s^V$ and
$\sigma^{SV}\triangleq
2^{-k}\sum_s|s\rangle\!\langle s|^S\otimes\tau^V$,  and, by
\cite[Th.~11.10.3]{wilde2013quantum},  
$I(S;V)\leq
\beta k+(1+\beta/2)h_2(\beta/(2+\beta))$, $(b)$ holds because in the absence of commit-phase interaction Alice's
channel inputs and Bob's initial system $B_0$ can be represented jointly
before the channel uses and Lemma
\ref{lem:flagged_weyl_environment_information}, applied with $R=S$,
 gives $I(S;E^n|V)\leq nH(Z|F)$,  $(c)$ follows because
$\widetilde S$ is obtained by measuring $VE^n$, so data processing and
Fano's inequality give
$H(S|VE^n)\leq
h_2(\Prb[\widetilde S\neq S])
+\Prb[\widetilde S\neq S]k$.

Let Alice  apply the  honest
reveal map to $S$, and let $\widehat S$ be Bob's output. Then, we have
\begin{align*}
\Prb[\widetilde S\neq S]
&\overset{(a)}{\leq}
\Prb[(\widehat S=S,J=0)^c]
+
\Prb[J=0,\widehat S\neq\widetilde S]
\\
&\overset{(b)}{\leq}
\alpha+\gamma, \numberthis \label{eq:flagged_weyl_finite_block_converseb}
\end{align*}
where $(a)$ holds by the union bound, $(b)$ follows from
correctness and binding. Hence, 
\eqref{eq:flagged_weyl_finite_block_converse} and \eqref{eq:flagged_weyl_finite_block_converseb} give
$\limsup_{n\to\infty}k_n/n\leq H(Z|F)$.

We now make use of the entanglement constraint. We have
\begin{align}
\left(
1-\Prb[\widetilde S\neq S]
\right)\frac{k}{n}
&=
\frac{
H(S)-\Prb[\widetilde S\neq S]k
}{n}
\notag\\
&=
\frac{
I(S;VE^n)
+
H(S|VE^n)
-
\Prb[\widetilde S\neq S]k
}{n}
\notag\\
&\overset{(a)}{\leq}
\frac{
H(VE^n)
+
H(S|VE^n)
-
\Prb[\widetilde S\neq S]k
}{n}
\notag\\
&\overset{(b)}{\leq}
\frac{
H(VE^n)
+
h_2\left(\Prb[\widetilde S\neq S]\right)
}{n}
\notag\\
&\overset{(c)}{\leq}
\log_2d
+
\frac{\log_2K}{n}
+
\frac{1}{n}
h_2\left(\Prb[\widetilde S\neq S]\right),
\label{eq:finite_entanglement_dimension_bound}
\end{align}
where $(a)$ follows because, since $S$ is classical,
$H(VE^n|S)=\sum_sP_S(s)H(\rho_s^{VE^n})\geq0$ by the nonnegativity of
quantum entropy, $(b)$ holds as in \eqref{eq:flagged_weyl_finite_block_converse}, $(c)$ holds because Bob's share $B_0$ of the
initial pure state has support dimension at most $K$, so that  the joint
input on $A^nB_0$ has support dimension at most $d^nK$, and  since the channel isometry maps $A^nB_0$ to
$VE^n=F^nB^nB_0E^n$ without changing its support dimension, we have
$H(VE^n)\leq\log_2(d^nK)=n\log_2d+\log_2K$. Hence, for an achievable sequence with entanglement rate $E$, we have
$\limsup_{n\to\infty}n^{-1}\log_2K_n\leq E$, so that 
\eqref{eq:flagged_weyl_finite_block_converseb} and 
\eqref{eq:finite_entanglement_dimension_bound} give
$\limsup_{n\to\infty}k_n/n\leq\log_2d+E$.

\section{Interactive Protocols}
\label{sec:interactive}
\label{sec:interactive_erasure}

The converse in Section \ref{sec:converse} assumes that Bob sends no
messages to Alice during the commit phase. In Section~\ref{subsec:interactive_protocols}, we extend the protocol definition to
allow authenticated classical communication in both directions. In
Section \ref{subsec:interactive_dimension_bound}, we prove an upper bound
for every channel with input dimension $d$ and identify when it is attained
by our noninteractive construction. Finally, in Section
\ref{subsec:interactive_erasure_channel},  for
the quantum erasure channel, we determine its interactive capacity for every
entanglement rate.
 \subsection{Interactive commitment protocols}
\label{subsec:interactive_protocols}

Let $\mathcal N:A\to B$ be a quantum channel with
$\dim\mathcal H_A=d$, and fix an isometric extension
$V_{\mathcal N}:A\to BW$.

\begin{definition}
\label{def:interactive_commitment_protocol}
An $(n,k,K)$ interactive classical string commitment protocol over
$\mathcal N$ is defined as in Section~\ref{sec:problem_statement},
including an initial pure state of Schmidt rank at most $K$, except that
the parties may exchange finitely many authenticated classical messages
before, between, and after the $n$ channel uses. At the $i$th use, Alice
sends a $d$-dimensional system $A_i$ through $\mathcal N$. Each party may apply arbitrary operations to its current classical and
quantum registers. At the end of the commit phase,
$C_{\mathrm{com}}$ contains the public transcript and Bob's retained
classical outcomes, while $B_{\mathrm{aux}}$ contains his retained
quantum systems. The reveal phase and the correctness, hiding, and binding requirements
are those of Section~\ref{sec:problem_statement}, with the channel
environment $E^n$ replaced by $W^n$.
\end{definition}

\begin{definition}
Fix $E\geq0$. A rate $R\geq0$ is  achievable with
entanglement rate $E$ if there exists a sequence of
$(n,k_n,K_n,\alpha_n,\beta_n,\gamma_n)$ interactive protocols such that
\begin{equation*}
\liminf_{n\to\infty}\frac{k_n}{n}\geq R,
\qquad
\limsup_{n\to\infty}\frac{\log_2K_n}{n}\leq E,
\qquad
\lim_{n\to\infty}
\alpha_n=\lim_{n\to\infty}
\beta_n=\lim_{n\to\infty}
\gamma_n=0.
\end{equation*}
The supremum of all such rates is denoted by
$C_{\mathrm{com,int}}(E;\mathcal N)$. We write
$C_{\mathrm{com,int}}(\mathcal N)$ when there is no preshared entanglement, and
$C_{\mathrm{com,int}}^{\mathrm{EA}}(\mathcal N)$ when the preshared
entanglement is unrestricted.
\end{definition}

\subsection{Universal converse bound}
\label{subsec:interactive_dimension_bound}

\begin{theorem}
\label{thm:universal_interactive_dimension_converse}
For every quantum channel $\mathcal N:A\to B$ with
$\dim\mathcal H_A=d$ and every $E\geq0$,
\begin{equation}
C_{\mathrm{com,int}}(E;\mathcal N)
\leq
\log_2d+E.
\label{eq:universal_interactive_dimension_converse}
\end{equation}
\end{theorem}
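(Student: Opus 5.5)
The plan is to reuse the structure of the entanglement-constrained converse \eqref{eq:finite_entanglement_dimension_bound}: bound $I(S;VW^n)$ by $n\log_2d+\log_2K$ up to $o(n)$ terms, and then close the argument with binding, correctness and Fano's inequality exactly as in \eqref{eq:flagged_weyl_finite_block_converse}--\eqref{eq:flagged_weyl_finite_block_converseb}. What changes in the interactive setting is that the bound $H(VE^n)\leq\log_2(d^nK)$ no longer holds directly. Bob's view now contains a classical transcript $T$, which can carry arbitrary amounts of entropy. I will therefore split off the transcript and handle the classical and quantum parts separately.

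\textbf{Setup.} Let $S$ be uniform and let Alice commit honestly. Purify $S$ into a register held by Alice, and run Alice's and Bob's local operations in Stinespring form: every measurement is an isometry that coherently records its outcome, and only the outcomes that are actually sent are copied into the public transcript. Bob keeps everything else. This is a legitimate dishonest receiver, since the messages he sends have the honest distribution. Let $Q$ be all of his retained quantum systems and $V=TQ$ his view. His honest record $C_{\mathrm{com}}B_{\mathrm{aux}}$ is obtained from $V$ by a quantum operation. As in Section~\ref{sec:converse}, composing this operation with the binding POVM gives a measurement of $VW^n$ with outcome $\widetilde S$, and $\Prb[\widetilde S\neq S]\leq\alpha+\gamma$.

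\textbf{Chain rule.} Write
$$I(S;VW^n)=I(S;T)+I(S;QW^n|T).$$
For the first term, $I(S;T)\leq I(S;V)\leq\beta k+(1+\beta/2)h_2(\beta/(2+\beta))$ by hiding and \cite[Th.~11.10.3]{wilde2013quantum}. For the second term,
$$I(S;QW^n|T)\leq H(QW^n|T)=\sum_tP_T(t)H(QW^n|T=t),$$
because conditional entropy given $S$ is nonnegative when $S$ is classical.

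\textbf{Main step.} Conditioned on each transcript value $t$, the global state is pure across the Alice$\,|\,$Bob$W^n$ cut. Hence $H(QW^n|T=t)$ equals the entanglement entropy across that cut, and the claim to prove is
$$\sum_tP_T(t)\,H(QW^n|T=t)\leq n\log_2d+\log_2K.$$
I would prove this by induction over the protocol's steps, tracking the average entanglement entropy of the pure conditional states:
\begin{itemize}
\item Initially the entropy is at most $\log_2K$, from the Schmidt rank of the preshared state.
\item A local isometry on either side leaves it unchanged.
\item A local measurement whose outcome is announced does not increase it on average. This is the standard monotonicity of pure-state entanglement entropy under LOCC, e.g., via concavity of entropy applied to the reduced state on the other side.
\item Moving one $d$-dimensional system $A_i$ from Alice's side to Bob's side (through the isometry $V_{\mathcal N}$, whose output goes to $BW$) increases the entropy by at most $\log_2d$, by subadditivity and the Araki--Lieb inequality: $H(XA_i)\leq H(X)+\log_2d$.
\end{itemize}
This is the step I expect to be the main obstacle. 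In particular, one must check carefully that classical communication in both directions, with Alice's later inputs adapted to Bob's messages, never pushes the average above the accumulated budget. The conditioning on the growing transcript is exactly the delicate point.

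\textbf{Conclusion.} Combining the bounds with Fano's inequality gives
$$(1-\beta-\alpha-\gamma)\,k\leq n\log_2d+\log_2K+h_2(\alpha+\gamma)+(1+\beta/2)\,h_2\!\left(\tfrac{\beta}{2+\beta}\right).$$
Dividing by $n$ and letting $n\to\infty$ with $\limsup_n n^{-1}\log_2K_n\leq E$ yields \eqref{eq:universal_interactive_dimension_converse}.
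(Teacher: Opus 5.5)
Your proof is correct. It shares the paper's outer structure, with a different key lemma for the central bound:
\begin{itemize}
\item hiding controls what the classical transcript reveals about $S$;
\item correctness and binding give $\Prb[\widetilde S\neq S]\leq\alpha+\gamma$ for Fano;
\item the quantum part of Bob's view together with $W^n$, conditioned on the transcript, carries at most $n\log_2d+\log_2K$ bits.
\end{itemize}

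The paper (Appendix~\ref{App_convi}) splits $I(S;VW^n)=I(S;V)+I(S;W^n|V)$ for the honest view $V=C_{\mathrm{com}}B_{\mathrm{aux}}$. It then enlarges $I(S;W^n|V)\leq I(S;Q_n|C_{\mathrm{com}})\leq\sum_cP_{C_{\mathrm{com}}}(c)H(Q_n|C_{\mathrm{com}}=c)$. Lemma~\ref{lem:interactive_support_dimension} then shows by induction that, for every record value $c$, all states $\rho_{s,c}^{Q_n}$ lie in a single $s$-independent subspace of dimension at most $Kd^n$. Alice's messages can only shrink supports, since $\sigma_{s,m}\preceq\rho_s$. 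Bob's Kraus maps cannot increase dimension. Each channel use multiplies the dimension by $d$.

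You instead purify the whole execution, including $S$ and Alice's private randomness, so that every transcript-conditioned state is pure. You then track the average entanglement entropy across the Alice$\,|\,$Bob$W^n$ cut: concavity handles announced measurements, and subadditivity handles each transmitted qudit. The step you flagged as the main obstacle is already closed by your bullets. An adaptively chosen input is just a transcript-controlled local isometry on Alice's side, so nothing further needs checking.

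The trade-offs are as follows. The paper's key lemma is pure dimension counting, needs no global purification, and gives a pointwise, structurally stronger statement for every transcript value. Yours gives only the average bound, which is all the converse needs. However, because your budget is counted in entropy of entanglement rather than Schmidt rank, your argument proves the converse even when only the entanglement entropy of the preshared state, not its Schmidt rank, is limited to $nE+o(n)$. That is a strictly more general statement.
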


\begin{proof}
See Appendix \ref{App_convi}.\end{proof}

As stated next, Theorem \ref{thm:universal_interactive_dimension_converse} determines the interactive capacity whenever the
noninteractive construction of Theorem \ref{thm:flagged_weyl_capacity} attains $\log_2d+E$.

\begin{corollary}
\label{cor:interactive_saturated_flagged_weyl}
Let $\mathcal N_q$ be the channel in
\eqref{eq:flagged_weyl_channel}, suppose that
$\mathcal S_q=\mathbb F_d^2$, and let $0\leq E\leq\log_2d$. If
 $H(Z|F)\geq\log_2d+E,$
then
\begin{equation*}
C_{\mathrm{com,int}}(E;\mathcal N_q)
=
\log_2d+E.
\end{equation*}
\end{corollary}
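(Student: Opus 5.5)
The plan is to sandwich $C_{\mathrm{com,int}}(E;\mathcal N_q)$ between two bounds that are already available. The upper bound is immediate. The channel $\mathcal N_q$ in \eqref{eq:flagged_weyl_channel} has input dimension $d$, so Theorem~\ref{thm:universal_interactive_dimension_converse} gives
\[
C_{\mathrm{com,int}}(E;\mathcal N_q)\leq\log_2d+E .
\]
Here we view $\mathcal N_q$ as a channel $A\to FB$, and its isometric extension $V_q$ in \eqref{eq:flagged_weyl_isometry} plays the role of $V_{\mathcal N}$ with environment $W=E$.

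For the lower bound, we argue that every noninteractive protocol is a special case of an interactive one. It is the protocol in which no classical messages are exchanged during the commit phase. Concretely, an $(n,k,K,\alpha,\beta,\gamma)$ noninteractive protocol satisfies Definition~\ref{def:interactive_commitment_protocol} with the same parameters. The reveal phase and the correctness, hiding and binding requirements coincide once $W^n$ is identified with $E^n$. The one point to check is hiding: in the interactive definition it is quantified over dishonest Bob strategies that may send messages. Since the honest Alice of the noninteractive protocol ignores any incoming messages, such a Bob is no more powerful than a silent one, so the noninteractive hiding guarantee still applies. It follows that
\[
C_{\mathrm{com,int}}(E;\mathcal N_q)\geq C_{\mathrm{com}}(E;\mathcal N_q).
\]

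Next, since $\mathcal S_q=\mathbb F_d^2$, Theorem~\ref{thm:flagged_weyl_capacity} gives
\[
C_{\mathrm{com}}(E;\mathcal N_q)=\min\{H(Z|F),\log_2d+E\}.
\]
The hypothesis $H(Z|F)\geq\log_2d+E$ makes this minimum equal to $\log_2d+E$. Combining it with the upper bound proves the claim.

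The main obstacle, though a mild one, is the embedding step. One must confirm that the interactive security definitions do not impose anything that a noninteractive protocol could fail. Hiding is settled by the observation above. Binding is not harder, because a dishonest Alice receives no messages from an honest Bob who stays silent. The restriction $E\leq\log_2d$ is not needed for the argument itself; it is consistent with the achievability construction, which uses $e=\min\{E/\log_2d,1\}$. Finally, $H(Z|F)\leq2\log_2d$ makes this range the only one in which the hypothesis can hold.
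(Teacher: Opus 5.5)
Your proposal is correct and follows the paper's argument: the upper bound comes from Theorem~\ref{thm:universal_interactive_dimension_converse}, and the matching lower bound comes from Theorem~\ref{thm:flagged_weyl_capacity}, since every noninteractive protocol is a special case of an interactive one. Your explicit check of that embedding (hiding against a Bob who sends messages, binding when the honest Bob is silent) spells out a step the paper leaves implicit.
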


For qubit Pauli channels, we obtain the following result from Corollaries \ref{cor:qubit_pauli_capacity} and  \ref{cor:interactive_saturated_flagged_weyl}.

\begin{corollary}
\label{cor:interactive_saturated_pauli}
Let $\mathcal P_q$ be as in Corollary \ref{cor:qubit_pauli_capacity} such that all three coefficients
in \eqref{eq:qubit_pauli_nondegeneracy} are nonzero. If $0\leq E\leq1$ and $
H(q_I,q_X,q_Y,q_Z)\geq1+E$, then
\begin{equation*}
C_{\mathrm{com,int}}(E;\mathcal P_q)=1+E.
\end{equation*}
\end{corollary}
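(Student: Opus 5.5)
The claim will follow by sandwiching $C_{\mathrm{com,int}}(E;\mathcal P_q)$ between two bounds that coincide under the stated hypotheses.

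For the \emph{lower bound}, note that a noninteractive protocol is a special case of an interactive one: the message sets are empty and Bob is silent. Hence $C_{\mathrm{com,int}}(E;\mathcal P_q)\geq C_{\mathrm{com}}(E;\mathcal P_q)$. Since all three coefficients in \eqref{eq:qubit_pauli_nondegeneracy} are nonzero, Corollary~\ref{cor:qubit_pauli_capacity} applies and gives
\begin{equation*}
C_{\mathrm{com}}(E;\mathcal P_q)=\min\{H(q_I,q_X,q_Y,q_Z),1+E\}.
\end{equation*}
The hypothesis $H(q_I,q_X,q_Y,q_Z)\geq 1+E$ then makes this minimum equal to $1+E$. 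One small check is needed: every noninteractive $(n,k,K,\alpha,\beta,\gamma)$ protocol must also meet the interactive security definitions, with environment $W^n$ in place of $E^n$. This holds because we may choose $W$ to be the environment $E$ of the isometry \eqref{eq:flagged_weyl_isometry}, so the binding POVMs carry over unchanged.

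For the \emph{upper bound}, apply Theorem~\ref{thm:universal_interactive_dimension_converse} with $d=2$. This gives $C_{\mathrm{com,int}}(E;\mathcal P_q)\leq 1+E$ for every qubit-input channel, and in particular for $\mathcal P_q$.

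Alternatively, the whole statement is an instance of Corollary~\ref{cor:interactive_saturated_flagged_weyl}. Take $d=2$ and $F$ constant. Corollary~\ref{cor:qubit_pauli_capacity} shows that $\mathcal S_q=\mathbb F_2^2$ is equivalent to the three coefficients being nonzero. With $F$ constant, $H(Z|F)=H(q_I,q_X,q_Y,q_Z)$. The condition $0\leq E\leq 1=\log_2 d$ matches that corollary's hypothesis.

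I do not expect a real obstacle. The only point that needs care is the identification in the previous paragraph: that $\mathcal P_q$ is exactly $\mathcal N_q$ with $\mathcal P_q=\sum_z q(z)W_z\rho W_z^\dagger$, using the qubit Heisenberg--Weyl operators $I,X,Z,XZ\propto Y$, whose global phases do not affect the channel.
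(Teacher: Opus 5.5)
Your proposal is correct and takes essentially the paper's route. The paper obtains this corollary from Corollary~\ref{cor:qubit_pauli_capacity} together with Corollary~\ref{cor:interactive_saturated_flagged_weyl}, and the latter rests on the same sandwich you use: noninteractive achievability from Theorem~\ref{thm:flagged_weyl_capacity} as the lower bound, and the universal bound $\log_2 d+E$ of Theorem~\ref{thm:universal_interactive_dimension_converse} with $d=2$ as the upper bound.
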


For the depolarizing channel, we obtain the following result from Corollaries \ref{cor:depolarizing_capacity} and  \ref{cor:interactive_saturated_flagged_weyl}.
\begin{corollary}
\label{cor:interactive_saturated_paulil}
For the depolarizing channel $\mathcal D_p$ in
\eqref{eq:depolarizing_channel} and $0\leq E<1$, we have
\begin{equation*}
C_{\mathrm{com,int}}(E;\mathcal D_p)=1+E,
\qquad
p_\star(E)\leq p<1,
\end{equation*}
where $p_\star(E)$ is the unique solution
in $(0,1)$ of
$
H\left(
1-\frac{3p_\star(E)}4,
\frac{p_\star(E)}4,
\frac{p_\star(E)}4,
\frac{p_\star(E)}4
\right)
=
1+E.$
\end{corollary}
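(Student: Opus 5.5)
The corollary follows by combining Corollary~\ref{cor:depolarizing_capacity} with Corollary~\ref{cor:interactive_saturated_flagged_weyl}. All that remains is the behavior of the function $g(p)\triangleq H(q_p)=H\bigl(1-\tfrac{3p}{4},\tfrac p4,\tfrac p4,\tfrac p4\bigr)$ on $[0,1]$. The plan is to show that $g$ is continuous and strictly increasing there, with $g(0)=0$ and $g(1)=2$. Then, for each $0\le E<1$, the equation $g(p)=1+E$ has a unique solution $p_\star(E)\in(0,1)$, since $1+E\in[1,2)$ lies strictly between $g(0)$ and $g(1)$. Moreover, $g(p)\ge 1+E$ holds exactly when $p\ge p_\star(E)$.

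For monotonicity, write $g(p)=h_2(3p/4)+\tfrac{3p}{4}\log_2 3$, using the grouping rule: first decide between the identity and a nontrivial Pauli, then choose uniformly among the three nontrivial Paulis. Set $x=3p/4\in[0,3/4]$. Then
\[
\frac{d}{dx}\bigl(h_2(x)+x\log_2 3\bigr)=\log_2\frac{3(1-x)}{x},
\]
which is positive for $x<3/4$ and zero only at the endpoint $x=3/4$. Hence $g$ is strictly increasing on $[0,1]$. Continuity is immediate, and the endpoint values are $g(0)=0$ and $g(1)=H(1/4,1/4,1/4,1/4)=2$. This yields existence and uniqueness of $p_\star(E)$. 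We also need $p_\star(E)>0$, which holds because $g(0)=0<1+E$.

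Now fix $p$ with $p_\star(E)\le p<1$. The proof of Corollary~\ref{cor:depolarizing_capacity} shows that $\mathcal D_p=\mathcal N_q$ with constant $F$ and $q=q_p$. It also shows that $\widehat q_p(u)=1-p\neq0$ for every nonzero $u$, and the coefficient at $u=0$ equals $1$. Hence $\mathcal S_{q_p}=\mathbb F_2^2$. Here $d=2$, so $\log_2 d=1$. The hypothesis $0\le E<1$ implies $E\le\log_2 d$, and monotonicity gives $H(Z|F)=g(p)\ge g(p_\star(E))=1+E$. All hypotheses of Corollary~\ref{cor:interactive_saturated_flagged_weyl} therefore hold, and it gives $C_{\mathrm{com,int}}(E;\mathcal D_p)=1+E$.

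There is no real obstacle. The only point needing care is strict monotonicity of $g$ up to $p=1$, where the derivative vanishes at the endpoint. This does not affect strict increase on the interval, and it is why the unique solution lies strictly inside $(0,1)$ for $E<1$.
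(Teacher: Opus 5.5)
Your proposal is correct and follows the paper's route: it combines Corollary~\ref{cor:depolarizing_capacity} (which identifies $\mathcal D_p$ with $\mathcal N_{q_p}$ for constant $F$ and gives $\mathcal S_{q_p}=\mathbb F_2^2$ when $p<1$) with Corollary~\ref{cor:interactive_saturated_flagged_weyl}, using $d=2$ and $E<1\le\log_2 d$. Your explicit check that $p\mapsto H(q_p)=h_2(3p/4)+\frac{3p}{4}\log_2 3$ increases strictly from $0$ to $2$ on $[0,1]$ is a detail the paper leaves implicit, and it correctly justifies both the uniqueness of $p_\star(E)\in(0,1)$ and the equivalence $H(q_p)\ge 1+E \iff p\ge p_\star(E)$.
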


\subsection{The quantum erasure channel}
\label{subsec:interactive_erasure_channel}
For the quantum erasure channel, we obtain its interactive capacity for every
entanglement rate, which reveals that interaction does not increase the capacity.

\begin{theorem}
\label{thm:interactive_erasure_capacity}
For $0<\epsilon<1$ and $E\geq0$, we have
\begin{equation*}
C_{\mathrm{com,int}}(E;\mathcal E_{\epsilon,d})
= C_{\mathrm{com}}(E;\mathcal E_{\epsilon,d})
=\min\{2\epsilon\log_2d,\log_2d+E\}.
\end{equation*}
\end{theorem}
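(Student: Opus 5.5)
The claimed equality will follow from an achievability part and two converse bounds. Achievability is immediate: every noninteractive protocol is a special case of an interactive one, so $C_{\mathrm{com,int}}(E;\mathcal E_{\epsilon,d})\geq C_{\mathrm{com}}(E;\mathcal E_{\epsilon,d})=\min\{2\epsilon\log_2d,\log_2d+E\}$ by Corollary~\ref{cor:qudit_erasure_capacity}. The bound $\log_2d+E$ follows from Theorem~\ref{thm:universal_interactive_dimension_converse}. It therefore remains to show that interaction cannot push the rate above $2\epsilon\log_2d$.

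For this I would repeat the chain of Section~\ref{sec:converse}, applied to the equivalent flagged form $\widetilde{\mathcal E}_{\epsilon,d}=\mathcal N_q$ with $q(0,0)=1-\epsilon$ and $q(1,z)=\epsilon/d^2$. Take $S$ uniform and let Alice be honest. Consider a dishonest receiver who follows the honest public messages but stores everything; his view $V$ contains the public transcript, $F^n$, $B^n$ and $B_0$. The first two steps carry over unchanged:
\begin{itemize}
\item hiding still gives $I(S;V)\leq\beta k+(1+\beta/2)h_2(\beta/(2+\beta))$;
\item correctness, binding and Fano still give $H(S|VE^n)\leq h_2(\alpha+\gamma)+(\alpha+\gamma)k$, exactly as in \eqref{eq:flagged_weyl_finite_block_converseb}.
\end{itemize}
The only step that fails is Lemma~\ref{lem:flagged_weyl_environment_information}. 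Its proof needs a single joint input state before all channel uses, which interaction destroys. So the task reduces to proving $I(S;E^n|V)\leq nH(Z|F)=2n\epsilon\log_2d$ for interactive protocols.

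To prove this bound, I would first work with the isometry \eqref{eq:flagged_weyl_isometry}. Any two dilations are related by an isometry on the environment, so binding extraction transfers between them. Next, I would attach to each $E_i$ a classical copy $\bar E_i$ of the basis label $(f_i,z_i)$, obtained by an isometric CNOT into a fresh register. This can only increase the information, so
\begin{equation*}
I(S;E^n|V)\leq I(S;E^n\bar E^n|V)=I(S;\bar E^n|V)+I(S;E^n|V\bar E^n).
\end{equation*}
Conditioned on $\bar E^n=(f^n,z^n)$, the register $E^n$ is in the pure product state $|f^n,z^n\rangle$. Hence the last term vanishes.

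The remaining term is classical on $\bar E^n$, so $I(S;\bar E^n|V)\leq H(\bar E^n|V)\leq H(Z^n|F^n)$, because $F^n$ is contained in $V$. Now apply the chain rule in the order of the channel uses. Whatever classical messages and adaptive inputs precede use $i$, the pair $(F_i,Z_i)$ is drawn freshly from $q$, independently of the past. Moreover, $Z_i=0$ when $F_i=0$, and $Z_i$ is uniform on $\mathbb F_d^2$ when $F_i=1$. This gives $H(Z^n|F^n)\leq\sum_iH(Z_i|F_i)=2n\epsilon\log_2d$. Combining the three bounds as in \eqref{eq:flagged_weyl_finite_block_converse} and letting $n\to\infty$ yields $\limsup k_n/n\leq2\epsilon\log_2d$.

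The main obstacle is this environment bound. Mutual information conditioned on a quantum $V$ does not directly satisfy $I\leq H(E^n|F^n)$, since the conditional entropy $H(E^n|VS)$ may be negative. The classical-copy trick is the step I would check most carefully. I need to confirm that conditioning on $\bar E^n$ indeed makes $E^n$ decouple, and that the extraction POVM guaranteed on $B_{\mathrm{aux}}W^n$ is still valid after switching dilations, so that the Fano step remains justified.
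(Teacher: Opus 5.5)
Your achievability argument and your use of Theorem~\ref{thm:universal_interactive_dimension_converse} for the $\log_2d+E$ bound match the paper. The interactive environment bound, however, is the only new ingredient, and it rests on a step that fails. An isometric CNOT creates a \emph{coherent} copy, not a classical one. The state of $SVE^n\bar E^n$ is therefore not block-diagonal in $\bar E^n$, and $I(S;E^n|V\bar E^n)$ cannot be evaluated by conditioning on values $(f^n,z^n)$. Note that $I(S;E^n\bar E^n|V)=I(S;E^n|V)$, while $I(S;\bar E^n|V)$ is computed with $E^n$ traced out, i.e., on the environment measured in the $|f,z\rangle$ basis. So your claim that the last term vanishes is equivalent to saying that this measurement loses no information about $S$ given $V$. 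That holds when $V$ holds the raw outputs, as in Lemma~\ref{lem:flagged_weyl_environment_information}. It fails when Bob's view is a measured version of the outputs, which is exactly what interaction produces. For instance, take one use of the flagged qubit erasure channel, let Alice send $|s\rangle$ with $s\in\{0,1\}$ uniform, and let Bob measure $B$ in the eigenbasis $\{|{+i}\rangle,|{-i}\rangle\}$ of $Y$ and announce the outcome $m$. Then $V=(F,M)$, and $M$ is independent of $S$. Given $F=1$ and $m$, the environment is in the pure state proportional to $\sum_z\langle m|W_z|s\rangle|1,z\rangle$. All its amplitudes have modulus $1/2$, and the states for $s=0,1$ are orthogonal because $\sum_zW_z^\dagger|m\rangle\langle m|W_z=2I$. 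Hence $I(S;E|V)=\epsilon$ but $I(S;\bar E|V)=0$, so $I(S;E|V\bar E)=\epsilon\neq0$. A further warning sign: if your decoupling held, it would give $C_{\mathrm{com,int}}(E;\mathcal N_q)\leq H(Z|F)$ for every flagged Weyl channel, which the paper explicitly leaves open.

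There is also a problem one step earlier. In the interactive model a receiver cannot send the honest messages while keeping $F^nB^n$ untouched. Moreover, the Fano step is only certified for the honest joint state of Bob's record and the environment. For such honest views even your target inequality can fail. If Alice sends $|{\pm}\rangle$ and Bob measures $FB$ in the basis $\{|{\pm}\rangle^F|b\rangle^B\}$, then $I(S;V)=0$ while $I(S;E|V)=h_2(\epsilon/2)$, which exceeds $H(Z|F)=2\epsilon$ for small $\epsilon$. Forcing $F^n$ to stay classical changes the joint state with $E^n$, so that choice must be justified against the extractor; the paper's addition of $G^n$ to Bob's view plays the same role.

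The paper's route never needs the environment to be classical. It uses the erasure channel's own dilation, in which $W_i$ is $|e\rangle$ on an unerased use and $d$-dimensional on an erased one. It then bounds both $H(W^n|V',G^n=g^n)$ and $H(W^n|SV',G^n=g^n)$ in absolute value by $(\sum_ig_i)\log_2d$. The factor $2$ therefore comes from negative conditional entropy, not from $H(Z|F=1)=2\log_2d$. With your flagged dilation $V_q$, the same count gives only $4\log_2d$ per erased use, since $E_i$ is then $d^2$-dimensional.
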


\begin{proof}
See Appendix \ref{converas}.
\end{proof}

\section{Concluding Remarks}
\label{sec:conclusion}

We determined the noninteractive classical string commitment capacity of
the channels $\mathcal N_q$ satisfying
$\mathcal S_q=\mathbb F_d^2$ for every entanglement rate $E$, namely,
$C_{\mathrm{com}}(E;\mathcal N_q)
=\min\{H(Z|F),\log_2d+E\}$.
The capacity with unrestricted preshared entanglement remains equal to
$H(Z|F)$ under the weaker condition
$\operatorname{span}_{\mathbb F_p}\mathcal S_q=\mathbb F_d^2$. These
results apply, in particular, to quantum erasure, depolarizing, dephrasure,
and Pauli qudit channels satisfying the corresponding conditions. For interactive protocols, we proved the upper bound
$\log_2d+E$, which matches the noninteractive capacity whenever
$H(Z|F)\geq\log_2d+E$. For the quantum erasure channel, an additional
 bound gives the  interactive capacity, showing that interaction does not
increase the capacity in this case. It remains open whether interaction can increase the
capacity in general. Open
directions include finite-blocklength bounds and  efficient
constructions.

\appendices

\section{Proof of Theorem
\ref{thm:dmc_extractable_commitment_achievability}}
\label{App_th}

The following lemma is established in
\cite[Prop.~7 and~8]{winter2003commitment}.

\begin{lemma}[{\cite[Prop.~7 and~8]{winter2003commitment}}]
\label{lem:classical_commitment_code}
Let $W:\mathcal X\to\mathcal Y$ be a finite nonredundant classical channel.
Fix an input distribution $P_X$ and a rate $R<H(X|Y)$, where the entropy is
computed under $P_{XY}(x,y)=P_X(x)W(y|x)$. Then, there exist $\sigma>0$
and sequences of
integers $(k_N)_{N\in\mathbb N}$ and $(L_N)_{N\in\mathbb N}$ such that, for
every sufficiently large $N$, there exist maps $\phi_N:
[2^{k_N}]\times[L_N]
\longrightarrow
\mathcal X^N$
and sets
$\mathcal B_{s,r}^{(N)}\subseteq\mathcal Y^N$,
$s\in[2^{k_N}]$, $r\in[L_N]$, satisfying the following properties.
\begin{itemize}
\item The message rate satisfies
\begin{equation*}
\liminf_{N\to\infty}\frac{k_N}{N}\geq R.
\end{equation*}

\item For all $(s,r)\neq(s',r')$,
\begin{equation}
d_{\mathrm H}
\left(
\phi_N(s,r),
\phi_N(s',r')
\right)
\geq
2\sigma N.
\label{eq:classical_commitment_code_distance}
\end{equation}

\item The output distributions satisfy
\begin{equation}
\max_{s\in[2^{k_N}]}
\left\|
\frac{1}{L_N}
\sum_{r=1}^{L_N}
W^N
\left(
\,\cdot\mid\phi_N(s,r)
\right)
-
Q_Y^{\otimes N}
\right\|_1
\leq
\beta_N,
\label{eq:classical_commitment_code_hiding}
\end{equation}
where $\lim_{N\to\infty}\beta_N=0$ and
$
Q_Y(y)
\triangleq
\sum_{x\in\mathcal X}
P_X(x)W(y|x).$

\item We have
\begin{align}
W^N
\left(
\mathcal B_{s,r}^{(N)}
\middle|
\phi_N(s,r)
\right)
&\geq
1-\alpha_N,
\label{eq:classical_commitment_code_completeness}\\
\sup_{\substack{u^N\in\mathcal X^N:\\
d_{\mathrm H}(u^N,\phi_N(s,r))\geq\sigma N}}
W^N
\left(
\mathcal B_{s,r}^{(N)}
\middle|
u^N
\right)
&\leq
\nu_N,
\label{eq:classical_commitment_code_soundness}
\end{align}
where $\lim_{N\to\infty}\alpha_N=0=\lim_{N\to\infty}\nu_N$.
\end{itemize}
\end{lemma}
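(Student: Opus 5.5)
The plan is to re-derive the construction of \cite{winter2003commitment} by random coding with binning, and then check the four properties in turn. Fix $P_X$ and $R<H(X|Y)$; by continuity we may assume $P_X$ is a type with denominator $N$ and restrict to constant-composition codewords of type $P_X$. Choose $\delta>0$ small and set $k_N=\lfloor N(R)\rfloor$ and $L_N=\lceil 2^{N(I(X;Y)+\delta)}\rceil$. The total number of codewords is then about $2^{N(R+I(X;Y)+\delta)}$, and this is $<2^{N(H(X)-\delta)}$ because $R<H(X|Y)=H(X)-I(X;Y)$. Draw $\phi_N(s,r)$ i.i.d.\ uniformly from the type class $T_{P_X}^N$.

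\emph{Hiding.} We show \eqref{eq:classical_commitment_code_hiding} by a soft-covering (channel resolvability) argument, applied separately to each bin $s$. A bin of $L_N=2^{N(I(X;Y)+\delta)}$ random codewords of type $P_X$ induces an output distribution that is, in expectation, $2^{-\Omega(N)}$-close in $\ell_1$ to the output distribution of a uniform codeword from $T_{P_X}^N$. That distribution is in turn close to $Q_Y^{\otimes N}$; if needed, we replace the constant-composition reference by i.i.d.\ $P_X$ codewords with a typicality-truncation argument. The closeness is doubly exponential in probability, for instance by a second-moment or Chernoff bound on the truncated likelihood. A union bound over the $2^{k_N}$ bins then gives the maximum over $s$ simultaneously.

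\emph{Distance \eqref{eq:classical_commitment_code_distance}.} For two independent uniform words of $T_{P_X}^N$, the probability that they lie within Hamming distance $2\sigma N$ is at most $2^{-N(H(X)-g(\sigma))}$, where $g(\sigma)\to0$ as $\sigma\to0$. Since the codebook size is $2^{N(H(X)-\delta)}$, choosing $\sigma$ with $g(\sigma)<\delta/2$ lets us expurgate. The expected number of bad pairs is small relative to the codebook, so we remove one word from each bad pair and thin the bins slightly, keeping $L_N$ of the same exponent. The hiding estimate survives because it was doubly exponential: we fix a realization satisfying both properties.

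\emph{Completeness and soundness.} Let $\mathcal B_{s,r}^{(N)}$ be the set of $y^N$ whose joint type with $\phi_N(s,r)$ is within $\eta'$ of $P_X\times W$. Completeness \eqref{eq:classical_commitment_code_completeness} is then the standard conditional typicality bound.

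The main obstacle is soundness \eqref{eq:classical_commitment_code_soundness}, which must hold uniformly over arbitrary $u^N$. Here we use nonredundancy quantitatively: set $\eta\triangleq\min_x\min_{Q\in\operatorname{conv}\{W(\cdot|x'):x'\neq x\}}\|W(\cdot|x)-Q\|_1>0$. Fix $x^N=\phi_N(s,r)$ and a $u^N$ with at least $\sigma N$ disagreements. By pigeonhole, some symbol $a$ has at least $\sigma N/|\mathcal X|$ positions with $x_i=a$ and $u_i\neq a$. On the positions $\{i:x_i=a\}$, the expected empirical output distribution is the mixture
\[
\bigl(1-\lambda\bigr)W(\cdot|a)+\lambda Q,
\qquad
\lambda\geq\frac{\sigma}{|\mathcal X|},
\qquad
Q\in\operatorname{conv}\{W(\cdot|x'):x'\neq a\}.
\]
This mixture is at $\ell_1$ distance at least $\lambda\eta$ from $W(\cdot|a)$. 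By Hoeffding's inequality, the empirical conditional type on these positions deviates from $W(\cdot|a)$ by at least $\lambda\eta/2$, except with probability $e^{-\Omega(N)}$. Choosing the typicality slack $\eta'<\sigma\eta P_{\min}/(4|\mathcal X|)$, where $P_{\min}$ is the smallest positive mass of $P_X$, ensures that such $y^N\notin\mathcal B_{s,r}^{(N)}$. Hence $\nu_N\to0$ uniformly in $u^N$.
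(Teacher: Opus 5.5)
The paper does not prove this lemma; it imports it from \cite[Prop.~7 and~8]{winter2003commitment}. Your architecture is the natural wiretap-style route to it: bins that resolve the output, expurgation to pairwise distance $2\sigma N$, and a typicality test made sound by a quantitative form of nonredundancy. Your soundness argument (pigeonhole on a symbol $a$, the mixture bound $\lambda\eta$, Hoeffding on the rows $\{i:x_i=a\}$) is correct and uniform in $u^N$.

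\textbf{First gap: hiding versus distance.} With constant-composition codewords, soft covering makes each bin's output close to $\bar Q_N$, the output law of a uniformly drawn word of $T_{P_X}^N$. But $\bar Q_N$ is \emph{not} close to $Q_Y^{\otimes N}$ unless $W(y|\cdot)$ is constant on $\operatorname{supp}P_X$. Both laws are exchangeable, so their $\ell_1$ distance equals that of the induced output-type distributions, and these have different variances at scale $N$. For a BSC$(p)$ with uniform $P_X$, the number of ones has variance $Np(1-p)$ under $\bar Q_N$ versus $N/4$ under $Q_Y^{\otimes N}$. So your claim that $\bar Q_N$ is close to $Q_Y^{\otimes N}$ is false, and the fallback is mandatory.

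The fallback, however, breaks your distance step as written. With plain i.i.d.\ $P_X$ codewords, expurgation fails: two words coincide with probability $2^{-NH_{\mathrm{coll}}}$, where $H_{\mathrm{coll}}\triangleq-\log_2\sum_xP_X(x)^2$. This falls below the codebook exponent $R+I(X;Y)+\delta$ when $R$ is near $H(X|Y)$ and $P_X$ is not uniform on its support.

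To repair it, draw codewords from $P_X^{\otimes N}$ conditioned on an $\epsilon$-typical set $\mathcal T$. Bound the probability of hitting a Hamming ball of radius $2\sigma N$ by $2^{Ng(\sigma)}\max_{x^N\in\mathcal T}P_X^{\otimes N}(x^N)/P_X^{\otimes N}(\mathcal T)\le2^{-N(H(X)-c\epsilon-g(\sigma))+o(N)}$, with $c$ depending only on $P_X$, and require $\delta+c\epsilon+g(\sigma)<H(X|Y)-R$. The test sets must then absorb the codeword-type slack, either by taking $\epsilon\ll\eta'$ or by testing each row against the codeword's own type. Each bin's output is then within $\beta_N+2P_X^{\otimes N}(\mathcal T^c)$ of $Q_Y^{\otimes N}$. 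For the use in Appendix~\ref{App_th}, any $s$-independent reference suffices. So your constant-composition route would prove Theorem~\ref{thm:dmc_extractable_commitment_achievability}, but not this lemma as stated.

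\textbf{Second gap: per-bin control in expurgation.} The global count does not give it. The lemma requires every bin to have exactly $L_N$ words and to satisfy \eqref{eq:classical_commitment_code_hiding}, so no retained bin may lose a non-vanishing fraction of its words. Knowing that the expected number of bad pairs is $2^{-\Omega(N)}$ times the codebook size bounds only the total.

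The standard repair goes as follows. Each word has a close partner with probability at most $2^{-\gamma N}$ for some $\gamma>0$. By Markov's inequality, with probability at least $1/2$, all but a $2^{-\Omega(N)}$ fraction of bins have at most a $2^{-\gamma N/2}$ fraction of such words. Intersect this with the probability-$(1-o(1))$ event that all bins resolve. Then discard the bad bins (costing one bit of $k_N$), delete every word that has a close partner, and trim the surviving bins to a common size.

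Hiding then survives because $\bigl\|\frac1L\sum_{r}\mu_r-\frac1{|S|}\sum_{r\in S}\mu_r\bigr\|_1\le2(1-|S|/L)$, not because the estimate was doubly exponential. The double-exponential concentration only licenses the union bound over the $2^{k_N}$ bins beforehand. It comes from McDiarmid's inequality (each codeword moves the $\ell_1$ distance by at most $2/L_N$); a second-moment bound would not give it.
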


Fix $R<\max_{P_X}H(X|Y)$, and choose an input distribution $P_X$ such that
$R<H(X|Y)$. Using Lemma
\ref{lem:classical_commitment_code} for this distribution, Alice commits to
$s\in\{0,1\}^{k_N}$ by sampling $r$ uniformly from $[L_N]$ and sending $X^N=\phi_N(s,r)$
through $W^N$. To reveal, Alice sends $(s,r)$. Bob sets
$\widehat S=s$ and $J=0$ if
$Y^N\in\mathcal B_{s,r}^{(N)}$, otherwise he sets $J=1$.
Correctness follows from
\eqref{eq:classical_commitment_code_completeness}.

We next verify hiding against an arbitrary dishonest Bob. Since the protocol
contains no commit-phase public communication, Bob's strategy is described
by a quantum channel $\mathcal D:
Y^N
\longrightarrow
C_{\mathrm{com}}B_{\mathrm{aux}}.$ 
For an honest commitment to $s$, the state of the classical channel-output
register is
\begin{equation*}
\rho_s^{Y^N}
\triangleq
\sum_{y^N}
\left[
\frac{1}{L_N}
\sum_{r=1}^{L_N}
W^N
\left(
y^N\middle|\phi_N(s,r)
\right)
\right]
|y^N\rangle\!\langle y^N|
\end{equation*}
so that 
$\theta_s^{C_{\mathrm{com}}B_{\mathrm{aux}}}
=\mathcal D(\rho_s^{Y^N})$. Then, define $
\omega^{Y^N}
\triangleq
\sum_{y^N}
Q_Y^{\otimes N}(y^N)
|y^N\rangle\!\langle y^N|$ 
and
$\tau^{C_{\mathrm{com}}B_{\mathrm{aux}}}
\triangleq\mathcal D(\omega^{Y^N})$. Then,
\begin{align*}
\left\|
\theta_s^{C_{\mathrm{com}}B_{\mathrm{aux}}}
-
\tau^{C_{\mathrm{com}}B_{\mathrm{aux}}}
\right\|_1
&\overset{(a)}\leq
\left\|
\rho_s^{Y^N}
-
\omega^{Y^N}
\right\|_1
\notag\\
&=
\left\|
\frac{1}{L_N}
\sum_{r=1}^{L_N}
W^N
\left(
\,\cdot\mid\phi_N(s,r)
\right)
-
Q_Y^{\otimes N}
\right\|_1
\notag\\
&\overset{(b)}\leq
\beta_N,
\end{align*}
where $(a)$ holds by monotonicity of the trace distance, e.g., \cite[Eq. (9.69)]{wilde2013quantum}, $(b)$ holds by
\eqref{eq:classical_commitment_code_hiding}.

We now verify binding. Consider an arbitrary dishonest commit strategy,
and let $U^N$ denote the  classical input transmitted through $W^N$.
For $u^N\in\mathcal X^N$, define
\begin{equation*}
\mathcal M(u^N)
\triangleq
\left\{
(s,r)\in[2^{k_N}]\times[L_N]:
d_{\mathrm H}
\bigl(
u^N,\phi_N(s,r)
\bigr)
<
\sigma N
\right\}.
\end{equation*}
By \eqref{eq:classical_commitment_code_distance},
$|\mathcal M(u^N)|\leq1$. Fix $s_0\in[2^{k_N}]$ and define
\begin{equation*}
\widetilde S(u^N)
\triangleq
\begin{cases}
\widetilde s,
&
\text{if }
\mathcal M(u^N)=\{(\widetilde s,\widetilde r)\},
\\[1mm]
s_0,
&
\text{if }
\mathcal M(u^N)=\varnothing,
\end{cases}
\end{equation*}
which does not depend
on Alice's reveal strategy.

Let $(S_{\mathrm{rev}},R_{\mathrm{rev}})$ be Alice's revealed pair. Then,
\begin{align*}
\Prb\left[
J=0,\,
\widehat S\neq\widetilde S
\right]
\notag
&=
\sum_{\substack{u^N,s,r:\\
s\neq\widetilde S(u^N)}}
\Prb\left[
U^N=u^N,\,
S_{\mathrm{rev}}=s,\,
R_{\mathrm{rev}}=r
\right]
W^N
\left(
\mathcal B_{s,r}^{(N)}
\middle|
u^N
\right)
\notag\\
&\leq
\nu_N
\sum_{\substack{u^N,s,r:\\
s\neq\widetilde S(u^N)}}
\Prb\left[
U^N=u^N,\,
S_{\mathrm{rev}}=s,\,
R_{\mathrm{rev}}=r
\right]
\notag\\
&\leq
\nu_N,
\end{align*}
where the first inequality holds by 
\eqref{eq:classical_commitment_code_soundness}. Indeed, if
$\mathcal M(u^N)=\varnothing$, then every revealed codeword is at distance
at least $\sigma N$ from $u^N$ by definition of $\mathcal M(u^N)$, and if
$\mathcal M(u^N)=\{(\widetilde s,\widetilde r)\}$ and
$s\neq\widetilde s$, then the triangle inequality and
\eqref{eq:classical_commitment_code_distance} give
$d_{\mathrm H}
\bigl(
u^N,\phi_N(s,r)
\bigr)
\geq
d_{\mathrm H}
\bigl(
\phi_N(\widetilde s,\widetilde r),
\phi_N(s,r)
\bigr)
-
d_{\mathrm H}
\bigl(
u^N,\phi_N(\widetilde s,\widetilde r)
\bigr)
>
2\sigma N-\sigma N
=
\sigma N.$

 \section{Proof of Lemma \ref{lem:self_orthogonal_joint_eigenspace}} \label{App_lem1}

For $v,w\in L$,   $L\subseteq L^\perp$ gives $[v,w]=0$, hence, by 
\cite[Lem.~5]{ketkar2006nonbinary}, 
$
W_vW_w=W_wW_v,
$
so that the operators $\{W_v:v\in L\}$  can be
simultaneously diagonalized. Choose a common unit eigenvector
$|\varphi\rangle$ and $\{\lambda_v:v\in L\}$ such that
$
W_v|\varphi\rangle=\lambda_v|\varphi\rangle
$
for every $v\in L$ and define 
$
S_v\triangleq\lambda_v^{-1}W_v
$. $\{S_v:v\in L\}$ is an Abelian group of order
$|L|$~since
\begin{align}
S_vS_w
&=
\lambda_v^{-1}\lambda_w^{-1}W_vW_w
\notag\\
&\overset{(a)}{=}
\lambda_v^{-1}\lambda_w^{-1}
\zeta(v,w)W_{v+w}
\notag\\
&\overset{(b)}{=}
\lambda_{v+w}^{-1}W_{v+w}
\notag\\
&=
S_{v+w}, \notag
\end{align}
where $(a)$ holds by \cite[Eq.~(1)]{ketkar2006nonbinary} for some $\zeta(v,w)$,   $(b)$ holds by applying
$
W_vW_w=\zeta(v,w)W_{v+w}
$
to the common eigenvector $|\varphi\rangle$. 

Then, define
$
\mathcal Q_L
\triangleq
\left\{
|\psi\rangle:
S_v|\psi\rangle=|\psi\rangle, 
  \forall v\in L
\right\}
$
and 
$
P_L
\triangleq
\frac{1}{|L|}
\sum_{v\in L}S_v.
$
Since $S_v^\dagger=S_{-v}$, we have $P_L^\dagger=P_L$. Moreover,
$
P_L^2
=
\frac{1}{|L|^2}
\sum_{v,w\in L}S_{v+w}
=
\frac{1}{|L|}
\sum_{u\in L}S_u
=
P_L
$. For every $w\in L$,
$
S_wP_L
=
\frac{1}{|L|}
\sum_{v\in L}S_{w+v}
=
P_L,
$
so every vector in the image of $P_L$ belongs to $\mathcal Q_L$.
Conversely, if $|\psi\rangle\in\mathcal Q_L$, then
$
P_L|\psi\rangle=|\psi\rangle.
$
Thus, $P_L$ is the orthogonal projector onto $\mathcal Q_L$.

Finally, we have
\begin{align}
\dim\mathcal Q_L
& = 
\operatorname{Tr}(P_L)
\notag\\
&=
\frac{1}{|L|}
\sum_{v\in L}\operatorname{Tr}(S_v)
\notag\\
&\overset{(a)}{=}
\frac{d^\ell}{|L|}
\notag\\
&\overset{(b)}{=}
\frac{d^\ell}{d^{\ell-c}}
\notag\\
&=
d^c, \notag
\end{align}
where  $(a)$ holds because  
$
\operatorname{Tr}(S_v)=0
$
for every $v\neq0$, and
$
\operatorname{Tr}(S_0)=\operatorname{Tr}(I)=d^\ell$, $(b)$ holds because
$
|L|
=
d^{\dim_{\mathbb F_d}L}
=
d^{\ell-c}.
$

\section{Proof of Theorem \ref{thm:universal_interactive_dimension_converse}} \label{App_convi}

The following lemma holds by Section \ref{sec:converse},   using only hiding, correctness, binding, data processing,
and Fano's inequality.
\begin{lemma}
\label{lem:interactive_information_reduction}
Fix an $(n,k,K,\alpha,\beta,\gamma)$ interactive protocol, let $S$ be
uniform on $\{0,1\}^k$, and let Alice honestly commit to $S$. Set
$V\triangleq C_{\mathrm{com}}B_{\mathrm{aux}}$. There is a measurement of
$VW^n$ with outcome $\widetilde S$ such that
$\Prb[\widetilde S\neq S]\leq\alpha+\gamma$ and
\begin{align*}
\left(
1-\beta-\Prb[\widetilde S\neq S]
\right)k
&\leq
\left(1+\frac{\beta}{2}\right)
h_2\left(\frac{\beta}{2+\beta}\right)
+I(S;W^n|V)
+h_2\left(\Prb[\widetilde S\neq S]\right).
\end{align*}
\end{lemma}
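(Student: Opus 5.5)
We mirror the first half of the converse in Section~\ref{sec:converse}, now with $V=C_{\mathrm{com}}B_{\mathrm{aux}}$ (Bob's honest end-of-commit registers) and the environment $W^n$ in place of $E^n$. Interaction is harmless here because we never use the channel's structure, only the three security conditions. The proof has two steps: construct $\widetilde S$ and bound its error, then chain entropies.

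\emph{Step 1: the extractor and its error.} Fix the honest Alice strategy committing to a uniform $S$ and the honest Bob strategy. The binding condition, with the channel environment now $W^n$, supplies for each value $c$ of $C_{\mathrm{com}}$ a POVM $\{\Lambda^{B_{\mathrm{aux}}W^n}_{\widetilde s,c}\}_{\widetilde s}$. Since $C_{\mathrm{com}}$ is classical, the block operator $\Lambda_{\widetilde s}=\sum_c|c\rangle\!\langle c|\otimes\Lambda_{\widetilde s,c}$ is a POVM on $VW^n$; let $\widetilde S$ be its outcome. Because this measurement acts only on $VW^n$ and not on $R_A$, it commutes with Alice's honest reveal map on $R_A$. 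Hence the joint law of $(\widetilde S,\widehat S,J)$ is well defined, and the union bound argument of \eqref{eq:flagged_weyl_finite_block_converseb} applies. Correctness controls $\Prb[(\widehat S=S,J=0)^c]\le\alpha$, and binding, applied with the honest reveal as a particular reveal strategy, gives $\Prb[J=0,\widehat S\ne\widetilde S]\le\gamma$. Together these yield $\Prb[\widetilde S\ne S]\le\alpha+\gamma$.

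\emph{Step 2: the entropy chain.} Write
\[
k=H(S)=I(S;V)+I(S;W^n|V)+H(S|VW^n).
\]
For the first term, use hiding with the honest Bob strategy as the particular receiver strategy. This gives $\tau^V$ independent of $s$ with $\|\theta^V_s-\tau^V\|_1\le\beta$, so the classical--quantum states $\rho^{SV}$ and $\sigma^{SV}=\rho^S\otimes\tau^V$ are $\beta$-close. The continuity bound \cite[Th.~11.10.3]{wilde2013quantum} then gives $I(S;V)\le\beta k+(1+\beta/2)h_2(\beta/(2+\beta))$, exactly as in step $(a)$ of \eqref{eq:flagged_weyl_finite_block_converse}. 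For the third term, data processing under the measurement producing $\widetilde S$, followed by Fano's inequality, gives
\[
H(S|VW^n)\le H(S|\widetilde S)\le h_2(\Prb[\widetilde S\ne S])+\Prb[\widetilde S\ne S]k.
\]
Moving the $\beta k$ and $\Prb[\widetilde S\ne S]k$ terms to the left-hand side gives the stated inequality, with $I(S;W^n|V)$ left unbounded.

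\emph{Main obstacle.} The only delicate point is in Step 1: checking that interaction does not break the formal ingredients. The binding POVM may depend on the (here honest) commit strategy but is fixed before reveal. The registers $V$ and $W^n$ are well defined at the end of the interactive commit phase. All public messages are absorbed into $C_{\mathrm{com}}$, so they are part of $V$ and cause no leakage outside $VW^n$. Once these are confirmed, both steps go through verbatim.
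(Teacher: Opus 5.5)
Your proposal is correct and follows essentially the paper's route: the paper proves this lemma by referring back to the chain $H(S)=I(S;V)+I(S;W^n|V)+H(S|VW^n)$ of Section~\ref{sec:converse}, with the first term bounded by hiding and the continuity bound, the last term by the binding extractor plus Fano's inequality, and $\Prb[\widetilde S\neq S]\leq\alpha+\gamma$ by the union bound combining correctness and binding. The only difference is cosmetic: here $V=C_{\mathrm{com}}B_{\mathrm{aux}}$ is already Bob's honest record, so you apply the binding POVM directly. In Section~\ref{sec:converse}, by contrast, the view consists of the raw channel outputs, and the POVM must first be composed with Bob's honest processing of those outputs.
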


We will also use the following lemma.
\begin{lemma}
\label{lem:interactive_support_dimension}
Consider an $(n,k,K)$ interactive protocol over $\mathcal N$. For
$i\in\{0,1,\ldots,n\}$, let $C_i$ be Bob's classical record after the
first $i$ channel uses and the classical communication following them. Let
$Q_i$ contain $W^i$, Bob's quantum registers, and the auxiliary systems
produced by his operations before they are discarded. For
$P_{S,C_i}(s,c_i)>0$, let $\rho_{s,c_i}^{Q_i}$ be the state of $Q_i$
conditioned on $S=s$ and $C_i=c_i$. For every $c_i$ with
$P_{C_i}(c_i)>0$, there is a subspace
$\mathcal K_{c_i}\subseteq\mathcal H_{Q_i}$, independent of $s$, such that $
\dim\mathcal K_{c_i}\leq Kd^i$, and $\supp(\rho_{s,c_i}^{Q_i})\subseteq\mathcal K_{c_i}$,
for every $s$ with $P_{S,C_i}(s,c_i)>0$.
\end{lemma}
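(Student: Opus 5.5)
The proof is an induction on $i$ that tracks a purification of the whole protocol state and shows that, conditioned on Bob's classical record $c_i$, everything Bob holds together with $W^i$ lives in a fixed subspace of dimension at most $Kd^i$, regardless of $s$. For the base case $i=0$, before any channel use $Q_0$ consists of Bob's share $E_B$ of the initial pure state of Schmidt rank at most $K$, plus whatever Bob does conditioned on $C_0$. Alice's operations on $S$ and $E_A$ do not touch $E_B$, so the reduced state of $E_B$ is the same for every $s$ and is supported on the span of the Schmidt vectors, of dimension at most $K$.

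Bob's subsequent operations, conditioned on the classical record, are modeled as isometries followed by measurements whose outcomes enter $C_0$ through the classical messages. An isometry $U$ maps a subspace $\mathcal K$ onto $U\mathcal K$ of the same dimension. A measurement outcome with Kraus operator $M$, for instance the projector after a Naimark dilation, maps $\mathcal K$ into $M\mathcal K$, whose dimension is at most $\dim\mathcal K$. Since these maps depend only on the classical record and not on $s$, the images remain $s$-independent. I also need that conditioning on Alice's messages, which do depend on $s$, does not change Bob's support: conditioning on a classical outcome of Alice's side only reweights or restricts Bob's reduced state. Concretely, Bob's reduced state given $(s,c)$ is a convex combination of states obtained from $E_B$-side pure components. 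I would make this precise by writing the global pure state $\sum_j \sqrt{\mu_j}|a_j\rangle|b_j\rangle$ with at most $K$ terms and noting that any operation on Alice's side yields a Bob-side state supported in $\operatorname{span}\{|b_j\rangle\}$.

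For the inductive step, assume $\operatorname{supp}\rho^{Q_{i-1}}_{s,c_{i-1}}\subseteq\mathcal K_{c_{i-1}}$ with $\dim\mathcal K_{c_{i-1}}\le Kd^{i-1}$. At use $i$, Alice sends a $d$-dimensional $A_i$. Consider the joint state of Alice's side and $Q_{i-1}$; it is supported in $\mathcal H_{\text{Alice}}\otimes\mathcal K_{c_{i-1}}$. After $A_i$ leaves Alice, the state of $A_iQ_{i-1}$ conditioned on any Alice-side classical data is supported in $\mathcal H_{A_i}\otimes\mathcal K_{c_{i-1}}$, which has dimension at most $d\cdot Kd^{i-1}$. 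This uses the lemma that if $\rho^{XY}$ has $\operatorname{supp}\rho^Y\subseteq\mathcal K$, then $\operatorname{supp}\rho^{XY}\subseteq\mathcal H_X\otimes\mathcal K$. Applying $V_{\mathcal N}$ gives a state on $B_iW_iQ_{i-1}$ supported in $(V_{\mathcal N}\otimes I)(\mathcal H_{A_i}\otimes\mathcal K_{c_{i-1}})$, of the same dimension. Bob's operations and measurements producing the outcomes in $c_i$, together with Alice's messages, which only condition, then map this into an $s$-independent subspace of no larger dimension, as in the base case. The resulting subspace $\mathcal K_{c_i}$ depends only on $c_i$.

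The main obstacle is to handle rigorously the conditioning on classical messages sent by Alice, which depend on $s$ and on Alice's private systems. The key fact is that conditioning on measurement outcomes performed on systems outside $Q$, even when they are correlated with $Q$, yields a state of $Q$ whose support lies in the support of the unconditioned reduced state, via $p\,\rho_{\text{cond}}\le\rho$. I would state this as a small sublemma. A second subtle point is that the unconditioned reduced state already depends on $s$; the induction hypothesis therefore has to be applied to every $s$ simultaneously with a common subspace, which is exactly what it provides.
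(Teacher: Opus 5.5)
Your proposal is correct and follows essentially the same route as the paper. It uses induction on $i$ with the base case from the Schmidt rank, and the bound $\dim(\mathcal H_{A_i}\otimes\mathcal K_{c_{i-1}})\leq Kd^i$, which is preserved by $V_{\mathcal N}$. Alice's messages are handled by $p\,\rho_{\mathrm{cond}}\preceq\rho$, which is the paper's $0\preceq\sigma_{s,m}\preceq\rho_s$, and Bob's processing for each outcome is a single linear map. Your isometry-then-projector (Naimark) model is the paper's $V_m|\phi\rangle=\sum_\lambda|\lambda\rangle\otimes M_{m,\lambda}|\phi\rangle$ in another form. It is valid precisely because $Q_i$ retains the auxiliary systems Bob would otherwise discard, so no outcome carries several Kraus operators whose images could enlarge the support.
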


\begin{proof}
We first show that classical communication does not increase the dimension
of a subspace containing the states of $Q_i$. Suppose that, before a
message is sent, all these states are supported on a subspace $\mathcal K$ of dimension
at most $D$. If Alice sends a message $M$, let $\sigma_{s,m}^{Q_i}$ be the subnormalized state of
$Q_i$ conditioned  on $M=m$. Because Alice produces $M$ by acting only on
registers outside $Q_i$, summing over all possible message values gives
$\sum_m\sigma_{s,m}^{Q_i}=\rho_s^{Q_i}$. Consequently,
$0\preceq\sigma_{s,m}^{Q_i}\preceq\rho_s^{Q_i}$, which implies
$\supp(\sigma_{s,m}^{Q_i})\subseteq\supp(\rho_s^{Q_i})
\subseteq\mathcal K$. If Bob sends a message $M$, then, for each possible value $m$, his local
processing can be implemented by the linear map
$V_m|\phi\rangle\triangleq\sum_{\lambda}|\lambda\rangle\otimes
M_{m,\lambda}|\phi\rangle$ so that, if the state of $Q_i$ is supported on
$\mathcal K$ before the message is sent, then, conditioned on $M=m$, it is
supported on $V_m(\mathcal K)$ with
$\dim V_m(\mathcal K)\leq\dim\mathcal K$. Applying this argument to each message shows that classical
communication does not increase the dimension bound.

We now proceed by induction over $i$. Before the first channel use, Bob's share of
the initial state is supported on a subspace of dimension at most $K$
because the state has Schmidt rank at most $K$. Adding fixed ancillas,
applying Bob's operations, and exchanging the initial classical messages
do not increase this bound. Thus, the claim holds for $i=0$.

Suppose that the claim holds after $i-1$ channel uses, and fix
$C_{i-1}=c_{i-1}$. For every $s$, the state of $Q_{i-1}$ is supported on
$\mathcal K_{c_{i-1}}$. Since $\dim\mathcal H_{A_i}=d$, the joint state of
$A_iQ_{i-1}$ is supported on
$\mathcal H_{A_i}\otimes\mathcal K_{c_{i-1}}$, whose dimension satisfies $\dim(\mathcal H_{A_i}\otimes\mathcal K_{c_{i-1}})
=
d\,\dim\mathcal K_{c_{i-1}}
\leq
d\,Kd^{i-1}
=
Kd^i.$  The channel isometry $V_{\mathcal N}:A_i\to B_iW_i$ maps this subspace to
one of the same dimension. Bob's subsequent operations and the following
classical messages cannot increase this dimension. Therefore, for every
resulting value $c_i$, there is a subspace $\mathcal K_{c_i}$ of dimension
at most $Kd^i$ containing the support of $\rho_{s,c_i}^{Q_i}$ for every
$s$. 
\end{proof}

Fix an $(n,k,K,\alpha,\beta,\gamma)$ interactive protocol, and use $S$,
$V$, and $Q_n$ from Lemmas \ref{lem:interactive_information_reduction}
and \ref{lem:interactive_support_dimension}. Then, we
have
\begin{align}
I(S;W^n|V)
&=
I(S;W^n|C_{\mathrm{com}}B_{\mathrm{aux}})
\notag\\
&\overset{(a)}{\leq}
I(S;B_{\mathrm{aux}}W^n|C_{\mathrm{com}})
\notag\\
&\overset{(b)}{\leq}
I(S;Q_n|C_{\mathrm{com}})
\notag\\
&=
\sum_cP_{C_{\mathrm{com}}}(c)
I(S;Q_n|C_{\mathrm{com}}=c)
\notag\\
&\overset{(c)}{\leq}
\sum_cP_{C_{\mathrm{com}}}(c)
H(Q_n|C_{\mathrm{com}}=c)
\notag\\
&\overset{(d)}{\leq}
\sum_cP_{C_{\mathrm{com}}}(c)
\log_2\dim\mathcal K_c
\notag\\
&\overset{(e)}{\leq}
n\log_2d+\log_2K,
\label{eq:interactive_environment_dimension_bound}
\end{align}
where $(a)$ holds by the chain rule and the nonnegativity of conditional
mutual information, $(b)$ follows from data processing, e.g., \cite[Eq.~(11.207)]{wilde2013quantum},  because $B_{\mathrm{aux}}W^n$ is obtained
from $Q_n$ by discarding systems,  $(c)$ holds
because $S$ is classical and quantum entropy is nonnegative, $(d)$ holds because, conditioned on
$C_{\mathrm{com}}=c$, the state of $Q_n$ is supported on
$\mathcal K_c$ by Lemma
\ref{lem:interactive_support_dimension}, and therefore
$H(Q_n|C_{\mathrm{com}}=c)\leq\log_2\dim\mathcal K_c$ by
\cite[Prop.~11.1.3]{wilde2013quantum}, $(e)$ holds by Lemma
\ref{lem:interactive_support_dimension}.

Finally, combining Lemma \ref{lem:interactive_information_reduction} and
\eqref{eq:interactive_environment_dimension_bound} yields  \eqref{eq:universal_interactive_dimension_converse}.

\section{Proof of Theorem \ref{thm:interactive_erasure_capacity}} \label{converas}
Achievability follows from Corollary \ref{cor:qudit_erasure_capacity}. 
For the converse,  Theorem
\ref{thm:universal_interactive_dimension_converse} gives the bound
$\log_2d+E$, hence, it only remains to prove the bound $2\epsilon\log_2d$. Fix an $(n,k,K,\alpha,\beta,\gamma)$ interactive
protocol, and use $S$ and $V$ from Lemma~\ref{lem:interactive_information_reduction}. Let $G_i=1$ when the $i$th
input is erased and $G_i=0$ otherwise. Bob can determine $G^n$ from the
channel outputs, so we include it in his view and write $V=G^nV'$. Then, we have
\begin{align}
I(S;W^n|V)
&=
I(S;W^n|G^nV')
\notag\\
&\overset{(a)}{=}
\sum_{g^n}
P_{G^n}(g^n)
I(S;W^n|V',G^n=g^n)
\notag\\
&=
\sum_{g^n}
P_{G^n}(g^n)
\left[
H(W^n|V',G^n=g^n)
-
H(W^n|SV',G^n=g^n)
\right]
\notag\\
&\overset{(b)}{\leq}
\sum_{g^n}
P_{G^n}(g^n)
\left[
\left(\sum_{i=1}^n g_i\right)\log_2d
-
\left(-\left(\sum_{i=1}^n g_i\right)\log_2d\right)
\right]
\notag\\
&=
2\sum_{g^n}
P_{G^n}(g^n)
\left(\sum_{i=1}^n g_i\right)\log_2d
\notag\\
&=
2\sum_{i=1}^n
\E[G_i]\log_2d
\notag\\
&\overset{(c)}{=}
2\epsilon n\log_2d,
\label{eq:interactive_erasure_environment_bound}
\end{align}
where $(a)$ holds because $G^n$ is classical,  $(b)$ holds because the isometric
extension of $\mathcal E_{\epsilon,d}^{\otimes n}$ shows that, conditioned on $G^n=g^n$, the state of $W^n$ is
supported on a space of dimension $d^{\sum_{i=1}^n g_i}$, hence,
\cite[Th.~11.5.1]{wilde2013quantum} bounds each conditional entropy in
absolute value by $\left(\sum_{i=1}^n g_i\right)\log_2d$, $(c)$
holds because $\E[G_i]=\epsilon$ for every channel use.

Finally, combining Lemma
\ref{lem:interactive_information_reduction} and
\eqref{eq:interactive_erasure_environment_bound} yields
$\limsup_{n\to\infty}k_n/n\leq2\epsilon\log_2d$. 
\bibliographystyle{ieeetr}
\bibliography{bib}

\end{document}